\DeclareMathOperator{\tr}{Tr}
\DeclareMathOperator{\id}{\mathds{1}}
\DeclareMathOperator{\supp}{supp}
\DeclareMathOperator{\herm}{Herm}
\newcommand*{\KBra}[2]{|#1\rangle\!\langle#2|}
\newcommand*{\mf}{\mathfrak{F}}
\newcommand*{\mI}{\mathcal{I}}
\newcommand*{\mM}{\mathcal{M}}
\newcommand*{\mN}{\mathcal{N}}
\newcommand*{\mS}{\mathcal{S}}
\newcommand*{\sep}{\mathord{:}}
\newcounter{theorems}
\newtheorem{theorem}[theorems]{Theorem}
\newtheorem{proposition}[theorems]{Proposition}
\newtheorem{corollary}[theorems]{Corollary}
\theoremstyle{definition}
\newtheorem{definition}[theorems]{Definition}
\theoremstyle{remark}
\newtheorem{remark}{Remark}
\begin{document}

\title{Choi-Defined Resource Theories}
\author{Elia Zanoni}
\email{elia.zanoni@ucalgary.ca}
\affiliation{Department of Mathematics and Statistics, University of Calgary, Calgary, AB T2N 1N4, Canada}
\affiliation{Institute for Quantum Science and Technology, University of Calgary, Calgary, AB T2N 1N4, Canada}
\author{Carlo Maria Scandolo}
\email{carlomaria.scandolo@ucalgary.ca}
\affiliation{Department of Mathematics and Statistics, University of Calgary, Calgary, AB T2N 1N4, Canada}
\affiliation{Institute for Quantum Science and Technology, University of Calgary, Calgary, AB T2N 1N4, Canada}

\date{\today}

\begin{abstract}
	Many resource theories share an interesting property: An operation is free if and only if its renormalized Choi matrix is a free state. In this article, we refer to resource theories exhibiting this property as Choi-defined resource theories. We demonstrate how and under what conditions one can construct a Choi-defined resource theory, and we prove that when such a construction is possible, the free operations are all and only the completely resource-non-generating operations. Moreover, we examine resource measures, a complete family of monotones, and conversion distances in such resource theories.
\end{abstract}

\maketitle
\section{Introduction}
The recent booming interest in quantum computing, communication, and information originates from the understanding that quantumness allows us to outperform the classical world, which means that quantum objects are resources. The new technological developments, which are now becoming commercially available, demand a rigorous and thorough understanding of quantum resources to quantify them precisely and optimize their use. This is accomplished with the framework of resource theories~\cite{CG19, CFS16, BG15, HO13a, BBPS96, BdVSW96, VPRK97, Vid00, PV07, HHHH09, PS02, BRS07, CGMP12, MS13, JWZGB00, BHORS13, HO13, BHNOW15, MO17, SdRSFO20, CS17,SOF17,  Los19, GMNSYH15, Abe06, MS16, BCP14,VFGE12, VMGE14, HC17, ADGS18, SC19, WKRSXLGS21, WKRSXLGS21a, HG18, Gou24}. Resource theories are based on the simple observation that agents in a laboratory can only perform a subset of all possible deterministic quantum operations due to the various constraints on the agents and the systems involved. For instance, when two agents, Alice and Bob, are physically separated, they may not have a reliable means to exchange quantum systems, limiting them to local operations and classical communication (LOCC)~\cite{BBPS96, BdVSW96, VPRK97}. Such allowed operations are called \emph{free}.

While free states are typically defined as those states that can be prepared with free operations, there are situations where it is beneficial to construct free operations starting from a set of free states. For example, Alice and Bob may quickly realize that they can create all and only separable states and wonder what free operations are compatible with them. There are different ways to construct free operations starting from free states~\cite{CG19}. However, the minimal condition is that one cannot generate a resource by applying a free operation on a free state in a complete sense, i.e., even if applied only to part of a bipartite state. Operations defined with such a construction are known as completely resource non-generating (CRNG) operations \cite{CG19}. CRNG operations are of particular interest because they are the largest set of free operations compatible with a given set of free states, and they can be used to derive lower or upper bounds in any resource theory with the same set of free states. For example, separable and positive partial transpose operations have been used to overcome the challenging characterization of LOCC operations~\cite{Rai98, Cir01, HN03, Per96, HHH96, HHH98, Rai99b, APE03, IP05, MW08, GS20}. These extensions of the resource theory of entanglement are known as the resource theories of separable entanglement (SEP) and non-positive partial transpose entanglement (NPT), respectively. In other settings, such as in the resource theory of magic states~\cite{VFGE12, VMGE14, HC17, ADGS18, SC19}, imaginarity~\cite{WKRSXLGS21, WKRSXLGS21a, HG18}, and non-negativity of quantum amplitudes~\cite{JS22}, the CRNG operations are even more relevant because they coincide with the free operations.

Remarkably, one notices that in each of these instances of resource theories, a channel is considered free if and only if its renormalized Choi matrix is a free state. We call resource theories that exhibit this property \emph{Choi-defined} resource theories. This is a valuable property: Every problem associated with free quantum channels can be converted into a problem involving free states.

Given the significance of Choi-defined resource theories and their computational convenience, one wonders when it is possible to construct such resource theories from a set of free states. In this article, we answer this question by providing necessary and sufficient conditions on the set of free states. Moreover, we show that when such a construction is possible, the resulting set of free operations coincides with the set of CRNG operations associated with those free states. These results provide a constructive definition of CRNG operations and generalize what has already been observed in the resource theories of magic states, imaginarity, NPT,  SEP, and non-negativity of quantum amplitudes. Next, we introduce resource measures, a complete family of monotones, and conversion distances in CDRTs, which can all be computed with conic linear programs (CLP)~\cite{Gou24, Bar02} whenever the set of free states is convex and closed, or even with semidefinite programs (SDP)~\cite{SC23}.

\section{Preliminaries}\label{sec:prelim}
A quantum resource theory~\cite{CG19, CFS16} provides a partition of all quantum channels into free and resourceful ones, with the following properties:
\begin{enumerate}
	\item the identity channel is free,
	\item the swap channel is free,
	\item discarding a system is free,
	\item sequential composition of free channels is free,
	\item parallel composition of free channels is free.
\end{enumerate}
In other words, an agent is allowed to do nothing on a system, to change the order in which they describe systems, and to discard a system. In addition, if they perform free operations in sequence (Figure~\ref{fig:seq}) or in parallel, the resulting operation is still free. As mentioned in the introduction, a state is free if it can be prepared with free operations.  Moreover, if one applies a free channel to a free state, one obtains a free state, meaning that obtaining a resource from free objects is impossible. This is sometimes called the golden rule of resource theories~\cite{CG19}. Consequently, every set of free states satisfies the following properties: It is closed under tensor product, partial tracing, and system swapping.
\begin{figure}[htp]
	\centering
	\includegraphics[width=\columnwidth]{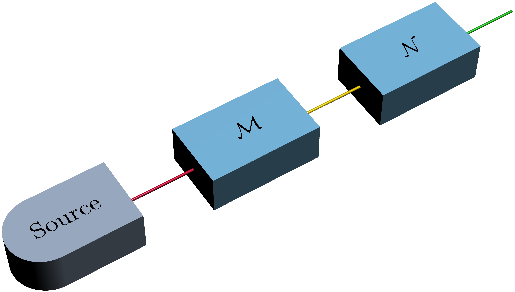}
	\caption{Sequential composition of operations. A source emits a quantum system, e.g., a photon (colored in red). A first operation $\mathcal{M}$ is performed on this quantum system and produces a new quantum system as output. A second operation $\mathcal{N}$ is performed on the new quantum system. The overall operation done on the first (red) quantum system is the sequential composition of $\mathcal{M}$ and $\mathcal{N}$, denoted with $\mathcal{N} \circ \mathcal{M}$. The sequential composition is translated into the link product when dealing with Choi matrices.} \label{fig:seq}
\end{figure}

\begin{remark}\label{rem:swap}
	One of the constraints of the resource theory of bipartite entanglement is that the two physically separated agents cannot exchange quantum systems. This may appear to conflict with the condition that the swap channel is a free operation. However, in this scenario, the systems in question are bipartite systems~\cite{CFS16}, i.e., $S_1 \coloneqq A_1B_1$ and $S_2\coloneqq A_2B_2$. The physical separation is, as usual, between systems labeled with $A$ and systems labeled with $B$. The swap channel only swaps the order of the two bipartite systems $S_1$ and $S_2$, i.e., $A_1B_1A_2B_2 \leftrightarrow A_2B_2A_1B_1$. If we organize the systems according to the physical separation, the action of the swap channel is $(A_1A_2)(B_1B_2) \to (A_2A_1)(B_2B_1)$. Therefore, the two agents are \emph{locally} exchanging the order of their own quantum systems, and there is no exchange of quantum systems between them.
\end{remark}

Free operations are mathematically described by quantum channels. Using the Choi isomorphism, a quantum channel $\mM_{A\to B}$ can be mapped to a bipartite matrix $M_{B \sep A'}$. The matrix $M_{B \sep A'}$ is known as the Choi-matrix of $\mM_{A \to B}$ and it is defined as
\begin{equation}\label{eq:def-choi-m}
	M_{B \sep A'} = \mM_{A \to B} \otimes \mI_{A'} (\Phi_{A A'}),
\end{equation}
where $\mI_A$ is the identity channel, $\Phi_{A A'} = \sum_{x, y} \KBra{x}{y}_{A} \otimes \KBra{x}{y}_{A'}$ is the \emph{unnormalized} Choi state on $A A'$, and $\Set{\Ket{x}_A}$ is an orthonormal basis for $A$. Here, we use the notation `$\sep$' to keep track of input and output. Moreover, $M_{B\sep A'}$ is positive semidefinite and $\tr_B M_{B\sep A'} = \id_{A'}$~\cite{CdAP09, BCdAP16}. From $M_{B\sep A'}$, one reconstructs the action of the quantum channel $\mM_{A \to B}$ on a state $\rho_A$ with the inverse Choi isomorphism:
\begin{equation}\label{eq:choi-inv}
	\begin{aligned}
		\mM_{A \to B}(\rho_A) & = \tr_{A'A}[(M_{B\sep A'} \otimes \rho_A)(\id_B \otimes \Phi_{A' A})] \\
		                      & = \tr_A[M_{B \sep A}(\id_B \otimes \rho_A^T)].
	\end{aligned}
\end{equation}
Note that here, one needs to know which system is the input of the original channel and which is the output.  If such information is missing, a matrix $M_{BA}$ could be associated with different linear maps, as shown in the next section.

Observe that, if we divide the Choi matrix $M_{B \sep A'}$ of the channel $\mM_{A \to B}$ by the dimension $d_A$ of $A$, we obtain a quantum state $\mu_{B\sep A'}$ such that
\begin{equation}\label{eq:rcm}
	\tr_B\mu_{B \sep A'} = \frac{1}{d_A} \id_{A'}.
\end{equation}
Consequently, a bipartite state is the \emph{renormalized Choi matrix} of a quantum channel if it satisfies the condition in Eq.~\eqref{eq:rcm}.

We point out that, Eq.~\eqref{eq:def-choi-m}, the choice of tensoring $\mathcal{M}_{A \to B}$ with the identity channel on the right is arbitrary. One can tensor on the left and define the Choi matrix of a channel $\mathcal{M}_{A \to B}$ as:
\begin{equation}
	M_{A'\sep B} = (\mI_{A'} \otimes \mM_{A \to B})(\Phi_{A'A}).
\end{equation}
In this paper, we choose Eq.~\eqref{eq:def-choi-m} as the definition of the Choi matrix of the channel $\mathcal{M}_{A \to B}$, but the treatment can be adapted to the other definition with minimal effort.

\section{Ambiguity of the Choi matrix}
In this section, we show that one needs to be careful with Choi matrices, as multiple linear maps may have the same Choi matrix. This is the key motivation for using rigorous formalism and notation for Choi matrices. Consider the matrix
\begin{equation}
	M= \frac{1}{5}\begin{pmatrix}
		1 & 2 & 0  & 0  \\
		2 & 4 & 0  & 0  \\
		0 & 0 & 4  & -2 \\
		0 & 0 & -2 & 1
	\end{pmatrix}.
\end{equation}
It is Hermitian and positive semidefinite, therefore it could be the Choi matrix of a quantum map (CP linear operator)~\cite{CdAP09}. However, if we have no information about the bipartition of such a matrix, we can find different quantum maps with $M$ as their Choi matrix. $M$ could be the Choi matrix of the quantum map that prepares the supernormalized state $M$. Another easy choice is to consider $M$ as the Choi matrix of the effect $M^T$, that acts on a $4 \times 4$ matrix $A$ as $\tr(M^T A)$.

The last choice is to consider $M$ as the Choi matrix of a quantum map from a two-dimensional system to a two-dimensional system.  In this case, there is some ambiguity as well. Let $A$ and $B$ be two-dimensional complex Hilbert spaces. We observe that we can write $M$ as a linear combination of elementary tensors in $A \otimes B$:
\begin{equation}
	M_{AB}= \frac{1}{5} \KBra{0}{0}_A \otimes \begin{pmatrix}
		1 & 2 \\
		2 & 4
	\end{pmatrix}_B + \frac{1}{5} \KBra{1}{1}_A \otimes \begin{pmatrix}
		4  & -2 \\
		-2 & 1
	\end{pmatrix}_B.
\end{equation}

In the previous section, we have presented two conventions for the definition of the Choi matrix, i.e., $M_{A \sep B} = ( \mM_{B' \to A}  \otimes \mI_{B} )(\Phi_{B'B})$ and $M_{A \sep B} = (\mI_{A} \otimes \mM_{A' \to B})(\Phi_{A A'})$ (we relabeled the systems to match with $M_{AB}$). Unsurprisingly, using different conventions, one finds different quantum maps associated with $M_{AB}$. If we consider the former, which is the one that we use in this article, we obtain the map that acts on $\KBra{0}{0}_{B'}$ as
\begin{equation}
	\mathcal{M}_{B' \to A}(\KBra{0}{0}_{B'}) = \frac{1}{5}\begin{pmatrix}
		1 & 0 \\
		0 & 4
	\end{pmatrix}_A.
\end{equation}
While with the latter, $\KBra{0}{0}_{A'}$ is mapped into
\begin{equation}
	\mathcal{M}_{A' \to B}(\KBra{0}{0}_{A'}) = \frac{1}{5}\begin{pmatrix}
		1 & 2 \\
		2 & 4
	\end{pmatrix}_B.
\end{equation}

Sometimes, if we know that a Choi matrix is the Choi matrix of a quantum channel, we can resolve this ambiguity. Indeed, the trace of the Choi matrix of a quantum channel is equal to the dimension of the input system. Therefore, since $\tr M =2$, we can rule out the first two quantum maps associated with $M$, so we must have the qubit-to-qubit map because it satisfies this condition. Another thing to consider is that if $M_{AB}$ is the Choi matrix of a quantum channel, then the marginal on the input system is the identity matrix. Consequently, we can check if either $\tr_A M_{AB}=\id_B$ or $\tr_B M_{AB} = \id_A$ to solve the ambiguity. We obtain
\begin{equation}
	\begin{aligned}
		\tr_A M & =\frac{1}{5}  \begin{pmatrix}
			                        1 & 2 \\
			                        2 & 4
		                        \end{pmatrix}_B + \frac{1}{5}  \begin{pmatrix}
			                                                       4  & -2 \\
			                                                       -2 & 1
		                                                       \end{pmatrix}_B =  \begin{pmatrix}
			                                                                          1 & 0 \\
			                                                                          0 & 1
		                                                                          \end{pmatrix}_B, \\
		\tr_B M & = \KBra{0}{0}_A + \KBra{1}{1}_A =\begin{pmatrix}
			                                           1 & 0 \\
			                                           0 & 1
		                                           \end{pmatrix}_A .
	\end{aligned}
\end{equation}
This test is inconclusive for $M$ since both marginals are the identity matrix. Therefore, even if we know that we have the Choi matrix of a quantum channel, we still have the ambiguity about the convention used to compute it.

This shows that much information is needed to reconstruct a quantum channel from its Choi matrix. It is not enough to know just the entries of a Choi matrix; one needs to know the dimensions of the input and output systems and their order in the Choi matrix. While the dimensions of the systems can always be deduced when dealing with quantum channels, the same cannot be said for the order of the systems. If one is fortunate, one can deduce it by computing partial traces. However, this is not always the case, as demonstrated for $M$.

We point out that if we consider a \emph{renormalized} Choi matrix instead of working with a Choi matrix, we lose even the information about the dimension of the input system, because all renormalized Choi matrices have trace one. For example, let $\mu_{ABC}$ be a normalized quantum state such that $\tr_A \mu_{ABC} = \frac{1}{d_Bd_C}\id_{BC}$, then also $\tr_{AB}\mu_{ABC} = \frac{1}{d_C} \id_C$, which implies that both
\begin{equation}
	\begin{aligned}
		 & \mathcal{M}_{BC \to A}(\rho_{BC}) =                                                     \\
		 & \qquad\tr_{BCB'C'}[(d_Bd_C\mu_{ABC} \otimes \rho_{B'C'})(\id_A \otimes \Phi_{BCB'C'})],
	\end{aligned}
\end{equation}
and
\begin{equation}
	\begin{aligned}
		\mathcal{M}_{C \to AB}(\rho_C) =\tr_{CC'}[(d_C\mu_{ABC} \otimes \rho_{C'})(\id_{AB} \otimes \Phi_{CC'})]
	\end{aligned}
\end{equation}
are quantum channels associated with $\mu_{ABC}$.

\section{Choi-defined resource theories}
As noted in the introduction, the resource theory of magic states, imaginarity, SEP, and NPT share an interesting property: A channel is free if and only if its renormalized Choi matrix is a free state. Now we generalize such resource theories by introducing the construction of Choi-defined operations and Choi-defined resource theories.

\begin{definition}
	The \emph{Choi-defined} (CD) operations associated with a set of free states are all and only the quantum channels such that their renormalized Choi matrix is a free state.
\end{definition}
With this definition, we are ready to define the Choi-defined resource theories.
\begin{definition}\label{def:cdrt}
	A quantum resource theory is a \emph{Choi-defined resource theory} (CDRT) if its free operations coincide with the Choi-defined operations associated with its set of free states.
\end{definition}
Not all quantum resource theories are CDRTs. For example, in the resource theory of athermality~\cite{BHORS13, HO13}, there is only one free state per system and several free quantum channels. Therefore, there is no one-to-one correspondence between free states and free channels. A more interesting example is the resource theory of entanglement. In this case, the free states are the separable states,  but the construction of CD operations produces \emph{all} separable operations, which are a larger set than LOCC~\cite{BdiVFMRSSW99, CCL12}. As a result, the resource theory of LOCC entanglement is not a CDRT.

It is a natural question to characterize when, given a set of free states, the construction of a CDRT is allowed. In every well-defined resource theory, the identity is a free operation. As a consequence, its renormalized Choi state $\frac{1}{d_{A}}\Phi_{AA'}$ is free in every CDRT.

\begin{remark}\label{rem:choi-not-ent}
	The renormalized Choi state is \emph{not} the maximally entangled state in the resource theories of entanglement, SEP entanglement, or NPT entanglement. Indeed, in these resource theories, each system is a pair of physical systems~\cite{CFS16}, i.e., $S \coloneqq AB$, and $S' \coloneqq A'B'$, and the spatial separation is between systems $AA'$ and $BB'$. Let $\Set{\Ket{a}_{A}}$ and $\Set{\Ket{b}_{B}}$ be orthogonal basis for $A$ and $B$, respectively, then
	\begin{equation}
		\frac{1}{d_S}\Phi_{S S'} = \frac{1}{d_A d_B}\sum_{a, b, \tilde{a}, \tilde{b}} \KBra{ab}{\tilde{a}\tilde{b}}_{AB} \otimes \KBra{ab}{\tilde{a}\tilde{b}}_{A'B'}.
	\end{equation}
	If the systems are reorganized according to the spatial separation, the renormalized Choi state becomes
	\begin{equation}
		\begin{aligned}
			       & \frac{1}{d_A d_B}\sum_{a, \tilde{a}} \KBra{aa}{\tilde{a}\tilde{a}}_{AA'}\otimes \sum_{b, \tilde{b}} \KBra{bb}{\tilde{b}\tilde{b}}_{BB'} \\
			\qquad & = \frac{1}{d_A}\Phi_{AA'} \otimes \frac{1}{d_B}\Phi_{BB'}.
		\end{aligned}
	\end{equation}
	This shows that the state $\frac{1}{d_{S}}\Phi_{SS'}$ is separable with respect to the spatial separation between $AA'$ and $BB'$. Therefore, it is free in the resource theories mentioned above, so it is \emph{not} the maximally entangled state on $(AA')(BB')$.
\end{remark}

When we require that resource theories be closed under the sequential composition of free channels (Figure~\ref{fig:seq}), we find a less trivial condition for the set of free states. In Ref.~\cite{CdAP09, BCdAP16}, the authors introduced an operation between bipartite matrices, called link product, defined as
\begin{equation}
	\begin{aligned}
		 & N_{C\sep B'} * M_{B \sep A'}                                                                  \\
		 & = \tr_{BB'}[(N_{C\sep B'} \otimes M_{B \sep A'} )(\id_C \otimes \Phi_{B'B} \otimes \id_{A'})] \\
		 & = \tr_B[(N_{C\sep B} \otimes \id_{A'})(\id_C \otimes M_{B \sep A}^{T_B})],
	\end{aligned}
\end{equation}
where $\cdot^{T_B}$ is the partial transpose on system $B$. This operation translates the sequential composition of channels into an operation between Choi matrices. Thanks to the link product, we can restate the golden rule of resource theories as a condition on the set of free states. That is, the condition that $\mM_{A \to B}(\rho_A)$ be free whenever $\mM_{A \to B}$ and $\rho_A$ are free becomes that $d_A \mu_{B \sep A'} * \rho_{A }$ be a free state if $\rho_A$ and $\mu_{B\sep A'}$ are free states and $\mu_{B\sep A'}$ is the renormalized Choi matrix of a quantum channel.

We now state our first main result.
\begin{theorem}\label{th:cdrt}
	It is possible to construct a CDRT associated with a set of free states if and only if for all systems $A$ and $B$
	\begin{enumerate}
		\item \label{enum:choi} the state $\frac{1}{d_{A}}\Phi_{AA'}$ is free,
		\item \label{enum:seq} if $\rho_A$ and $\mu_{B \sep A'}$ are free states, and $\mu_{B \sep A'}$ is the renormalized Choi matrix of a quantum channel, then $d_A \mu_{B \sep A'} * \rho_{A}$ is a free state.
	\end{enumerate}
\end{theorem}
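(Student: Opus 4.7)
The argument splits into the two implications. For necessity, if a CDRT exists, then $\mI_A$ is free by the resource-theoretic axioms, and so, by the defining property of a CDRT, its renormalized Choi matrix $\frac{1}{d_A}\Phi_{AA'}$ is free, which is condition~\ref{enum:choi}. For condition~\ref{enum:seq}, if $\rho_A$ and $\mu_{B\sep A'}$ are free with $\mu_{B\sep A'}$ the renormalized Choi matrix of a channel $\mM_{A\to B}$, then by the CDRT definition $\mM_{A\to B}$ is free, and the golden rule forces $\mM_{A\to B}(\rho_A)$ to be free. Eq.~\eqref{eq:choi-inv} together with the definition of the link product identifies this output with $d_A\,\mu_{B\sep A'}*\rho_A$, giving~\ref{enum:seq}.

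For sufficiency, I would take a set of free states enjoying the natural closure properties recalled in Sec.~\ref{sec:prelim} (tensor product, partial trace, and system swap) together with~\ref{enum:choi} and~\ref{enum:seq}, declare the free operations to be the CD operations, and verify the five resource-theoretic axioms in turn. The identity is free by~\ref{enum:choi}. The swap channel with input $AB$ and output $BA$ has renormalized Choi equal, up to a reordering of systems, to $\frac{1}{d_A}\Phi_{AA'}\otimes\frac{1}{d_B}\Phi_{BB'}$, free by~\ref{enum:choi} combined with tensor and swap closure. The discard channel has renormalized Choi $\frac{1}{d_A}\id_{A'}=\tr_A\bigl(\frac{1}{d_A}\Phi_{AA'}\bigr)$, free by~\ref{enum:choi} and partial-trace closure. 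Parallel composition is immediate because the renormalized Choi of $\mM\otimes\mN$ factorizes as $\mu\otimes\nu$, free by tensor closure.

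Sequential composition is the substantive step. Given free $\mM_{A\to B}$ and $\mN_{B\to C}$ with renormalized Choi matrices $\mu_{B\sep A'}$ and $\nu_{C\sep B'}$, I would compute that the renormalized Choi of $\mN\circ\mM$ equals $(\mN_{B\to C}\otimes\mI_{A'})(\mu_{BA'})=d_B\,\nu_{C\sep B'}*\mu_{B\sep A'}$. To apply~\ref{enum:seq} here, I would first use the parallel-composition step just proved to conclude that $\mN\otimes\mI_{A'}$ is a CD operation (its renormalized Choi $\nu_{C\sep B'}\otimes\frac{1}{d_{A'}}\Phi_{A'A''}$ is free by~\ref{enum:choi} and tensor closure), and then regard $\mu_{BA'}$ as a free input state on the composite system $BA'$. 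Condition~\ref{enum:seq} then declares the output free, so $\mN\circ\mM$ is a CD operation.

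The conceptual obstacle is precisely this bootstrap: condition~\ref{enum:seq} is formulated for a single-system free input, while sequential composition of channels corresponds in Choi language to applying a channel to part of a bipartite state, namely another Choi matrix. Handling it requires first leveraging~\ref{enum:choi} and tensor closure to establish parallel composition, and then recycling that result to recast sequential composition as a legitimate instance of~\ref{enum:seq} with a composite input system.
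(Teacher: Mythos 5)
Your proposal is correct and follows essentially the same route as the paper: necessity is argued identically, and sufficiency is proved by verifying the five axioms from the closure properties together with conditions~\ref{enum:choi} and~\ref{enum:seq}, including the same key trick for sequential composition of tensoring $\mN_{B\to C}$ with $\mI_{A'}$ so that condition~\ref{enum:seq} applies to the composite input $BA'$ (the paper implements this by explicitly constructing the free state $\tilde\nu_{CA'\sep B'A}$, which is exactly the renormalized Choi matrix of $\mN_{B \to C}\otimes\mI_{A'}$ that you obtain from your parallel-composition step). One small imprecision: the renormalized Choi matrix of $\mM_{A\to B}\otimes\mN_{C\to D}$ is not literally $\mu_{BA'}\otimes\nu_{DC'}$ but its swap tensor product $(\mI_{B}\otimes\mS_{A'D\to DA'}\otimes\mI_{C'})(\mu_{BA'}\otimes\nu_{DC'})$ — a system reordering the paper is careful to track — so freeness of that state (and likewise of the Choi matrix of $\mN\otimes\mI_{A'}$) requires closure under system swapping in addition to tensor closure; since you assume swap closure among the minimal properties, this is repaired in one line and does not affect the validity of your argument.
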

\begin{proof}
	First, we prove the necessary conditions. Assume that the free states and the free operations form a well-defined CDRT.
	\begin{enumerate}
		\item The identity channel is a free operation in every resource theory. Its Choi matrix is $(\mI_A \otimes \mI_{A'})(\Phi_{A A'}) = \Phi_{A A'}$, therefore the renormalized Choi matrix $\frac{1}{d_A}\Phi_{A A'}$ is a free state.
		\item Let $\mM_{A \to B}$ be the CD operation associated with $\mu_{B \sep A'}$. Then, by Definition~\ref{def:cdrt}, $\mM_{A \to B}$ is free. Therefore, $d_A \mu_{B \sep A'} * \rho_{A} = \mM_{A \to B} (\rho_A)$ is free, because both $\mM_{A \to B}$ and $\rho_A$ are free.
	\end{enumerate}
	We now prove that conditions~\ref{enum:choi} and~\ref{enum:seq} are sufficient for a well-defined resource theory when the free operations are the Choi-defined operations. Recall that we assume that the set of free states under consideration is compatible with a minimal resource theory, i.e., closed under tensor product, partial tracing, and system swapping. Now, we demonstrate that the five conditions for a resource theory listed in Section~\ref{sec:prelim} are satisfied.
	\begin{enumerate}
		\item The identity channel is free. Indeed, since $\frac{1}{d_A} \Phi_{AA'}$ is free for all $A$, then the CD operation associated with it, the identity channel on $A$, is free.
		\item The swap channel is free. Indeed, from condition~\ref{enum:choi}, one has that $\frac{1}{d_Ad_B}\Phi_{ABA'B'}$ is a free state. Since the set of free states is closed under system swapping, one obtains that $\frac{1}{d_Ad_B}\Phi_{BA\sep A'B'}\coloneqq \frac{1}{d_{A}d_B}(\mS_{AB \to BA} \otimes \mI_{A'B'}) (\Phi_{ABA'B'})$ is free too. This state is, by definition, the renormalized Choi matrix of the swap channel, which is therefore free.
		\item Discarding a system is free. Indeed, since $\frac{1}{d_A} \Phi_{AA'}$ is free for all $A$  and the set of free states is closed under partial tracing, then $(\tr_{A} \otimes \mI_{A'}) (\frac{1}{d_A}\Phi_{AA'}) = \frac{1}{d_A}\id_A$ is free. Once again, this is by definition the Choi matrix of the discarding channel on $A$, which is therefore free.
		\item Parallel composition of free channels is free. Indeed, let $\mM_{A \to B}$ and $\mN_{C \to D}$ be free channels, and let $\mu_{BA'}$ and $\nu_{DC'}$ be their renormalized Choi matrices, respectively. The state $\tau_{BDA'C'}=(\mathcal{I}_{B \to B} \otimes \mathcal{S}_{A'D \to DA'} \otimes \mathcal{I}_{C' \to C'})(\mu_{BA'} \otimes \nu_{DC'})$, where $\mathcal{S}_{A'D \to DA'}$ is the swap channel, is free because the set of free states is closed under tensor product and system swapping. As shown in Appendix~\ref{sec:prod}, the CD operation associated with $\tau_{BDA'C'}$ is $\mM_{A \to B} \otimes \mN_{C \to D}$, which is therefore free. This proves that parallel composition of channels is free.
		\item Sequential composition of free channels is free. Let $\mM_{A \to B}$ and $\mN_{B \to C}$ be free channels, let $\mu_{B \sep A'}$ and $\nu_{C \sep B'}$ be their renormalized Choi matrices. These Choi matrices are free states. We point out that condition~\ref{enum:seq} does not immediately imply that $d_{B}\nu_{C\sep B'}*\mu_{B \sep A'}$ is free. Indeed, condition~\ref{enum:seq} only applies when the system of the second state in the link product matches the second system in the bipartition of the first state. In the case of the link product between $\nu_{C\sep B'}$ and $\mu_{B \sep A'}$, $BA'$ is not a copy of $B'$. Here, we want to construct a free state $\tilde{\nu}_{CA' \sep B'A}$ such that $d_{B}d_{A}\tilde{\nu}_{CA' \sep B'A} * \mu_{BA'}$ is the renormalized Choi matrix of $\mN_{B \to C} \circ \mM_{A \to B}$. If such construction is possible, then we can apply condition~\ref{enum:seq} of the Theorem and the definition of Choi defined resource theories to deduce that $\mN_{B \to C} \circ \mM_{A \to B}$ is free. Let $\tilde{\nu}_{CA'\sep AB'}$ be the renormalized Choi matrix of $\mN_{B \to C} \otimes \mI_{A'}$, that is $\tilde\nu_{CA' \sep B'A} = (\mathcal{I}_{C \to C} \otimes \mathcal{S}_{B'A' \to A'B'} \otimes \mathcal{I}_{A \to A})(\nu_{C\sep B'} \otimes \frac{1}{d_A}\Phi_{A' \sep A})$. Such a state satisfies the conditions above:
		      \begin{itemize}
			      \item It is free. Indeed, both $\nu_{CB'}$ and $\frac{1}{d_A}\Phi_{A'A}$ are free, and the set of free states is closed under tensor product and system swapping.
			      \item $d_{B}d_{A}\tilde{\nu}_{CA' \sep B'A} * \mu_{BA'}$ is the renormalized Choi matrix of $\mN_{B \to C} \circ \mM_{A \to B}$:
			            \begin{equation}
				            \begin{aligned}
					             & d_{B}d_{A}\tilde{\nu}_{CA' \sep B'A} * \mu_{BA'}  = (N_{B \to C} \otimes \mI_{A'})(\mu_{BA'})                   \\
					             & \qquad = \frac{1}{d_A}(N_{B \to C} \otimes \mI_{A'})\circ(\mM_{A \to B} \otimes \mI_{A'})(\Phi_{A A'})          \\
					             & \qquad = \frac{1}{d_A}\left[\left(\mN_{B \to C} \circ \mM_{A \to B} \right)\otimes \mI_{A'}\right](\Phi_{AA'}).
				            \end{aligned}
			            \end{equation}
		      \end{itemize}
	\end{enumerate}
\end{proof}

We observe that point~3 of the proof of the sufficient conditions implies the following Corollary.
\begin{corollary}\label{co:max-mix}
	In every CDRT, the maximally mixed state is free.
\end{corollary}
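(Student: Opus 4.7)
The proof plan is to observe that the corollary is essentially a direct consequence of what was already established in the third point of the sufficiency proof of Theorem~\ref{th:cdrt}. First, recall that condition~\ref{enum:choi} gives that $\frac{1}{d_A}\Phi_{AA'}$ is free for every system $A$. Since any minimal resource theory has a set of free states closed under partial tracing, the state $(\tr_A \otimes \mI_{A'})\bigl(\frac{1}{d_A}\Phi_{AA'}\bigr) = \frac{1}{d_A}\id_{A'}$ is free as well. This is precisely the maximally mixed state on $A'$.

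Since $A$ was arbitrary, and for any target system $S$ we may take $A$ to be a copy of $S$ (so that $A'$ is another copy, which we then relabel as $S$), this shows that $\frac{1}{d_S}\id_S$ is free on every system. The argument is uniform across systems and uses nothing beyond condition~\ref{enum:choi} of Theorem~\ref{th:cdrt} together with closure of the free states under partial tracing. There is no genuine obstacle here; the only point worth highlighting is that the corollary is really a statement about free \emph{states}, extracted from the proof that the discarding channel is free in any CDRT—its renormalized Choi matrix, which is exactly the maximally mixed state, must be a free state by Definition~\ref{def:cdrt}.
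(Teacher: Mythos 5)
Your proof is correct and follows essentially the same route as the paper, which derives the corollary from point~3 of the sufficiency proof of Theorem~\ref{th:cdrt}: the free Choi state $\frac{1}{d_A}\Phi_{AA'}$ together with closure of the free states under partial tracing yields that $\frac{1}{d_A}\id_{A'}$ is free. Your closing remark that the same conclusion also follows directly from Definition~\ref{def:cdrt} applied to the (always free) discarding channel is a valid equivalent reading, not a different argument.
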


An interesting common property of the examples of CDRTs presented in the introduction is that their free operations coincide with the CRNG operations~\cite{Cir01, HN03, GS20, ADGS18, SC19, HG18}. As we show in the following theorem, this is a general feature of every CDRT.

\begin{theorem}\label{th:crng}
	In every CDRT, free operations coincide with CRNG operations.
\end{theorem}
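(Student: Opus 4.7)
The plan is to prove both inclusions separately, and both turn out to be direct consequences of what has already been established.

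For the forward inclusion, that every free operation in a CDRT is CRNG, I would simply invoke the golden rule, which is guaranteed to hold because Theorem~\ref{th:cdrt} shows that a CDRT satisfies all five axioms of a resource theory. Concretely, if $\mathcal{M}_{A \to B}$ is free and $\sigma_{AR}$ is a free state, then $\mathcal{M}_{A \to B} \otimes \mathcal{I}_R$ is free (parallel composition of free channels is free, since the identity on $R$ is free), and applying a free channel to a free state produces a free state. Hence $(\mathcal{M}_{A \to B} \otimes \mathcal{I}_R)(\sigma_{AR})$ is free for every reference system $R$ and every free $\sigma_{AR}$, which is exactly the CRNG property.

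For the converse, that every CRNG operation is free in the CDRT, the key observation is that the renormalized Choi matrix of $\mathcal{M}_{A \to B}$ is itself obtained by applying $\mathcal{M}_{A \to B} \otimes \mathcal{I}_{A'}$ to the renormalized maximally entangled state $\frac{1}{d_A}\Phi_{AA'}$. By condition~\ref{enum:choi} of Theorem~\ref{th:cdrt}, this state is free. So if $\mathcal{M}_{A \to B}$ is CRNG, then taking $R = A'$ and $\sigma_{AR} = \frac{1}{d_A}\Phi_{AA'}$ in the CRNG definition gives that $(\mathcal{M}_{A \to B} \otimes \mathcal{I}_{A'})\bigl(\tfrac{1}{d_A}\Phi_{AA'}\bigr)$ is a free state. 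But this is precisely the renormalized Choi matrix of $\mathcal{M}_{A \to B}$, and by Definition~\ref{def:cdrt} an operation is free in a CDRT if and only if its renormalized Choi matrix is a free state. Therefore $\mathcal{M}_{A \to B}$ is free.

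I do not anticipate any real obstacle: the non-trivial structural work was already done in Theorem~\ref{th:cdrt}, and the present claim essentially rests on the single fact that the renormalized Choi matrix is the image of a free state under the channel tensored with identity. The only thing to be careful about is verifying that $\frac{1}{d_A}\Phi_{AA'}$ really is free (which is condition~\ref{enum:choi} of Theorem~\ref{th:cdrt}) so that it qualifies as a legitimate test state in the CRNG condition; after that, the two inclusions are immediate.
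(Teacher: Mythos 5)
Your proposal is correct and follows essentially the same route as the paper: the forward inclusion is the standard fact that free operations are CRNG in any resource theory (which the paper cites rather than spells out), and the converse uses exactly the paper's argument — test the CRNG channel on the free state $\frac{1}{d_A}\Phi_{AA'}$ (free by condition~\ref{enum:choi} of Theorem~\ref{th:cdrt}), observe that the output is the renormalized Choi matrix, and conclude freeness via Definition~\ref{def:cdrt}.
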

\begin{proof}
	In every resource theory, free operations are CRNG (see, e.g., Ref.~\cite{CG19}). Suppose $\mM_{A \to B}$ is a CRNG operation. Since $\mM_{A \to B}$ is CRNG and $\frac{1}{d_A}\Phi_{A A'}$ is free as a consequence of Theorem~\ref{th:cdrt}, then $(\mM_{A \to B} \otimes \mI_{A'})(\frac{1}{d_A} \Phi_{A A'})$ is a free state. This state is the renormalized Choi matrix of $\mM_{A\to B}$. Therefore, it follows from Definition~\ref{def:cdrt} that $\mM_{A\to B}$ is a CD operation.
\end{proof}
Notably, this theorem provides an easy construction for CRNG operations when the set of free states satisfies the conditions of Theorem~\ref{th:cdrt}. Two examples of resource theories that are not CDRTs, and for which Theorem~\ref{th:crng} does not hold, are the resource theory of athermality~\cite{FOR15} and the resource theory of $k$-unextendibility~\cite{KDWW21}. In athermality, the CRNG operations are the Gibbs-preserving operations, but their renormalized Choi matrices are not free states. In $k$-unextendability, an attempt to construct a CDRT from $k$-extendible states was made in Ref.~\cite{PBHS13}, but the resulting operations were not resource-non-generating. Indeed, $k$-extendible states do not satisfy condition \ref{enum:choi} of Theorem~\ref{th:cdrt}.

In the remainder of this section, we present resource theories that satisfy the conditions of Theorem~\ref{th:cdrt}. We start with the resource theory of asymmetry, where, to the best of our knowledge, there is no explicit result yet in this direction~\cite{GS08, MS13}.

\subsection{Resource theories of asymmetry as CDRTs}
In the resource theory of asymmetry~\cite{GS08}, two parties have access to the same quantum systems but do not share a reference frame. The relation between reference frames is described by an element $g$ of a compact group $G$ determined by the symmetries of the problem. Free states are those that the two parties are certain to describe in the same way, that is, $G$-invariant states. In other words, as state $\rho_A$ is free if $\mathcal{U}^A_g(\rho_A) =  U^A_g \rho_A (U^A_g)^\dagger = \rho_A$ for al $g \in G$, where $U^A_g$ is the unitary representation of $g\in G$ on the Hilbert space $\mathcal{H}_A$. Note that the unitary representation on the product space $\mathcal{H}_A \otimes \mathcal{H}_B$ is  $U^{AB}_g = U^A_g \otimes U^B_g$. Similarly, a channel $\mM_{A \to B}$ is free if it is $G$-covariant, that is, $\mathcal{U}^B_g \circ \mM_{A \to B} \circ (\mathcal{U}^A_g)^{-1} = \mM_{A \to B}$ for all $g \in G$.

\begin{theorem}
	A resource theory of asymmetry is a CDRT if and only if the renormalized Choi state is free.
\end{theorem}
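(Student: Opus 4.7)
My plan is to unfold Definition~\ref{def:cdrt}: the resource theory of asymmetry is a CDRT precisely when the $G$-covariant channels coincide with those channels whose renormalized Choi matrix is $G$-invariant. Necessity is immediate, since the identity channel is $G$-covariant, so if the theory is a CDRT then its renormalized Choi matrix, $\frac{1}{d_A}\Phi_{AA'}$, must be a free state.

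For sufficiency, assuming $\frac{1}{d_A}\Phi_{AA'}$ is free for every system $A$, the crucial claim to prove is that a channel $\mM_{A\to B}$ is $G$-covariant if and only if its renormalized Choi matrix $\mu_{B\sep A'}$ is $G$-invariant. I would proceed in two steps. First, using the standard swap identity $(X \otimes \id)\Ket{\Phi}_{AA'} = (\id \otimes X^T)\Ket{\Phi}_{AA'}$, I would compute the Choi matrix of $\mathcal{U}_g^B \circ \mM_{A\to B} \circ (\mathcal{U}_g^A)^{-1}$ and derive the well-known equivalence that $\mM_{A\to B}$ is $G$-covariant if and only if $\mu_{B\sep A'}$ is invariant under the twisted action $U_g^B \otimes \bar{U}_g^{A'}$, where the bar denotes complex conjugation on the $A'$ register in the fixed basis. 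Second, the hypothesis rewrites as $(U_g^A \otimes U_g^{A'})\Ket{\Phi}_{AA'} = e^{i\theta_g}\Ket{\Phi}_{AA'}$ for some phase $\theta_g$, which by the same swap identity yields $U_g (U_g)^T = e^{i\theta_g}\id$, and consequently $\bar{U}_g^{A'} = e^{-i\theta_g}U_g^{A'}$. Since conjugation of a matrix by a unitary is insensitive to an overall phase in that unitary, invariance of $\mu_{B\sep A'}$ under $U_g^B \otimes \bar{U}_g^{A'}$ coincides with invariance under $U_g^B \otimes U_g^{A'}$, i.e., with $\mu_{B\sep A'}$ being a free state.

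With this equivalence in hand, the $G$-covariant operations coincide with the Choi-defined operations associated with the $G$-invariant states, so the resource theory of asymmetry is a CDRT by Definition~\ref{def:cdrt}. The main bookkeeping obstacle I anticipate is keeping straight the complex conjugate versus transpose placements in the first step, and in particular observing that in the second step the phase $e^{-i\theta_g}$ cancels under the two-sided conjugation by $U_g^B \otimes \bar{U}_g^{A'}$; once this is noted, the remainder of the argument is routine algebraic manipulation.
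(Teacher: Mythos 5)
Your proposal is correct, but it takes a genuinely different route from the paper's proof. The paper handles sufficiency by checking the hypotheses of Theorem~\ref{th:cdrt}: condition~\ref{enum:choi} is the assumption itself, and condition~\ref{enum:seq} is verified by a direct computation showing that $\mathcal{U}_g^B(d_A\mu_{B\sep A'}*\rho_A)=d_A\mu_{B\sep A'}*\rho_A$, where the $G$-invariance of $\mu_{BA'}\otimes\rho_A$ is used to push $(\mathcal{U}_g^{A'A})^{-1}$ onto the Choi effect and the invariance of the Choi state then removes it; the construction of the CDRT is then delegated to Theorem~\ref{th:cdrt}. You instead work at the level of channels: you prove the standard characterization that $\mM_{A\to B}$ is $G$-covariant iff its Choi matrix is invariant under conjugation by $U_g^B\otimes\bar{U}_g^{A'}$, and then use the hypothesis, via $(X\otimes\id)\Ket{\phi}=(\id\otimes X^T)\Ket{\phi}$, to get $U_g U_g^T=e^{i\theta_g}\id$ and hence $\bar{U}_g=e^{-i\theta_g}U_g$, so that the twisted invariance coincides with ordinary $G$-invariance of $\mu_{B\sep A'}$ because the phase cancels under two-sided conjugation. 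This directly identifies the $G$-covariant channels with the Choi-defined operations, which is exactly what Definition~\ref{def:cdrt} literally asks for, and it connects naturally to the paper's subsequent discussion of real representations in the Choi basis and to the completely symmetry-preserving operations of Ref.~\cite{Mar20}; your explicit treatment of the global phase $e^{i\theta_g}$ is in fact slightly more careful than the paper's later argument, which tacitly sets that phase to one. What the paper's route buys is brevity once Theorem~\ref{th:cdrt} is available (only closure under the link product needs checking), together with the automatic conclusion, via Theorem~\ref{th:crng}, that these operations are the CRNG ones; both arguments are sound.
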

\begin{proof}
	Necessity is trivial because of Theorem~\ref{th:cdrt}. For sufficiency, we have only to show that the set of free states is closed under the link product whenever the Choi state is free. To this end, let $\mu_{BA'}$ and $\rho_A$ be free states such that $d_{A'}\tr_B\mu_{BA'} = \id_{A'}$. Then,
	\begin{equation}
		\begin{aligned}
			 & \mathcal{U}_g^{B}(d_A \mu_{B\sep A'}*\rho_A)  = \mathcal{U}_g^{B}(d_A \tr_{A'A}[(\mu_{BA'} \otimes \rho_A)(\id_B \otimes \Phi_{A'A})]) \\
			 & \quad  =\mathcal{U}_g^{B}(d_A \tr_{A'A}\{[(\mathcal{U}_g^{BA'A})^{-1}(\mu_{BA'} \otimes \rho_A)](\id_B \otimes \Phi_{A'A})\})          \\
			 & \quad =\mathcal{U}_g^{B}(d_A \tr_{A'A}\{[((\mathcal{U}_g^{B})^{-1}\otimes (\mathcal{U}_g^{A'A})^{-1})(\mu_{BA'} \otimes \rho_A)]       \\
			 & \qquad (\id_B \otimes \Phi_{A'A})\})                                                                                                   \\
			 & \quad = d_A \tr_{A'A}\{[\mI_B \otimes (\mathcal{U}_g^{A'A})^{-1}(\mu_{BA'} \otimes \rho_A)] (\id_B \otimes\Phi_{A'A})\}                \\
			 & \quad =d_A \tr_{A'A}\{(\mu_{BA'} \otimes \rho_A)[\id_B \otimes \mathcal{U}_g^{A'A}(\Phi_{A'A})]\}                                      \\
			 & \quad =d_A \tr_{A'A}[(\mu_{BA'} \otimes \rho_A)(\id_B \otimes \Phi_{A'A})]                                                             \\
			 & \quad = d_A\mu_{B \sep A'}*\rho_A.
		\end{aligned}
	\end{equation}
\end{proof}

In addition, the Choi state is $G$-invariant if and only if all the unitary representations are real in the Choi basis. To prove this, it is easier to work with the Choi vector $\Ket{\phi}_{AA'} = \sum_{j} \Ket{jj}_{AA'}$, rather than with the Choi matrix $\Phi_{AA'} = \KBra{\phi}{\phi}_{AA'}$. Indeed, for all matrices $M_A$ we have
\begin{equation}
	\begin{aligned}
		(M_A \otimes \id_{A'})\Ket{\phi}_{AA'} & = (M_A \otimes \id_{A'}) \sum_{j} \Ket{jj}_{AA'}                        \\
		                                       & =\sum_{j,k,l} M_{k,l} (\KBra{k}{l}_{A} \otimes \id_{A'}) \Ket{jj}_{AA'} \\
		                                       & = \sum_{j,k}M_{k,j} \Ket{kj}_{AA'}                                      \\
		                                       & = \sum_{j,k,l}M_{l,j}(\id_A \otimes \KBra{j}{l}_{A'})\Ket{kk}_{AA'}     \\
		                                       & = (\id_A \otimes M^T_{A'})\sum_k\Ket{kk}_{AA'}                          \\
		                                       & =(\id_A \otimes M^T_{A'}) \Ket{\phi}_{AA'}.
	\end{aligned}
\end{equation}
It is straightforward to see that if $U_g^A$ is real for every $g \in G$, then $(U_g \otimes U_g)\Ket{\phi}_{AA'} = (\id_A \otimes U^A_g(U_g^A)^T)\Ket{\phi}_{AA'} = \Ket{\phi}_{AA'}$ and therefore $\phi_{AA'}$ is $G$-invariant. If instead we assume that $\Ket{\phi}_{AA'}$ is $G$-invariant, then for all $g \in G$
\begin{equation}
	\begin{aligned}
		\Ket{\phi}_{AA'} & = (U^A_g \otimes U^{A'}_g)\Ket{\phi}_{AA'}                \\
		                 & = (\id_{A} \otimes U^{A'}_g(U^{A'}_g)^T)\Ket{\phi}_{AA'}.
	\end{aligned}
\end{equation}
As a consequence,
\begin{equation}
	\begin{aligned}
		\Ket{j}_{A'} & = (\Bra{j}_A \otimes \id_{A'})\Ket{\phi}_{AA'}              \\
		             & =  (\Bra{j}_A \otimes U^{A'}_g(U^{A'}_g)^T)\Ket{\phi}_{AA'} \\
		             & = U^{A'}_g(U^{A'}_g)^T\Ket{j}_{A'}
	\end{aligned}
\end{equation}
for all elements of the Choi basis. This proves that $U^{A'}_g(U^{A'}_g)^T = \id_{A'}$, and therefore the representation is real in the Choi basis.

These results are similar to those presented in the Supplementary Information of Ref.~\cite{Mar20} about completely symmetry-preserving operations. This similarity is not surprising because CD and CRNG operations coincide in CDRTs as a consequence of Theorem~\ref{th:crng}. In the context of asymmetry, an example of a CDRT is the resource theory of parity generated by the group $\mathbb{Z}_2$~\cite{GS08}.

\subsection{Example: resource theory of magic states}
The resource theory of magic states~\cite{VFGE12, VMGE14, HC17, ADGS18, SC19, WKSRXLGS24} originates from the observation that it is possible to achieve universal quantum computation by using only Clifford unitaries and some special states called magic states~\cite{BK05}. In quantum computation, systems usually consist of $n$ qubits~\cite{HC17}. In this scenario, Clifford unitaries, state preparation, and measurements in the computational basis are easy to perform and, therefore, free. In particular, Clifford unitaries are those transformations that can be implemented with circuits containing only Hadamard, CNOT, and phase-shift gates~\cite{Got98}. The free operations are called \emph{stabilizer} operations, and analogously, free states are called stabilizer states. The resources are magic states and operations.

We prove now that the set of magic states satisfies the conditions of Theorem~\ref{th:cdrt}. For the first condition, we denote the number of qubits in $A$ with $n$, and the elements of the computational basis with $\Ket{x_1 \dots x_n}$, where $ x_j = 0,1$. The normalized Choi state $\Ket{\phi}= \frac{1}{2^{n/2}}\sum_{x_1 \dots x_n} \Ket{x_1 \dots x_n} \Ket{x_{1'} \dots x_{n'}}$ is obtained by preparing the $n$ entangled states $\frac{1}{\sqrt{2}}\sum_{x_j} \Ket{x_j x_{j'}}$ and by swapping the order of the systems. The $j$-th 2-qubit entangled state can be obtained with a Hadamard and a CNOT gate acting on $\Ket{0_j0_{j'}}$, and the swap channels can be decomposed into three CNOT gates (see, e.g., Ref.~\cite{NC10}). All these operations are stabilizer operations; therefore, the renormalized Choi state $\frac{1}{d_{A}}\Phi_{AA'}$ is a stabilizer state. This proves that condition~\ref{enum:choi} of Theorem~\ref{th:cdrt} is satisfied.

We now consider the closure under the link product of the set of stabilizer states. Let $\mu_{B A'}$ and $\rho_A$ be free states such that $d_{A'} \tr_B\mu_{BA'}= \id_{A'}$, where $A$ is a $n$-qubit system and $B$ is a $m$-qubit system. This implies that there exists a protocol to create $\mu_{B A'}  \otimes {\rho_A}$ for free. By Eq.~\eqref{th:cdrt}, the state $d_A \mu_{B \sep A'} * \rho_A$ is obtained by a protocol with postselection:
\begin{enumerate}
	\item Prepare $\mu_{B A'} \otimes \rho_A$.
	\item Perform a measurement in the Bell basis for each of the pairs of qubits in $A'A$
	\item Postselect on the measurement outcome associated with $\frac{1}{d_A}\Phi_{A' A}$. After the postselection, the system $B$ is in the state $d_A\mu_{B\sep A'} * \rho_A$, as requested.
\end{enumerate}
Each of the steps of this protocol is a free operation. Indeed, the measurement in the Bell basis can be converted into a measurement in the computational basis by adding Hadamard and CNOT gates, which are stabilizer operations. Therefore the state $d_A \mu_{B \sep A'} * \rho_A$ is free, as requested by Theorem~\ref{th:cdrt}. The set of stabilizer states satisfies all the conditions of Theorem~\ref{th:cdrt}, and the CDRT constructed from it is the resource theory of magic states~\cite{SC19}.

\subsection{Example: resource theory of imaginarity}
We now focus on the resource theory of imaginarity~\cite{WKRSXLGS21, WKRSXLGS21a, HG18, WKSRXLGS24}. The aim of this resource theory is to evaluate the role of imaginarity in quantum mechanics. A state is free if its density matrix, expressed with respect to a fixed basis, is real. States with density matrices containing non-real numbers are resources. The Choi state $\frac{1}{d_A}\Phi_{AA'}$ relative to the fixed basis has only real coefficients, and therefore is free. This proves condition~\ref{enum:choi} of Theorem~\ref{th:cdrt}. Condition~\ref{enum:seq} follows trivially because it can be expressed using only multiplications and traces of real matrices, from which it is possible to obtain only real matrices. This proves that it is possible to construct a CDRT from the set of real matrices, and it coincides with the resource theory of imaginarity, where the free operations are all the CRNG operations~\cite{HG18}.

\subsection{Example: resource theory of separable entanglement}
In the theory of entanglement~\cite{BBPS96, BdVSW96, VPRK97, Vid00, PV07, HHHH09}, systems are bipartite systems, and free states are separable states. We have already shown in Remark~\ref{rem:choi-not-ent} that the Choi state is separable, and therefore free. We now prove that the separable states are closed under link product. To align with the convention used in entanglement-related resource theories, we denote the first pair of systems with $S_1 \coloneqq A_1 B_1$ and the second with $S_2 \coloneqq A_2 B_2$. Let $\mu_{S_2 S_1'}$ be a separable state, i.e., there exists density matrices $\mu^j_{A_2A'_1}$ and $\tilde\mu^j_{B_2B'_1}$ such that $\mu_{A_2A'_1B_2B'_1} = \sum_j p_j \mu^j_{A_2A'_1} \otimes \tilde\mu^j_{B_2 B'_1}$, where $p_j \geq 0$ and $\sum_j p_j =1$. Moreover, let $\mu_{S_2 S_1'}$ be such that $d_{S_1'}\tr_{S_2}\mu_{S_2 S_1'} = \id_{S'}$. Let $\rho_{S_1} = \sum_k q_k \rho^k_{A_1}  \otimes \tilde\rho^k_{B_1}$, where $q_k \geq 0$ and $\sum_k q_k =1$, be another separable state. Then
\begin{equation}
	\begin{aligned}
		\mu_{S_2 \sep S_1'} * \rho_{S_1} & = \tr_{(A_1B_1)}[\mu_{(A_2B_2) (A_1B_1)} (\id_{(A_2B_2)}\otimes \rho_{(A_1B_1)}^T )] \\
		                                 & = \sum_{j,k}p_jq_k\sigma^{j,k}_{A_2} \otimes  \tilde\sigma^{j,k}_{B_2}.
	\end{aligned}
\end{equation}
where $\sigma^{j,k}_{A_2} \coloneqq \tr_{A_1}\{\mu^j_{A_2A_1}[\id_{A_2} \otimes (\rho^k_{A_1})^T] \}$, and $\tilde\sigma^{j,k}_{B_2} \coloneqq  \tr_{B_1}\{ \tilde\mu^j_{B_2B_1}[\id_{B_2} \otimes (\tilde\rho^k_{B_1})^T]\}$.
This proves that $d_{S_1}\mu_{S_2 S_1'} * \rho_{S_1} $ is a separable state whenever $\mu_{S_2 S_1'}$ and $\rho_{S_1}$ are separable, therefore condition~\ref{enum:seq} of Theorem~\ref{th:cdrt} is satisfied. This implies that it is possible to construct a CDRT from the set of separable states, which coincides with SEP~\cite{Cir01, HN03}.

\subsection{Example: resource theory of NPT entanglement}
Looking again at the resource theory of entanglement, one notices that all separable states have positive partial transpose~\cite{Per96, HHH96}. One can thus consider the bipartite states with positive partial transpose to be free states. This set is strictly larger than the set of separable states due to bound entanglement~\cite{HHH98}, which is of great relevance in quantum information~\cite{Per96, HHH96, HHH98, Rai99b, APE03, IP05, MW08}. Since the set of PPT states contains separable states, it automatically satisfies condition~\ref{enum:choi} of Theorem~\ref{th:cdrt}. Let $\mu_{(A_2B_2) (A_1B_1)'}$ and $\rho_{(A_1B_1)}$ be free states such that $d_{(A_1B_1)'}\tr_{(A_2, B_2)}\mu_{(A_2B_2) (A_1B_1)'}= \id_{(A_1B_1)'}$. We write $(\mu_{(A_2B_2) (A_1B_1)'} * \rho_{(A_1B_1)})^{T_{B_2}}$ as
\begin{equation}
	\begin{aligned}
		 & (\mu_{(A_2B_2) \sep(A_1B_1)'} * \rho_{(A_1B_1)})^{T_{B_2}}                                                       \\ &\qquad= \tr_{A_1B_1} (\mu_{(A_2B_2) (A_1B_1)}^{T_{B_2}} ( \id_{(A_2B_2)} \otimes \rho_{(A_1B_1)}^T))            \\
		 & \qquad =  \tr_{A_1B_1}[\mu_{(A_2B_2) (A_1B_1)}^{T_{B_2B_1}}( \id_{(A_2B_2)} \otimes \rho_{(A_1B_1)}^{T_{A_1}})].
	\end{aligned}
\end{equation}
Since $\rho_{(A_1B_1)}$ has positive partial transpose, there exists and eigenbasis $\Set{\Ket{\varphi^j_{(A_1B_1)}}}$ and positive eigenvalues $\lambda_j$ such that $\rho_{(A_1B_1)}^{T_{A_1}} = \sum_j \lambda_j \KBra{\varphi^j_{(A_1B_1)}}{\varphi^j_{(A_1B_1)}}$. This implies that, for all $\Ket{\psi_{(A_2B_2)}}$,
\begin{equation}
	\begin{aligned}
		 & \Braket{\psi_{(A_2B_2)}| (\mu_{(A_2B_2)\sep (A_1B_1)'}  * \rho_{(A_1B_1)})^{T_{B_2}} | \psi_{(A_2B_2)}} =                                                   \\
		 & \sum_j \lambda_j \Braket{\psi_{(A_2B_2)} \otimes \varphi_{(A_1B_1)}^j| \mu_{(A_2B_2) (A_1B_1)}^{T_{B_2B_1}} | \psi_{(A_2B_2)} \otimes \varphi_{(A_1B_1)}^j} \\
		 & \qquad \geq 0,
	\end{aligned}
\end{equation}
where the last inequality follows from the fact that $ \mu_{(A_2B_2) (A_1B_1)}^{T_{B_2B_1}} \geq 0$, as it is free, and $\lambda_i >0$. This is proves that $d_{A}(\mu_{(A_2B_2) \sep(A_1B_1)'} * \rho_{(A_1B_1)})^{T_{B_2}} \geq 0$, and that the set of states with positive partial transpose is closed under link product. In this case as well, we have shown that the set of free states satisfies the conditions of Theorem~\ref{th:cdrt}. The CDRT constructed coincides with the theory of NPT~\cite{Rai99b, APE03, IP05, MW08}.

\subsection{Example: Resource theory of non-negativity of quantum amplitudes}
In the resource theory of non-negativity of quantum amplitudes~\cite{JS22}, there is a fixed orthonormal basis $\Set{\Ket{j}_A}$ for every system $A$. For consistency, the orthonormal basis on composite systems $AB$ is $\Set{\Ket{jk}_{AB}}$, where $\Set{\Ket{j}_A}$ and $\Set{\Ket{k}_B}$ are the fixed orthonormal basis for systems $A$ and $B$, respectively. In the resource theory of non-negativity of quantum amplitudes, a normalized pure state $\Ket{\psi}_A$ is free if and only if, up to a global phase, it is a positive linear combination of elements of the fixed basis, i.e., if and only if there exists $\theta$ such that $\Ket{\psi}_A = e^{i\theta}\sum_{j}a_i \Ket{j}_A$, $a_j \geq 0$ for all $j$, and $\sum_j a^2_j =1$. If $\Ket{\psi}_A$ satisfies this condition we write $\Ket{\psi}_A \geq 0$. From this definition, it immediately follows that $\frac{1}{\sqrt{d_A}}\sum_{j}\Ket{jj}_{AA'}$ is a free pure state. In other words, the renormalized Choi state $\frac{1}{d_A}\Phi_{AA'}$ is free.

A mixed state is free, if it is a convex combination of free pure states, i.e., $\rho_A$ is free if and only if $\rho_A = \sum_\alpha p_\alpha\KBra{\psi_\alpha}{\psi_\alpha}_A$, where $\Ket{\psi_\alpha}_A \geq 0$, $p_\alpha \geq 0$ for all $j$, and $\sum_\alpha p_\alpha =1$. Note that a free mixed state expressed in the orthonormal basis has the form
\begin{equation}
	\sum_{\alpha, k, l}p_\alpha a^\alpha_k a^\alpha_l \KBra{k}{l}_A,
\end{equation}
where $p_\alpha \geq 0$, $a^\alpha_j \geq 0$, and $\sum_\alpha p_\alpha = \sum_j (a^\alpha_j)^2 =1$.

To show that the set of free states satisfies the conditions of Theorem~\ref{th:cdrt}, we need to show that $d_A \mu_{B\sep A'}*\rho_A$ is free whenever $\mu_{B\sep A'}$ and $\rho_A$ are free and $d_A \tr_B \mu_{BA} = \id_A$. Let $\mu_{BA} = \sum_{\alpha, j, k, l, m}{p_\alpha a^\alpha_{j,k}a^\alpha_{l,m}\KBra{jk}{lm}_{BA}}$, with $p_\alpha \geq 0$, $a^\alpha_{j,k} \geq 0$, $\sum_\alpha p_\alpha = \sum_{j,k}(a^\alpha_{j,k})^2 =1$, and let $\rho_A = \sum_{\beta,x,y}q_\beta b^\beta_xb^\beta_y \KBra{x}{y}_{A}$, with $q_\beta \geq 0$, $b^\beta_x \geq 0$ and $\sum_\beta q_\beta = \sum_x (b^\beta_x)^2 =1$. Then,
\begin{equation}
	\begin{aligned}
		 & d_A \mu_{B \sep A'} * \rho_A  =d_A \tr_A[\mu_{BA}(\id_B \otimes \rho_A^T)]                                                                          \\
		 & = d_A\sum_{\alpha, \beta,j,k,l,m,x,y}\tr_A [p_\alpha a^\alpha_{j,k}a^\alpha_{l,m}\KBra{jk}{lm}_{BA}                                                 \\
		 & \qquad(\id_B \otimes q_\beta b^\beta_xb^\beta_y \KBra{x}{y}_{A}) ]                                                                                  \\
		 & =d_A\sum_{\alpha, \beta,j,k,l,m}p_\alpha a^\alpha_{j,k}a^\alpha_{l,m} q_\beta b^\beta_k b^\beta_m \KBra{j}{l}_B                                     \\
		 & = d_A \sum_{\alpha, \beta,j,l} p_\alpha q_\beta \Bigl(\sum_k a^\alpha_{j,k}b^\beta_k\Bigr)\Bigl(\sum_m a^\alpha_{l,m} b^\beta_m\Bigr)\KBra{j}{l}_B.
	\end{aligned}
\end{equation}
This expression suggests the definition of the coefficients
\begin{equation}
	\begin{aligned}
		 & r_{\alpha, \beta} \coloneqq d_A p_\alpha q_\beta,                \\
		 & c_{j}^{\alpha,\beta} \coloneqq \sum_k a^\alpha_{j, k} b^\beta_k.
	\end{aligned}
\end{equation}
With such coefficients, the expression above becomes
\begin{equation}
	d_A \mu_{B \sep A'} * \rho_A = \sum_{\alpha, \beta, j, l} r_{\alpha, \beta}c^{\alpha, \beta}_j c^{\alpha, \beta}_l \KBra{j}{l}_B.
\end{equation}
However, these coefficients do not satisfy the desired conditions $\sum_{\alpha, \beta} r_{\alpha, \beta} = \sum_j (c^{\alpha, \beta}_j)^2 =1$. If one defines the renormalized coefficients
\begin{equation}
	\begin{aligned}
		 & s_{\alpha, \beta} \coloneqq r_{\alpha, \beta} \sum_{k} (c_k^{\alpha, \beta})^2,                      \\
		 & d_{j}^{\alpha, \beta} \coloneqq \frac{c_j^{\alpha, \beta}}{\sqrt{\sum_{k} (c_k^{\alpha, \beta})^2}},
	\end{aligned}
\end{equation}
it immediately follows that
\begin{equation}
	\begin{aligned}
		d_A \mu_{B \sep A'} * \rho_A & = \sum_{\alpha,\beta, j, l} r_{\alpha,\beta}c^{\alpha,\beta}_j c^{\alpha,\beta}_l \KBra{j}{l}_B  \\
		                             & = \sum_{\alpha,\beta, j, l} s_{\alpha,\beta}d^{\alpha,\beta}_j d^{\alpha,\beta}_l \KBra{j}{l}_B,
	\end{aligned}
\end{equation}
and
\begin{equation}
	\sum_{j}(d^{\alpha,\beta}_j)^2 = 1.
\end{equation}
What is left to show is that $\sum_{\alpha,\beta} s_{\alpha,\beta} =1$. To this end, notice that $d_A \tr_B \mu_{BA} = \id_A$ implies
\begin{equation}
	\begin{aligned}
		\delta_{k,l} & = \Braket{k|d_A \tr_B \mu_{BA}|l}_A                                                                                            \\
		             & = \sum_j d_A \Braket{jk|\mu_{BA}|jl}_{BA}                                                                                      \\
		             & =\sum_{j}d_A \Bra{jk}\Bigl(\sum_{\alpha, x, y, w, z}p_\alpha a^\alpha_{x,y}a^\alpha_{w,z}\KBra{xy}{wz}_{BA}\Bigr)\Ket{jl}_{BA} \\
		             & = \sum_{\alpha,j} d_A p_\alpha a^\alpha_{j,k}a^\alpha_{j, l}.
	\end{aligned}
\end{equation}Therefore
\begin{equation}
	\begin{aligned}
		\sum_{\alpha, \beta} s_{\alpha, \beta} & =\sum_{\alpha, \beta}r_{\alpha, \beta}\sum_j c_j^{\alpha, \beta} c_j^{\alpha, \beta}                       \\
		                                       & =\sum_{\alpha, \beta} d_A p_\alpha q_\beta \sum_{j,k,l} a^\alpha_{j,k}b^\beta_k a^{\alpha}_{j,l} b^\beta_l \\
		                                       & = \sum_{\beta,k,l}q_\beta b^\beta_k b^\beta_l \sum_{\alpha, j}d_A p_\alpha a^\alpha_{j, k} a^\alpha_{j,l}  \\
		                                       & = \sum_{\beta, k, l}q_\beta b^{\beta}_k b^\beta_l \delta_{k,l}                                             \\
		                                       & =\sum_{\beta}q_\beta \sum_k b^\beta_k b^\beta_k                                                            \\
		                                       & =\sum_{\beta}q_\beta                                                                                       \\
		                                       & =1.
	\end{aligned}
\end{equation}
The state $d_A \mu_{B \sep A'} * \rho_A$ can be written as a convex sum of positive pure vectors, and is therefore free. The CDRT generated from this set of pure states is the resource theory of non-negativity of quantum amplitudes.

\section{Optimization problems in CDRTs}\label{sec:opt-cdrt} In CDRTs, every optimization problem over the set of free channels can be converted into an optimization problem over the set of free states. That is,
\begin{equation}
	\max_{\mathcal{M}_{A \to B} \in \mathfrak{F}(A \to B)} f(\mathcal{M}_{A \to B})
\end{equation}
becomes
\begin{equation}
	\max_{\substack{\mu_{BA'} \in \mathfrak{F}(BA'),              \\d_{A'} \tr_B\mu_{BA'}=  \id_{A'}}} \tilde{f}(\mu_{BA'}).
\end{equation}
Here, $f$ is a function from the set of quantum channels to the $\mathbb{R}$, $\mathfrak{F}(A \to B)$ and $\mathfrak{F}(BA')$ are the sets of free channels from $A$ to $B$ and free states on $BA'$, and $\tilde{f}$ is the composition of $f$ with the inverse Choi isomorphism defined in Eq.~\eqref{eq:choi-inv}, and it satisfies $\tilde{f}(\mu_{BA'}) = f(\mathcal{M}_{A \to B})$ whenever $\mu_{BA'}$ is the renormalized Choi matrix of $\mathcal{M}_{A \to B}$. Moreover, if the set of free states is convex and $\tilde{f}$ is linear, the optimization problem can be expressed as a CLP (see Appendix~\ref{sec:clp} for details) or even as an SDP.

If one has a set of free states that does not satisfy the conditions of Theorem~\ref{th:cdrt}, one can still consider the CRNG operations. However, in this case, there is no universal way of expressing the constraints in optimization problems as a single constraint in terms of free states.

\section{Quantifying resources in CDRTs}\label{sec:rel-ent}
CDRTs are constructed from a given set of free states that satisfies the conditions of Theorem~\ref{th:cdrt}. Consequently, they inherit all the resource measures that depend only on the set of free states. Examples of such resource measures are the robustness, the generalized robustness, and all the measures based on quantum divergences (see Ref.~\cite{CG19} for a review). Among these, the max-relative entropy~\cite{Dat09} is particularly interesting in the context of CDRTs. It is defined as
\begin{equation}
	D_{\max}(\rho_A\Vert \sigma_A) = \log \min\Set{\lambda \mid \rho_A \leq \lambda\sigma_A},
\end{equation}
when $\supp \rho_A\subseteq\supp \sigma_A$ and $+\infty$ otherwise. The resource measure based on the max-relative entropy is
\begin{equation}
	D_{\max}^\mathfrak{F}(\rho_{A}) = \inf_{\omega_A \in \mathfrak{F}(A)} D_{\max}(\rho_A\Vert \omega_A).
\end{equation}

The extensions of these quantities to quantum channels are~\cite{CMW16, LKDW18}
\begin{equation}
	\begin{aligned}
		 & D_{\max}(\mathcal{M}_{A \to B}\Vert \mathcal{N}_{A \to B}) =                                                                    \\
		 & \quad\sup_{\rho_{RA}}D_{\max}(\mI_{R} \otimes \mM_{A \to B} (\rho_{RA})\Vert\mI_{R} \otimes \mN_{A \to B} (\rho_{RA})),         \\
		 & D_{\max}^\mathfrak{F}(\mM_{A \to B}) = \inf_{\mN_{A \to B} \in \mathfrak{F}(A \to B)}D_{\max}(\mM_{A \to B}\Vert\mN_{A \to B}).
	\end{aligned}
\end{equation}
Notably, the max-relative entropy for quantum channels satisfies~\cite{WBHK20}
\begin{equation}
	D_{\max}(\mM_{A \to B}\Vert \mN_{A \to B}) = D_{\max}(\mu_{BA'}\Vert \nu_{BA'}),
\end{equation}
where $\mu_{BA'}$ and $\nu_{BA'}$ are the renormalized Choi matrices of $\mM_{A \to B}$ and $\mN_{A \to B}$, respectively. In a CDRT, this property is particularly relevant because it allows us to write the optimization problem for $D_{\max}^\mathfrak{F}(\mM_{A\to B})$ as
\begin{equation}
	D_{\max}^\mathfrak{F}(\mM_{A \to B}) = \inf_{\substack{\omega_{BA'} \in \mathfrak{F}(BA'),              \\d_{A'} \tr_B\omega_{BA'}=  \id_{A'}}} D_{\max}(\mu_{BA'}\Vert \omega_{BA'}).
\end{equation}
Notice that, in every CDRT, $D_{\max}^\mathfrak{F}$ is finite for both states and channels because the set of free states always contains a full rank state, see Corollary~\ref{co:max-mix}. The same is not true for all resource theories.

As a direct consequence of the results in Ref.~\cite{WBHK20}, the resource measures based on the max-relative entropy satisfy the following inequality.
\begin{proposition}
	For all states $\rho_A$ and channels $\mM_{A \to B}$
	\begin{equation}\label{eq:chain-rule}
		D_{\max}^\mathfrak{F}(\mM_{A \to B}(\rho_A))  \leq D_{\max}^\mathfrak{F}(\rho_A) + D_{\max}^\mathfrak{F}(\mM_{A \to B}).
	\end{equation}
\end{proposition}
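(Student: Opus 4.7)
The plan is to adapt the standard chain-rule argument for the max-relative entropy, leveraging the Choi-matrix identity $D_{\max}(\mM_{A\to B}\Vert \mN_{A\to B}) = D_{\max}(\mu_{BA'}\Vert\nu_{BA'})$ from Ref.~\cite{WBHK20} quoted immediately above the proposition. I may assume the right-hand side is finite, since otherwise the inequality holds trivially. For any $\epsilon > 0$, I pick approximate minimizers: a free state $\omega_A \in \mathfrak{F}(A)$ with $\rho_A \leq 2^{D_{\max}^\mathfrak{F}(\rho_A) + \epsilon}\,\omega_A$, and a free channel $\mN_{A \to B} \in \mathfrak{F}(A \to B)$ whose renormalized Choi matrix $\nu_{BA'}$ satisfies $\mu_{BA'} \leq 2^{D_{\max}^\mathfrak{F}(\mM_{A \to B}) + \epsilon}\,\nu_{BA'}$, where $\mu_{BA'}$ is the renormalized Choi matrix of $\mM_{A\to B}$. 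Both dominance relations are immediate from the defining formula $D_{\max}(\tau\Vert\sigma) = \log\min\{\lambda : \tau \leq \lambda\sigma\}$.

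The key technical step is transporting the Choi-level operator inequality into an inequality at the level of channel outputs. Set $t = D_{\max}^\mathfrak{F}(\mM_{A \to B}) + \epsilon$. The operator $2^t \nu_{BA'} - \mu_{BA'} \geq 0$ is, up to the factor $d_A$, the Choi matrix of the linear map $2^t \mN_{A \to B} - \mM_{A \to B}$; by the Choi--Jamiolkowski criterion for complete positivity, this difference map is CP, hence positive on all input states, giving $\mM_{A \to B}(\rho_A) \leq 2^t \mN_{A \to B}(\rho_A)$. Separately, because $\mN_{A \to B}$ is itself CP and hence order-preserving on positive operators, applying it to the state-level dominance yields $\mN_{A \to B}(\rho_A) \leq 2^{D_{\max}^\mathfrak{F}(\rho_A) + \epsilon}\,\mN_{A \to B}(\omega_A)$. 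Chaining these two bounds produces
\begin{equation*}
\mM_{A \to B}(\rho_A) \leq 2^{D_{\max}^\mathfrak{F}(\rho_A) + D_{\max}^\mathfrak{F}(\mM_{A \to B}) + 2\epsilon}\,\mN_{A \to B}(\omega_A).
\end{equation*}

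To close the argument I invoke the golden rule of resource theories: since $\mN_{A \to B}$ is a free operation and $\omega_A$ is a free state, $\mN_{A \to B}(\omega_A)$ is a free state and hence a feasible candidate in the infimum defining $D_{\max}^\mathfrak{F}(\mM_{A \to B}(\rho_A))$. The displayed inequality then certifies $D_{\max}^\mathfrak{F}(\mM_{A \to B}(\rho_A)) \leq D_{\max}^\mathfrak{F}(\rho_A) + D_{\max}^\mathfrak{F}(\mM_{A \to B}) + 2\epsilon$, and letting $\epsilon \to 0$ concludes the proof. The step I expect to be the most delicate in presentation (though not in substance) is the conversion of the Choi-matrix inequality $\mu_{BA'} \leq 2^t \nu_{BA'}$ into the output-level dominance, since it requires keeping the input/output bipartition consistent with the `$\sep$' convention of Section~\ref{sec:prelim} and using the normalization factor $d_A$ correctly when passing between Choi matrices and renormalized Choi matrices; no genuine obstacle arises, as the necessary conventions have already been fixed in the paper.
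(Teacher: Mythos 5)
Your proof is correct, but it takes a genuinely different route from the paper's. The paper obtains the intermediate inequality $D_{\max}(\mM_{A \to B}(\rho_A)\Vert \mN_{A \to B}(\omega_A)) \leq D_{\max}(\mM_{A \to B}\Vert \mN_{A \to B}) + D_{\max}(\rho_A\Vert \omega_A)$ by citing the amortization collapse of Ref.~\cite{WBHK20} (their Eqs.~(52) and~(85), i.e., $D^{A}_{\max} = D_{\max}$), restricting the supremum in the amortized divergence to product states $\tau_R \otimes \rho_A$, $\tau_R \otimes \omega_A$ and using additivity of $D_{\max}$; it then takes infima over free $\mN_{A\to B}$ and free $\omega_A$ and invokes the golden rule, exactly as you do at the end. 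You instead prove that same intermediate inequality directly: from the Choi-level identity $D_{\max}(\mM\Vert\mN) = D_{\max}(\mu_{BA'}\Vert\nu_{BA'})$ you extract the operator dominance $\mu_{BA'} \leq 2^{t}\nu_{BA'}$ for a near-optimal free $\mN$, convert it via Choi's theorem into complete positivity of $2^{t}\mN_{A\to B} - \mM_{A\to B}$ and hence $\mM_{A\to B}(\rho_A) \leq 2^{t}\mN_{A\to B}(\rho_A)$, and chain with positivity of $\mN_{A\to B}$ applied to $\rho_A \leq 2^{s}\omega_A$; the $\epsilon$-approximate minimizers correctly handle non-attained infima, and the trivial-case reduction when the right-hand side is infinite is fine. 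What each approach buys: yours is more elementary and self-contained, making transparent that the chain rule for $D_{\max}$ is really just transitivity of operator dominance plus the CP characterization of the channel divergence (it only leans on the Choi identity the paper already quotes above the proposition); the paper's version is shorter given the cited amortization result and would carry over verbatim to any channel divergence known to satisfy such an amortization collapse, whereas your argument is specific to $D_{\max}$. Both proofs, like the paper's, actually hold in any resource theory, not only CDRTs.
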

\begin{proof}
	We copy Eq. (52) and Eq. (85) of Ref.~\cite{WBHK20}, adapted to our notation, which will be useful for the next step. Eq. (52) is the definition of $D^A_{\max}$:
	\begin{equation}
		\begin{aligned}
			 & D^A_{\max}(\mM_{A \to B}\Vert \mN_{A \to B})  \coloneq                                                                       \\
			 & \sup_{\rho_{RA}, \sigma_{RA}}[D_{\max}(\mI_R \otimes \mM_{A \to B}(\rho_{RA})\Vert \mI_R\otimes  \mN_{A \to B}(\sigma_{RA})) \\
			 & \qquad- D_{\max}(\rho_{RA}\Vert \sigma_{RA})].
		\end{aligned}
	\end{equation}
	Eq. (85) shows the relation between $D^A_{\max}$ and $D_{\max}$:
	\begin{equation}
		D^A_{\max}(\mM_{A \to B}\Vert \mN_{A \to B}) = D_{\max}(\mM_{A \to B}\Vert \mN_{A \to B}).
	\end{equation}

	These equations imply that:
	\onecolumngrid
	\begin{equation}
		\begin{aligned}
			D_{\max}(\mM_{A \to B}\Vert \mN_{A \to B}) & = \sup_{\rho_{RA}, \sigma_{RA}}[D_{\max}(\mI_R \otimes \mM_{A \to B}(\rho_{RA})\Vert \mI_R\otimes  \mN_{A \to B}(\sigma_{RA}))- D_{\max}(\rho_{RA}\Vert \sigma_{RA})]                       \\
			                                           & \geq D_{\max}(\mI_R \otimes \mM_{A \to B}(\tau_R \otimes \rho_{A})\Vert \mI_R\otimes  \mN_{A \to B}(\tau_R \otimes \omega_A))- D_{\max}(\tau_R \otimes \rho_A\Vert \tau_R \otimes \omega_A) \\
			                                           & =  D_{\max}(\tau_R \otimes \mM_{A \to B}(\rho_{A})\Vert \tau_R\otimes  \mN_{A \to B}(\omega_A))- D_{\max}(\tau_R \otimes \rho_A\Vert \tau_R \otimes \omega_A)                               \\
			                                           & =D_{\max}(\mM_{A \to B}(\rho_A)\Vert \mN_{A \to B}(\omega_A)) - D_{\max}(\rho_A\Vert \omega_A),
		\end{aligned}
	\end{equation}
	\twocolumngrid
	where the last equality follows from the additivity of the max-relative entropy, and $\rho_A$, $\omega_A$, and $\tau_R$ are arbitrary quantum states. As a consequence, for every quantum state $\rho_A$ and $\omega_A$ and every channel $\mM_{A \to B}$ and $\mN_{A \to B}$, the following holds
	\begin{equation}
		\begin{aligned}
			 & D_{\max}(\mM_{A \to B}\Vert \mN_{A \to B}) +  D_{\max}(\rho_A\Vert \omega_A) \\
			 & \qquad \geq D_{\max}(\mM_{A \to B}(\rho_A)\Vert \mN_{A \to B}(\omega_A)) .
		\end{aligned}
	\end{equation}
	By taking the infimum on both sides over all free channels $\mN_{A \to B}$ and free states $\omega_A$, one obtains:
	\begin{equation}\label{eq:inf-sum}
		\begin{aligned}
			 & \inf_{\mN_{A \to B} \in \mathfrak{F}(A \to B)} D_{\max}(\mM_{A \to B}\Vert \mN_{A \to B}) \\
			 & \qquad+ \inf_{\omega_A \in \mathfrak{F}(A)} D_{\max}(\rho_A\Vert \omega_A)\geq            \\
			 & \inf_{\substack{\mN_{A \to B} \in \mathfrak{F}(A \to B),                                  \\ \omega_A \in \mathfrak{F}(A)}} D_{\max}(\mM_{A \to B}(\rho_A)\Vert \mN_{A \to B}(\omega_A)),
		\end{aligned}
	\end{equation}
	where the infimum on the left-hand side splits because $D_{\max}(\mM_{A \to B}\Vert \mN_{A \to B})$ does not depend on $\omega_A$ and $D_{\max}(\rho_A\Vert \omega_A)$ does not depend on $\mN_{A \to B}$. Moreover, since both $\mN_{A \to B}$ and $\omega_A$ are free, then $\mN_{A \to B}(\omega_A)$ is free as well, and therefore
	\begin{equation}\label{eq:inf-free}
		\begin{aligned}
			 & \inf_{\substack{\mN_{A \to B} \in \mathfrak{F}(A \to B),                                      \\ \omega_A \in \mathfrak{F}(A)}} D_{\max}(\mM_{A \to B}(\rho_A)\Vert \mN_{A \to B}(\omega_A))\\
			 & \qquad \geq\inf_{\omega_B \in \mathfrak{F}(B)} D_{\max}(\mM_{A \to B}(\rho_A)\Vert \omega_B).
		\end{aligned}
	\end{equation}
	Eq.~\eqref{eq:inf-sum} and Eq.~\eqref{eq:inf-free} imply
	\begin{equation}
		\begin{aligned}
			 & \inf_{\mN_{A \to B} \in \mathfrak{F}(A \to B)} D_{\max}(\mM_{A \to B}\Vert \mN_{A \to B})+                                                                  \\
			 & \inf_{\omega_A \in \mathfrak{F}(A)} D_{\max}(\rho_A\Vert \omega_A)  \geq \inf_{\omega_B \in \mathfrak{F}(B)} D_{\max}(\mM_{A \to B}(\rho_A)\Vert \omega_B).
		\end{aligned}
	\end{equation}
	This is equivalent to
	\begin{equation}
		D^{\mathfrak{F}}_{\max}(\mM_{A \to B}) + D^{\mathfrak{F}}_{\max}(\rho_A) \geq D^{\mathfrak{F}}_{\max}(\mM_{A \to B}(\rho_A)).
	\end{equation}
\end{proof}
Since in a CDRT $D^{\mathfrak{F}}_{\max}$ is always finite and computable with a CLP, this inequality, valid for all resource theories, is particularly meaningful in the context of CDRTs. When we interpret $D^\mathfrak{F}$ as the value of a resource, it implies that the value of the output of a quantum operation applied to a quantum state is always less than the sum of the values of the operation and the input state. Equivalently, for every $\rho_A$, $\sigma_B$, and $\mM_{A \to B}$ such that $\sigma_B = \mM_{A \to B}(\rho)$, the cost of preparing $\sigma_B$ is always less or equal than the cost of preparing $\rho_A$ and then converting it to $\sigma_B$ with the channel $\mM_{A \to B}$. An inequality similar to the one in Eq.~\eqref{eq:chain-rule} based on the quantum relative entropy and its regularization was derived in Ref.~\cite{FFRS20}.

\section{Resource conversion in CDRTs}

The formalism of CDRTs provides significant insights into all quantities connected with resource conversion or manipulation via free operations. Indeed at the core of every resource theory there is a preorder of resources defined as $\rho_A \succ \sigma_B$ if there exists a free channel $\mathcal{M}_{A \to B}$ such that $\sigma_B = \mathcal{M}_{A \to B}(\rho_A)$. In a CDRT, this condition is equivalent to requiring the existence of a free states $\mu_{BA}$, such that $\tr_B \mu_{BA}=\frac{1}{d_A} \id_A$ and $\sigma_B = d_A \mu_{B \sep A'} * \rho_A$.

Consequently, many quantities or results involving free channels can be redefined in terms of the link product of free states. A first example is the complete family of monotones defined in Ref.~\cite{GS20a} for convex and closed resource theories, which in the case of a convex and closed CDRT can be expressed as
\begin{equation}
	f_{\tau_B}(\rho_A) = \max_{\substack{\omega_{BA} \in \mathfrak{F}(BA),\\d_A \tr_B\omega_{BA}=  \id_A}}\tr_{BA}[\omega_{BA}(\tau_B\otimes \rho_A^T)].
\end{equation}
Being a complete family of monotones, it satisfies $f_{\tau_B}(\rho_A) \geq f_{\tau_B}(\sigma_B)$ for all density matrices $\tau_B$ if and only if $\rho_A$ can be converted to $\sigma_B$ with a free operation.

As a second example, we present conversion distances in CDRTs. These distances estimate how close to a target state an initial state can be converted using only free operations. In a CDRT, they satisfy
\begin{equation}
	\begin{aligned}
		D(\rho_A \to \sigma_B) & = \inf_{\mathcal{M}_{A \to B}\in \mf(A \to B)} D(\sigma_B, \mathcal{M}_{A \to B}(\rho_A)) \\
		                       & = \inf_{\substack{\mu_{BA'} \in \mf(BA'),                                                 \\ d_{A'}\tr_B\mu_{BA'} = \id_{A'}}} D(\sigma_B, d_A \mu_{B \sep A'}*\rho_A),
	\end{aligned}
\end{equation}
where $D(\cdot, \cdot)$ is a function on quantum states that satisfies the data processing inequality~\cite{CG19}.

\section{Conclusions}
In this work, we identified a key property of the resource theories of magic states, imaginarity, SEP, NPT, and non-negativity of quantum amplitudes: a channel is considered free if and only if its suitably renormalized Choi matrix is a free state. We presented a new way to construct a resource theory from the set of free states based on the Choi isomorphism, and we derived which conditions on the free states are necessary and sufficient for constructing a mathematically consistent Choi-defined resource theory. Our interest is driven by the significant advantages that such a property guarantees. These advantages stem from the fact that every question in Choi-defined resource theories can be naturally framed as a question involving only free states, which often have a well-understood structure. Indeed, we introduce several resource quantifiers that can all be computed with CLPs if the set of free states is convex and closed. Moreover, we showed that in any CDRT, free operations are all CRNG operations. This provides a new constructive way to define CRNG operations when the set of free states satisfies the conditions of Theorem~\ref{th:cdrt}.

A natural future direction is to explore Choi-defined \emph{dynamical} resource theories~\cite{GS20a, ZW19, LY20}, i.e., resource theories for channels, where a superchannel~\cite{Gou19, CdAP08} is free if and only if its Choi matrix is the Choi matrix of a free channel. Notably, such property has already been considered in the resource theories of dynamical entanglement~\cite{GS20} and magic channels~\cite{SG22}, where it led to interesting results. The characterization of Choi-defined dynamical resource theories is a particularly challenging problem because the techniques employed in this work cannot be easily translated to superchannels. Indeed, there are some subtleties in the definition of the Choi matrix of a superchannel. For example, in Ref.~\cite{Gou19}, several different Choi matrices are constructed from the same superchannel, which differ in the order in which systems are considered. As in the case studied in this article, an ordering of systems is mandatory if one wants to construct resource theories from Choi matrices, as inputs and outputs play different roles, especially if we want such a construction to be generalizable, without ambiguity, to even higher-order maps~\cite{BP19, Per17}.
\medskip{}

\begin{acknowledgments}
	The authors are grateful to Matt Wilson for the several in-depth discussions on the subject of this article and to Gilad Gour and Nathaniel Johnston for the interesting conversations about this research. E.~Z.\ acknowledges support from the Eric Milner’s Graduate Scholarship, the Alberta Graduate Excellence Scholarship (AGES), and the Eyes High International Doctoral Scholarship. C.\ M.\ S.\ acknowledges the support of the Natural Sciences and Engineering Research Council of Canada (NSERC) through the Discovery Grant “The power of quantum resources” RGPIN-2022-03025 and the Discovery Launch Supplement DGECR-2022-00119.
\end{acknowledgments}

\onecolumngrid
\appendix
\section{The Link Product and Swap Tensor Product}\label{sec:prod}

A very useful operation between Choi matrices is the link product, defined as
\begin{equation}
	N_{C \sep B'}*M_{B \sep A'} = \tr_{B'B}[(N_{CB'} \otimes M_{BA'})(\id_C \otimes \Phi_{B'B} \otimes \id_A)].
\end{equation}
Indeed, if $M_{B \sep A'}$ and $N_{C \sep B'}$ are the Choi matrices of the quantum channels $\mathcal{M}_{A \to B}$ and $\mathcal{N}_{B \to C}$, respectively, then the matrix $T_{C \sep A'} = N_{C \sep B'}*M_{B \sep A'}$ is the Choi matrix of the quantum channel $\mathcal{T}_{A \to C} = \mathcal{N}_{B \to C} \circ \mathcal{M}_{A \to B}$~\cite{CdAP09}. It is worth noticing that if $\mathcal{M}_{A \to B}$ is a preparation channel, i.e., $A = \mathbb{C}$ and $\mathcal{M}_{\mathbb{C} \to B}(1) = \rho_B$, then $M_{B \sep \mathbb{C}'} = \rho_B$ and $N_{C \sep B'}*M_{B \sep \mathbb{C}'} = \mathcal{N}_{B \to C}(\rho_B)$.

The link product is the operation between Choi matrices associated with the sequential composition of channels. Now, we focus on parallel composition. Let $\mathcal{M}_{A \to B}$ and $\mathcal{N}_{C \to D}$ be quantum channels, and let $M_{B \sep A'}$ and $N_{D \sep C'}$ be the Choi matrices associated with them. Let $\mathcal{T}_{AC \to BD} = \mathcal{M}_{A \to B} \otimes \mathcal{N}_{C \to D}$. The Choi matrix of the tensor product of channels is often either overlooked or claimed to be equal to $M_{B \sep A'} \otimes N_{D \sep C'}$. The latter approach has its merits if the order of the systems is not relevant. However, when it is crucial to keep track of input and output systems, like in this work, we expect the Choi matrix of $\mathcal{T}_{AC \to BD}$ to be a matrix $T_{BD \sep A'C'}$ which is not equal to $M_{B \sep A'} \otimes N_{D \sep C'}$ because the order of the systems is not the same. We define here an operation $\boxtimes$, which we call \emph{swap tensor product}, that takes care of the order of systems:
\begin{equation}
	M_{B \sep A'} \boxtimes N_{D \sep C'} \coloneqq (\mathcal{I}_{B} \otimes \mathcal{S}_{A'D \to DA'} \otimes \mathcal{I}_{C'})(M_{B \sep A'} \otimes N_{D \sep C'}).
\end{equation}

We show now that $M_{B \sep A'} \boxtimes N_{D \sep C'}$ is the Choi matrix $T_{BD \sep A'C'}$ of $\mathcal{T}_{AC \to BD} = \mathcal{M}_{A \to B} \otimes \mathcal{N}_{C \to D}$.
\begin{equation}\label{eq:swap-alg}
	\begin{aligned}
		T_{BD \sep A'C'} & = (\mathcal{T}_{AC \to BD} \otimes \mathcal{I}_{A'C'})(\Phi_{ACA'C'})                                                                                                                                                       \\
		                 & =[(\mathcal{M}_{A \to B} \otimes \mathcal{N}_{C \to D} \otimes \mathcal{I}_{A'C'}) \circ (\mathcal{I}_{A} \otimes \mathcal{S}_{A'C \to CA'} \otimes \mathcal{I}_{C'})](\Phi_{AA'} \otimes \Phi_{CC'})                       \\
		                 & =[(\mathcal{I}_{B} \otimes \mathcal{S}_{A'D \to DA'} \otimes \mathcal{I}_{C'})\circ (\mathcal{M}_{A \to B} \otimes \mathcal{I}_{A'} \otimes \mathcal{N}_{C \to D} \otimes \mathcal{I}_{C'})](\Phi_{AA'} \otimes \Phi_{CC'}) \\
		                 & = (\mathcal{I}_{B} \otimes \mathcal{S}_{A'D \to DA'} \otimes \mathcal{I}_{C'})(M_{B \sep A'} \otimes N_{D \sep C'})                                                                                                         \\
		                 & =M_{B \sep A'} \boxtimes N_{D \sep C'}.
	\end{aligned}
\end{equation}

\section{Conic linear programs and semidefinite programs}\label{sec:clp}
In Section~\ref{sec:opt-cdrt}, we have shown that in a CDRT, an optimization problem over free channels can be expressed as an optimization problem over free states. That is,
\begin{equation}
	\max_{\mathcal{M}_{A \to B} \in \mathfrak{F}(A \to B)} f(\mathcal{M}_{A \to B})
\end{equation}
becomes
\begin{equation}
	\max_{\substack{\mu_{BA'} \in \mathfrak{F}(BA'),              \\d_{A'} \tr_B\mu_{BA'}=  \id_{A'}}} \tilde{f}(\mu_{BA'}).
\end{equation}
As a reminder, $f$ is a function from the set of quantum channels to the $\mathbb{R}$, $\mathfrak{F}(A \to B)$ and $\mathfrak{F}(BA')$ are the sets of free channels from $A$ to $B$ and free states on $BA'$, and $\tilde{f}$ is the composition of $f$ with the inverse Choi isomorphism defined in Eq.~\eqref{eq:choi-inv}, and it satisfies $\tilde{f}(\mu_{BA'}) = f(\mathcal{M}_{A \to B})$ whenever $\mu_{B \sep A'}$ is the renormalized Choi matrix of $\mathcal{M}_{A \to B}$.  As in the case of the max-relative entropy (Section~\ref{sec:rel-ent}), it is often possible to find simpler expressions for $\tilde{f}$ that do not rely on the inverse Choi isomorphism.

If the set of free states is convex and $\tilde{f}$ is linear, the optimization problem can be expressed as a conic linear program (CLP) or even as a semidefinite program (SDP). Sometimes, these optimization problems are hard to solve when the dimension of the systems involved is large, for example, in the case of separable entanglement~\cite{Gur03, HO23}. However, they are still useful for systems of low dimensions.

Here we present the techniques used to write CLPs and provide some examples. In this article, we use the notation presented in Ref.~\cite{Gou24}, which is particularly useful in quantum information. In our setting, a \emph{primal} conic linear program is an optimization problem that can be expressed as
\begin{equation}\label{eq:clp}
	\begin{aligned}
		\text{Find}       &  &  & \alpha\coloneqq \inf \tr[X H_1]         \\
		\text{subject to} &  &  & \mathcal{N}(X) - H_2 \in \mathfrak{K}_2 \\
		                  &  &  & X \in \mathfrak{K}_1
	\end{aligned}
\end{equation}
where $\mathfrak{K}_1$ is a closed convex cone in $V_1$, a subspace of the Hermitian matrix on a Hilbert space $\mathcal{H}_1$. Similarly, $\mathfrak{K}_2 \subseteq V_2 \subseteq \herm(\mathcal{H}_2)$. Moreover, $\mN: V_1 \to V_2$ is a liner map, and $H_1 \in V_1$, $H_2 \in V_2$. We say that $X \in \mathfrak{K}_1$ is a \emph{feasible} solution if $\mathcal{N}(X) - H_2 \in \mathfrak{K}_2$. If no feasible solution exists, then $\alpha$ is set to $+\infty$. A feasible solution is \emph{optimal} if $\tr[X H_1] = \alpha$.

The \emph{dual} of the conic linear program in Eq.~\eqref{eq:clp} is
\begin{equation}
	\begin{aligned}
		\text{Find}       &  &  & \beta\coloneqq \sup \tr[Y H_2]                    \\
		\text{subject to} &  &  & H_1 - \mathcal{N}^\dagger(Y) \in \mathfrak{K}^*_1 \\
		                  &  &  & Y \in \mathfrak{K}^*_2,
	\end{aligned}
\end{equation}
where $\mathfrak{K}^*$ is the dual cone of $\mathfrak{K}$, and $\mathcal{N}^\dagger$ is the adjoint of $\mathcal{N}$. As in the case above, $Y \in \mathfrak{K}_2^*$ is a dual feasible solution if $H_1 - \mathcal{N}^\dagger(Y) \in \mathfrak{K}_1^*$. If no dual feasible solution exists, then $\beta = -\infty$. A dual feasible solution is optimal if $\tr[YH_2]=\beta$. The dual CLP is often relevant because it provides a lower bound for the primal CLP, that is, $\alpha \geq \beta$. This relation is called weak duality. Strong duality happens if $\alpha = \beta$.

In this section, $V_1\subseteq \herm(\mathcal{H}_1)$ will always be the vector space spanned by block-diagonal Hermitian matrices, where the dimensions of the blocks are fixed. We will denote such matrices as $X = (X_1, \dots, X_n)$. Each of the matrices $X_i$ is a Hermitian matrix on an orthogonal subspace $A_i$ of $\mathcal{H}_1$. As a consequence, $V_1 \cong \oplus_{i =1}^n \herm{A_i}$. An analogous statement holds for $V_2$.

\subsection{CLPs for a Complete Family of Monotones in CDRTs}
We start with the complete family of monotones $f_{\tau_B}(\rho_A) = \max_{\substack{\omega_{BA} \in \mathfrak{F}(BA),\\d_A \tr_B\omega_{BA}=  \id_A}}\tr_{BA}[\omega_{BA}(\tau_B\otimes \rho_A^T)]$, where $\tau_B$ and $\rho_A$ are density matrices. To compute one of such monotones with a conic linear program, the first step is to define the cone generated by the free states.
\begin{equation}
	\mathfrak{K}_1 = \Set{(\lambda, \lambda \omega_{BA}) \mid \lambda \geq 0, \omega_{BA} \in \mathfrak{F}(BA)} \subseteq V_1 \cong \mathbb{R}\oplus \herm(BA).
\end{equation}
The conditions $\omega_{BA} \in \mathfrak{F}(BA)$ and $d_A \tr_B\omega_{BA}=  \id_A$ are equivalent to $(
	\lambda,\lambda \omega_{BA} ) \in \mathfrak{K}_1$ and $d_A \tr_B\lambda\omega_{BA} -  \lambda\id_A = 0$. The last condition is implemented in a CLP by defining $\mathcal{N}[(x,X_{BA})] = d_A \tr_BX_{BA} -  x\id_A $, $H_2 = 0$, and $\mathfrak{K}_2 = \Set{0_A} \subseteq V_2 = \herm(A)$. The conic linear program is
\begin{equation}\label{eq:mon-primal}
	\begin{aligned}
		\text{Find}       &  &  & -f_{\tau_B}(\rho_A)\coloneqq \min \tr[(x,X_{BA})(0,-\tau_B \otimes \rho^T_A)] \\
		\text{subject to} &  &  & \mathcal{N}[(x,X_{BA})] = d_A \tr_BX_{BA} -  x\id_A= 0_A                      \\
		                  &  &  & (x,X_{BA}) \in \mathfrak{K}_1.
	\end{aligned}
\end{equation}

Now, we turn our attention to the dual of the CLP above. Since $H_2= 0_A$ then $\beta$ (the result of the dual CLP) is $0$ if a dual feasible solution exists or $-\infty$ if no dual feasible solution exists. In the former case, $-f_{\tau_B}(\rho_A) = \alpha \geq \beta = 0$, and this would imply that $f_{\tau_B}(\rho_A) = 0$ because $f_{\tau_B}(\rho_A)$ is positive by definition. Interestingly, in every CDRT, we can exclude this case because $\frac{1}{d_Ad_B} \id_{BA}$ is free, and therefore
\begin{equation}
	f_{\tau_B}(\rho_A) = \max_{\substack{\omega_{BA} \in \mathfrak{F}(BA),\\d_A \tr_B\omega_{BA}=  \id_A}}\tr_{BA}[\omega_{BA}(\tau_B\otimes \rho_A^T)] \geq \frac{1}{d_Ad_B}\tr_{BA}[\tau_B\otimes \rho_A^T] = \frac{1}{d_Bd_A}
\end{equation}

This implies that in every CDRT $\beta = -\infty$, and weak duality does not provide a meaningful bound for $f_{\tau_B}(\rho_A)$. However, writing the dual of the CLP in Eq.~\eqref{eq:mon-primal} is still useful. The first step is to characterize the dual cones of $\mathfrak{K}_1$ and $\mathfrak{K}_2$:
\begin{equation}\label{eq:k1-star}
	\begin{aligned}
		\mathfrak{K}_1^* & = \Set{(x, X_{BA}) \in V_1 \mid \Braket{(x, X_{BA})|(\lambda, \lambda \omega_{BA})}_{HS} \geq 0,\, \forall \lambda \geq 0,\, \forall \omega_{BA} \in \mathfrak{F}_{BA}} \\
		                 & =\Set{(x, X_{BA}) \in V_1 \mid \Braket{(x, X_{BA})|(1, \omega_{BA})}_{HS} \geq 0,\, \forall \omega_{BA} \in \mathfrak{F}_{BA}},                                         \\
		\mathfrak{K}_2^* & = \Set{X_{A} \in V_2\mid \Braket{X_A|0_A}_{HS} \geq 0} = V_2,
	\end{aligned}
\end{equation}
where $\Braket{\cdot|\cdot}_{HS}$ is the Hilbert-Schmidt inner product defined as $\Braket{X|Y}_{HS} = \tr[X^\dagger Y]$, i.e., $\Braket{(x, X_{BA})|(y, Y_{BA})}_{HS} = xy + \tr_{BA}[X_{BA}Y_{BA}]$. $\mathcal{N}^\dagger : V_2 \to V_1$ is the unique map that satisfies
\begin{equation}\label{eq:adjoint}
	\Braket{(x, X_{BA})|\mathcal{N}^\dagger(X_A)}_{HS} = \Braket{\mathcal{N}[(x, X_{BA})]|X_A}_{HS}
\end{equation}
for all $(x, X_{BA}) \in V_1 \cong \mathbb{R} \oplus \herm(BA)$ and $X_A \in V_2 = \herm(A)$. We denote the components of $\mathcal{N}^\dagger$ with $\mathcal{N}^\dagger_1$ and $\mathcal{N}^\dagger_2$, that is, $\mathcal{N}^\dagger(X_A) = (\mathcal{N}^\dagger_1(X_A), \mathcal{N}^\dagger_2(X_A))$. To find the expression of $\mathcal{N}^\dagger$, we evaluate Eq.~\eqref{eq:adjoint} for each member of the basis $\Set{(1, 0_{BA})} \cup \Set{(0, \eta^k_B \otimes \eta^j_A)}$ of $V_1$, where $\Set{\eta^j_A}$ and $\Set{\eta^k_B}$ are orthonormal bases for $\herm(A)$ and $\herm(B)$, respectively. Moreover, we denote with $[N^\dagger_2(X_A)]^{kj}$, the components of $N^\dagger_2(X_A)$ in the $\Set{\eta^k_B \otimes \eta^j_A}$ basis, that is $\mathcal{N}^\dagger_2(X_A) = \sum_{j, k} [N^\dagger_2(X_A)]^{kj} \eta^k_B \otimes \eta^j_A$. Similarly, we denote with $X_A^j$ the components of $X_A$ in the $\Set{\eta^j_A}$ basis.
\begin{equation}
	\begin{aligned}
		 & \mathcal{N}_1^\dagger(X_A) &  & = \Braket{(1, 0_{BA})|(\mathcal{N}^\dagger_1(X_A), \mathcal{N}^\dagger_2(X_A))}_{HS}                    \\
		 &                            &  & = \Braket{(1, 0_{BA})|\mathcal{N}^\dagger(X_A)}_{HS}                                                    \\
		 &                            &  & = \Braket{\mathcal{N}[(1, 0_{BA})]|X_A}_{HS}                                                            \\
		 &                            &  & = -\Braket{\id_A|X_A}_{HS}                                                                              \\
		 &                            &  & = -\tr_A X_A,                                                                                           \\
		 & [N^\dagger_2(X_A)]^{kj}    &  & = \Braket{(0, \eta^k_B \otimes \eta^j_A)|(\mathcal{N}^\dagger_1(X_A), \mathcal{N}^\dagger_2(X_A))}_{HS} \\
		 &                            &  & =\Braket{(0, \eta^k_B \otimes \eta^j_A)|\mathcal{N}^\dagger(X_A)}_{HS}                                  \\
		 &                            &  & = \Braket{\mathcal{N}[(0, \eta^k_B \otimes \eta^j_A)]|X_A}_{HS}                                         \\
		 &                            &  & = \Braket{d_A \eta^j_A \tr_B\eta^k_B|X_A}_{HS}                                                          \\
		 &                            &  & =d_A X_A^j \tr_B \eta^k_B.
	\end{aligned}
\end{equation}
Therefore,
\begin{equation}
	\mathcal{N}^\dagger_2(X_A) = \sum_{j, k} [N^\dagger_2(X_A)]^{kj} \eta^k_B \otimes \eta^j_A = d_A \sum_i  (\tr_B\eta^k_B)\eta^k_B \otimes \sum_j X_A^j \eta^j_A = d_A  \id_B\otimes X_A,
\end{equation}
where $\sum_i (\tr_B\eta^k_B)\eta^k_B = \id_B$ because $\tr_B (\eta^k_B \id_B)$ are the components of $\id_B$ in the $\Set{\eta^k_B}$ basis.

Now, we can write the dual CLP as
\begin{equation}
	\begin{aligned}
		\text{Find}       &  &  & \beta\coloneqq\max 0                                                                                                     \\
		\text{subject to} &  &  & H_1 - \mathcal{N}^\dagger(X_A) = (0, -\tau_B \otimes \rho_A^T) - (-\tr_A X_A, d_A \id_B \otimes X_A)\in \mathfrak{K}^*_1 \\
		                  &  &  & X_A \in \herm(A),
	\end{aligned}
\end{equation}
where $\mathfrak{K}^*_1 = \Set{(x, X_{BA}) \in V_1 \mid \Braket{(x, X_{BA})|(1, \omega_{BA})}_{HS} \geq 0,\, \forall \omega_{BA} \in \mathfrak{F}_{BA}}$ (see Eq.~\eqref{eq:k1-star}). The condition $H_1 - \mathcal{N}^\dagger(X_A) \in \mathfrak{K}^*_1$ is equivalent to:
\begin{equation}
	\tr_A X_A - \tr_{BA}[(\tau_B \otimes \rho_A^T) \omega_{BA}] +d_A \tr_{BA}[(\id_B \otimes X_A)\omega_{BA}] \geq 0, \quad \forall \omega_{BA} \in \mathfrak{F}(BA).
\end{equation}
It is worth checking that no dual feasible solution exists for any CDRT. Indeed, in a CDRT, $\omega_{BA} = \frac{1}{d_A d_B} \id_{BA} \in \mathfrak{F}(A)$, and for all $X_A$ we have that
\begin{equation}
	\begin{aligned}
		 & \tr_A X_A - \tr_{BA}[(\tau_B \otimes \rho_A^T) \omega_{BA}] +d_A \tr_{BA}[(\id_B \otimes X_A)\omega_{BA}]           \\
		 & \qquad = \tr_{A}X_A -\frac{1}{d_A d_B} \tr_{BA}[\tau_B \otimes \rho_A^T] - \frac{1}{d_B}\tr_{BA}[\id_B \otimes X_A] \\
		 & \qquad = - \frac{1}{d_B d_A}.
	\end{aligned}
\end{equation}
Therefore, $H_1 - \mathcal{N}^\dagger(X_A)\notin \mathfrak{K}_1^*$ for all $X_A \in \herm(A)$, and no dual feasible solution exists.

\subsection{CLPs for the Max-Relative Entropy of a State}
With the same techniques used in the previous subsection, we write the conic linear program to compute ${D_{\max}^\mathfrak{F}(\rho_A)}$. First, we notice that
\begin{equation}
	2^{D_{\max}^\mathfrak{F}(\rho_A)} = \min_{\substack{\omega_A \in \mathfrak{F}(A)\\ \rho_A \leq \lambda \omega_A}} \lambda= \min_{\substack{\omega_A \in \mathfrak{F}(A)\\ \lambda \omega_A- \rho_A \geq 0}} \lambda.
\end{equation}
Second, we define the cone
\begin{equation}
	\mathfrak{K}_1 = \Set{(\lambda,\lambda \omega_A)\mid \lambda \geq 0, \omega_A \in \mathfrak{F}(A)} \subseteq V_1 \cong \mathbb{R} \oplus \herm(A),
\end{equation}
and the cone $\mathfrak{K}_2$ as the cone of positive semidefinite matrices on $A$, i.e., $\mathfrak{K}_2 = \herm_+(A) \subseteq V_2 = \herm(A)$. Lastly, $\mN[(x,X_A)] = X_A$ and $H_2 = \rho_A$. The conic linear program for the max relative entropy is
\begin{equation}
	\begin{aligned}
		\text{Find}       &  &  & 2^{D_{\max}^\mathfrak{F}(\rho_A)}\coloneqq \inf \tr[(1,0_A)(x,X_A)]  = \inf x \\
		\text{subject to} &  &  & \mN[(x,X_A)] - H_2 = X_A - \rho_A \geq 0,                                     \\
		                  &  &  & (x,X_A) \in \mathfrak{K}_1.
	\end{aligned}
\end{equation}

As before, to write the dual CLP, we start with the dual cones:
\begin{equation}\label{eq:k1-star-2}
	\begin{aligned}
		 & \mathfrak{K}_1^* = \Set{(x, X_A) \in V_1\mid \Braket{(x, X_A)|(1, \omega_A)}_{HS} \geq 0, \, \forall \omega_A \in \mathfrak{F}(A)}, \\
		 & \mathfrak{K}_2^* = (\herm_+(A))^* = \herm_+(A).
	\end{aligned}
\end{equation}

The components of the adjoint map are
\begin{equation}
	\begin{aligned}
		 & \mathcal{N}_1^\dagger(X_A) &  & = \Braket{(1, 0_{A})|(\mathcal{N}^\dagger_1(X_A), \mathcal{N}^\dagger_2(X_A))}_{HS}    \\
		 &                            &  & = \Braket{(1, 0_{A})|\mathcal{N}^\dagger(X_A)}_{HS}                                    \\
		 &                            &  & = \Braket{\mathcal{N}[(1, 0_A)]|X_A}_{HS}                                              \\
		 &                            &  & = -\Braket{0_A|X_A}_{HS}                                                               \\
		 &                            &  & = 0 ,                                                                                  \\
		 & [N^\dagger_2(X_A)]^{j}     &  & = \Braket{(0, \eta^j_A)|(\mathcal{N}^\dagger_1(X_A), \mathcal{N}^\dagger_2(X_A))}_{HS} \\
		 &                            &  & =\Braket{(0, \eta^j_A)|\mathcal{N}^\dagger(X_A)}_{HS}                                  \\
		 &                            &  & = \Braket{\mathcal{N}[(0, \eta^j_A)]|X_A}_{HS}                                         \\
		 &                            &  & = \Braket{\eta^j_A |X_A}_{HS}                                                          \\
		 &                            &  & =X_A^j,
	\end{aligned}
\end{equation}
where, as before, $\Set{\eta^j_A}$ is an orthonormal basis for $\herm(A)$. Thus, $\mathcal{N}^\dagger(X_A) = (0, X_A)$. The dual CLP is
\begin{equation}
	\begin{aligned}
		\text{Find}       &  &  & \beta\coloneqq \sup \tr[X_A\rho_A]                               \\
		\text{subject to} &  &  & H_1 - \mathcal{N}^\dagger(X_A) = (1,  -X_A) \in \mathfrak{K}^*_1 \\
		                  &  &  & X_A \geq 0,
	\end{aligned}
\end{equation}
where $\mathfrak{K}_1^* = \Set{(x, X_A) \in V_1\mid \Braket{(x, X_A)|(1, \omega_A)}_{HS} \geq 0, \, \forall \omega_A \in \mathfrak{F}(A)}$ (see Eq.~\eqref{eq:k1-star-2}). The condition $(1, -X_A) \in \mathfrak{K}_1^*$ is equivalent to $\tr_A[X_A \omega_A] \leq 1$ for all $\omega_A \in \mathfrak{F}(A)$. In this case, a feasible solution exists: The zero matrix $0_A$ is a feasible solution.

\subsection{CLPs for the Max-Relative Entropy of a Channel in CDRTs}
In this section, we compute the CLP for
\begin{equation}
	2^{D_{\max}^{\mathfrak{F}}(\mM_{A \to B})} = \min_{\substack{\omega_{BA'} \in \mf(BA'),\\d_{A'} \tr_B \omega_{BA'} = \id_{A'},\\ \mu_{BA'} \leq \lambda\omega_{BA'}}} \lambda,
\end{equation}
where $\mu_{B\sep A'}$ is the renormalized Choi matrix of $\mathcal{M}_{A \to B}$. Once again, we define
\begin{equation}
	\mathfrak{K}_1 = \Set{(\lambda,\lambda \omega_{BA'})\mid \lambda \geq 0, \omega_{BA'} \in \mathfrak{F}(BA')} \subseteq V_1 \cong \mathbb{R} \oplus \herm(BA').
\end{equation}
The linear transformation $\mN: \mathbb{R} \oplus \herm(BA') \to \herm(BA') \oplus \herm(A')$ is defined as the map that sends $(x,X_{BA'})$ into $(X_{BA'}, d_{A'}\tr_B X_{BA'} - x\id_{A'})$. Moreover, we define $H_2 = (\mu_{BA'},0_{A'})$, and

\begin{equation}
	\mathfrak{K}_2 = \Set{(X_{BA'},0_{A'})\mid X_{BA'} \geq 0} \subseteq V_2 \cong \herm(BA') \oplus \herm(A').
\end{equation}
The conic linear program is
\begin{equation}
	\begin{aligned}
		\text{Find}       &  &  & 2^{D_{\max}^\mathfrak{F}(\mM_{A \to B})}\coloneqq \inf \tr[(1,0_{BA'})(x,X_{BA'})]= \inf x              \\
		\text{subject to} &  &  & \mathcal{N}[(x,X_{BA'})] - H_2=(X_{BA'} - \mu_{BA'},d_{A'}\tr_BX_{BA'} - x\id_{A'}) \in \mathfrak{K}_2, \\
		                  &  &  & (x,X_{BA'}) \in \mathfrak{K}_1.
	\end{aligned}
\end{equation}

To write the dual CLP, we first need the dual cones:
\begin{equation}\label{eq:k1-star-3}
	\begin{aligned}
		\mathfrak{K}_1^* & = \Set{(x, X_{BA'}) \in V_1\mid \Braket{(x|X_{BA'})|(1, \omega_{BA'})}_{HS} \geq 0, \, \forall \omega_{BA'} \in \mathfrak{F}(BA')}, \\
		\mathfrak{K}_2^* & = \Set{(X_{BA'}, Y_{A'}) \in V_2 \mid \Braket{(X_{BA'}, Y_{A'})|(Z_{BA'}, 0_{A'})}_{HS} \geq 0,\,\forall Z_{BA'} \geq 0}            \\
		                 & =\Set{(X_{BA'}, Y_{A'}) \in V_2 \mid X_{BA'} \geq 0}.
	\end{aligned}
\end{equation}'

With the techniques introduced in the previous subsections, we compute the components of $\mathcal{N}^\dagger$.
\begin{equation}
	\begin{aligned}
		 & \mathcal{N}_1^\dagger[(X_{BA'}, Y_{A'})] &  & = \Braket{(1, 0_{BA'})|(\mathcal{N}^\dagger_1[(X_{BA'}, Y_{A'})], \mathcal{N}^\dagger_2[(X_{BA'}, Y_{A'})])}_{HS}                      \\
		 &                                          &  & = \Braket{(1, 0_{BA'})|\mathcal{N}^\dagger[(X_{BA'}, Y_{A'})]}_{HS}                                                                    \\
		 &                                          &  & = \Braket{\mathcal{N}[(1, 0_{BA'})]|(X_{BA'}, Y_{A'})}_{HS}                                                                            \\
		 &                                          &  & = -\Braket{(0_{BA'}, \id_{A'})|(X_{BA'}, Y_{A'})}_{HS}                                                                                 \\
		 &                                          &  & = -\tr_{A'} Y_{A'}    ,                                                                                                                \\
		 & \{N^\dagger_2[(X_{BA'}, Y_{A'})]\}^{kj}  &  & = \Braket{(0, \eta^k_B \otimes \eta^j_{A'})|(\mathcal{N}^\dagger_1[(X_{BA'}, Y_{A'})], \mathcal{N}^\dagger_2[(X_{BA'}, Y_{A'})])}_{HS} \\
		 &                                          &  & =\Braket{(0, \eta^k_B \otimes \eta^j_{A'})|\mathcal{N}^\dagger[(X_{BA'}, Y_{A'})]}_{HS}                                                \\
		 &                                          &  & = \Braket{\mathcal{N}[(0, \eta^k_B \otimes \eta^j_{A'})]|(X_{BA'}, Y_{A'})}_{HS}                                                       \\
		 &                                          &  & = \Braket{(\eta^k_B \otimes \eta^j_A, d_{A'} (\tr_B\eta^k_B)  \eta^j_{A'}) |(X_{BA'}, Y_{A'})}_{HS}                                    \\
		 &                                          &  & =X_{BA'}^{kj} + d_{A'} Y_{A'}^j(\tr_B \eta^k_B).
	\end{aligned}
\end{equation}
Thus, $N^\dagger[(X_{BA'}, Y_{A'})] = (-\tr_{A'} Y_{A'}, X_{BA'} + d_{A'} \id_B \otimes Y_{A'})$. The dual linear program is
\begin{equation}
	\begin{aligned}
		\text{Find}       &  &  & \beta\coloneqq \sup \tr[(X_{BA'}, Y_{A'})(\mu_{BA'}, 0_{A'})]  = \tr_{BA'}[X_{BA'} \mu_{BA'}]                                                \\
		\text{subject to} &  &  & H_1 - \mathcal{N}^\dagger[(X_{BA'}, Y_{A'})] = (1, 0_{BA'}) - (-\tr_{A'} Y_{A'}, X_{BA'} + d_{A'} \id_B \otimes Y_{A'}) \in \mathfrak{K}^*_1 \\
		                  &  &  & X_{BA'} \geq 0,\, Y_{A'} \in \herm(A'),
	\end{aligned}
\end{equation}
where $\mathfrak{K}_1^* = \Set{(x, X_{BA'}) \in V_1\mid \Braket{(x|X_{BA'})|(1, \omega_{BA'})}_{HS} \geq 0, \, \forall \omega_{BA'} \in \mathfrak{F}(BA')}$ (see Eq.~\eqref{eq:k1-star-3}). The condition $(1 + \tr_{A'} Y_{A'}, -X_{BA'} - d_{A'} \id_B \otimes Y_{A'}) \in \mathfrak{K}_1^*$ is equivalent to $\tr_{BA'}[X_{BA'}\omega_{BA'}] + d_{A'}\tr_{BA'}[(\id_B \otimes Y_{A'})\omega_{BA'}] \leq 1 + \tr_{A'} Y_{A'}$ for all $\omega_{BA'} \in \mathfrak{F}(BA')$. It is straightforward to see that $(0_{BA'}, 0_{A'})$ is a dual feasible solution.

\section{A Diagrammatic Proof of Theorem~\ref{th:cdrt}}
String diagrams have been frequently used in quantum information~\cite{CK10, CHK14, CDKW15, CFS16, KU17, KU19, SSC21}, and they provide a useful representation of the Choi isomorphism, from which many of its properties follow naturally. Indeed, we first proved all our results diagrammatically, and later we converted the diagrammatic proofs into algebraic ones. In this section, we introduce the string diagram formalism and provide a diagrammatic proof of Theorem~\ref{th:cdrt}.

\subsection{A Diagrammatic Approach to Quantum Information}
Quantum information processes can be described using string diagrams~\cite{CK10, CHK14, CDKW15, CFS16, KU17, KU19, SSC21}. These diagrams are not only a graphical representation of quantum processes but can also be used for rigorous proofs. Indeed, one can associate an algebraic expression with every diagram, and a diagram with every algebraic expression. Moreover, diagrams can be manipulated as algebraic expressions.

The fundamental component of string diagrams is the diagrammatic symbol for quantum channels. A channel $\mM_{A \to B}$ is represented in a diagram as a box:
\begin{center}
	\begin{tikzpicture}
		\node[shape = rectangle, minimum height=20pt, draw] (channel) {$\mathcal{M}$};
		\coordinate[left=of channel] (A);
		\coordinate[right=of channel] (B) {};
		\draw [-] (channel) to node [above] {$A$} (A) ;
		\draw [-] (channel) to node [above] {$B$} (B) ;
	\end{tikzpicture}\,.
\end{center}
This symbol represents the channel acting on an input quantum system $A$ and producing an output quantum system $B$. When a channel has $n$ input
systems and $m$ output systems it is represented with $n$ input wires and $m$ output wires. There are two special types of channels. The first is the preparation channel, which has no quantum input, represented with no wire, and outputs a state $\rho_A$. The symbol for this channel is
\begin{center}
	\begin{tikzpicture}
		\node[shape = rounded rectangle, rounded rectangle right arc = none, minimum height = 20, draw] (state) {$\rho$};
		\coordinate[right=of state] (end);
		\draw [-] (state) to node [above] {$A$} (end) ;
	\end{tikzpicture}.
\end{center}
Since there is a one-to-one correspondence between preparation channels and quantum states, we say that this is the symbol for the state $\rho_A$. With a slight abuse of notation, we will use this symbol to represent positive semi-definite matrices too. However, these two cases can easily be distinguished because we denote states with lowercase Greek letters, e.g., $\rho$, and matrices with capital Latin letters, e.g., $M$.

The second special type of channel is associated with positive operator-valued measurements, or POVMs (see, e.g., Ref.~\cite{NC10} for details). There is no quantum output in this case, so the channel is represented without the output wire. When $E_A$ is an element of a POVM on $A$, also known as effect, i.e., $0 \leq E \leq \id_A$, we represent it with the symbol
\begin{center}
	\begin{tikzpicture}
		\coordinate (start);
		\node[right=of start,shape = rounded rectangle, rounded rectangle left arc = none, minimum height = 20, draw] (state) {$E$};
		\draw [-] (start) to node [above] {$A$} (state) ;
	\end{tikzpicture}.
\end{center}

From these building blocks, one can construct more complex diagrams by connecting a symbol's output wire with another's input wire. For example, by connecting two channels in sequence, one obtains
\begin{equation}
	\begin{tikzpicture}[baseline=(channel.base)]
		\node[shape = rectangle, minimum height=20pt, draw] (channel) {$\mathcal{M}$};
		\node[right= of channel, shape = rectangle, minimum height=20pt, draw] (channelN) {$\mathcal{N}$};
		\coordinate[left=of channel] (A);
		\coordinate[right=of channelN] (C) {};
		\draw [-] (channel) to node [above] {$A$} (A) ;
		\draw [-] (channelN) to node [above] {$C$} (C) ;
		\draw [-] (channel) to node [above] {$B$} (channelN) ;
	\end{tikzpicture} = \mN_{B \to C} \circ \mM_{A \to B}\, .
\end{equation}
The special case when the first channel is a state is represented as
\begin{equation}
	\begin{tikzpicture}[baseline=(state.base)]
		\node[shape = rounded rectangle, rounded rectangle right arc = none, minimum height = 20, draw] (state) {$\rho$};
		\node[right= of state, shape = rectangle, minimum height=20pt, draw] (channel) {$\mathcal{M}$};
		\coordinate[right=of channel] (B) {};
		\draw [-] (state) to node [above] {$A$} (channel) ;
		\draw [-] (channel) to node [above] {$B$} (B) ;
	\end{tikzpicture} = \mM_{A \to B}(\rho_A)\, .
\end{equation}

On the other hand, an element of a POVM $E_A$ maps a quantum state $\rho_A$ into $\tr_A(E_A\rho_A)$, i.e., the probability of $E_A$ occurring on $\rho_A$, and this is the algebraic expression associated with the diagram below
\begin{equation}
	\begin{tikzpicture}[baseline=(state.base)]
		\node[shape = rounded rectangle, rounded rectangle right arc = none, minimum height = 20, draw] (state) {$\rho$};
		\node[right=of state, shape = rounded rectangle, rounded rectangle left arc = none, minimum height = 20, draw] (effect) {$E$};
		\draw [-] (state) to node [above] {$A$} (effect) ;
	\end{tikzpicture} = \tr_A(E_A \rho_A).
\end{equation}
More generally, an effect $E_A$ could act on only a part of a state $\rho_{AB}$: In this case, the diagram and the associated algebraic expression are
\begin{equation}\label{eq:par-eff}
	\begin{tikzpicture}[baseline = (state.base)]
		\node[shape = rounded rectangle, rounded rectangle right arc = none, minimum height=40pt, draw] (state) {$\rho$};
		\node[right=of $(state.north east)!0.2!(state.south east)$, shape = rounded rectangle, rounded rectangle left arc = none, minimum height = 20, draw] (effect) {$E$};
		\coordinate[right=of $(state.north east)!0.8!(state.south east)$]  (PA) {};
		\draw [-] ($(state.north east)!0.2!(state.south east)$) to node [above] {$A$} (effect) ;
		\draw [-] ($(state.north east)!0.8!(state.south east)$) to node [above] {$B$} (PA) ;
	\end{tikzpicture} = \tr_A[\rho_{AB}(E_A \otimes \id _B)].
\end{equation}
Therefore, every time an effect is composed with another diagram, the algebraic expression associated with it contains the trace over the common systems.

The expression above allows us to introduce one of the benefits of string diagrams. The diagram in Eq.~\eqref{eq:par-eff} shows only one output wire. Therefore, it can be simplified and written as the diagram of a (possibly unnormalized) state:
\begin{equation}
	\begin{tikzpicture}[baseline = (state.base)]
		\node[shape = rounded rectangle, rounded rectangle right arc = none, minimum height=40pt, draw] (state) {$\rho$};
		\node[shape = rounded rectangle, rounded rectangle right arc = none, minimum height=80pt, minimum width= 120pt, red, dashed, draw] at (state.east){};
		\node[right=of $(state.north east)!0.2!(state.south east)$, shape = rounded rectangle, rounded rectangle left arc = none, minimum height = 20, draw] (effect) {$E$};
		\coordinate[right=96pt of $(state.north east)!0.8!(state.south east)$]  (PA) {};
		\draw [-] ($(state.north east)!0.2!(state.south east)$) to node [above] {$A$} (effect) ;
		\draw [-] ($(state.north east)!0.8!(state.south east)$) to node [above] {$B$} (PA) ;
	\end{tikzpicture} = \begin{tikzpicture}[baseline = (state.base)]
		\node[shape = rounded rectangle, rounded rectangle right arc = none, minimum height = 40, draw] (state) {$\tr_A[\rho_{AB}(E_A \otimes \id _B)]$};
		\coordinate[right=of state] (end);
		\draw [-] (state) to node [above] {$B$} (end) ;
	\end{tikzpicture}.
\end{equation}

The tensor product of states, channels, and effects, also known as parallel composition, is represented by writing one symbol below the other, e.g.,
\begin{equation}
	\begin{tikzpicture}[baseline = (channel2.north)]
		\node[shape = rectangle, minimum height=20pt, draw] (channel) {$\mathcal{M}$};
		\coordinate[left=of channel] (A);
		\coordinate[right=of channel] (B) {};
		\draw [-] (channel) to node [above] {$A$} (A) ;
		\draw [-] (channel) to node [above] {$B$} (B) ;
		\node[below = 6pt of channel, shape = rectangle, minimum height=20pt, draw] (channel2) {$\mathcal{N}$};
		\coordinate[left=of channel2] (C);
		\coordinate[right=of channel2] (D) {};
		\draw [-] (channel2) to node [above] {$C$} (C) ;
		\draw [-] (channel2) to node [above] {$D$} (D) ;
	\end{tikzpicture} = \mM_{A \to B} \otimes \mN_{C \to D}.
\end{equation}

It is useful to denote two common channels with special diagrams. The identity channel $\mI_A$ on $A$, and the swap channel $\mS_{AB \to BA}$ from $AB$ to $BA$ are represented as
\begin{equation}
	\begin{tikzpicture}[baseline= (PA)]
		\coordinate (PA) {};
		\coordinate[right = of PA] (A);
		\draw [-] (PA) to node [above] {$A$} (A);
	\end{tikzpicture} = \mI_A \, , \quad
	\begin{tikzpicture}[baseline=($(PA')!0.5!(PB)$)]
		\node (PA') {$A$};
		\node [below= 12pt of PA'](PB) {$B$};
		\node[right =48pt of PA'] (endB) {$B$};
		\node[right = 48ptof PB](endA) {$A$};
		\draw [-] (PA') to [in = 180, out = 0, distance = 24 pt] (endA);
		\draw [-] (PB) to [in = 180, out = 0, distance = 24 pt] (endB);
	\end{tikzpicture} = \mS_{AB \to BA}.
\end{equation}
The choice of these symbols is intuitive, in that they convey that nothing is done on system $A$ in the former case, and that the systems $A$ and $B$ are swapped in the latter case.

Of particular interest is the symbol for the Choi state $\Phi_{A A'} = \sum_{x, y} \KBra{xx}{yy}_{AA'}$, where $\Set{\Ket{x}}_{A}$ is a fixed orthonormal basis for $A$ and $A'$ is a copy of $A$:
\begin{center}
	\begin{tikzpicture}
		\coordinate (PA) {};
		\coordinate[below=24 pt of PA] (PA');
		\coordinate[right = of PA] (A);
		\coordinate[right = of PA'](A');
		\draw [-] (PA) to [in = 180, out =180, distance = 24 pt] (PA');
		\draw [-] (PA) to node [above] {$A$} (A);
		\draw [-] (PA') to node [below] {$A'$} (A');
	\end{tikzpicture}\,.
\end{center}
It is particularly useful in proofs to differentiate $A$ and $A'$: This ensures that all the systems are properly accounted for. However, being $A'$ a copy of $A$, we will use either $\Phi_{A A'}$ or $\Phi_{A'A}$ depending on the need. In case of a bipartite system $AB$, the Choi state is $\sum_{a, b, a', b'}\KBra{abab}{a'b'a'b'}_{ABA'B'}$, where $\Set{\Ket{a}_A}$ and $\Set{\Ket{b}_B}$ are orthonormal basis for $A$ and $B$, respectively. This state can be rewritten as
\begin{equation}
	\begin{aligned}
		\Phi_{ABA'B'} & = \sum_{a, b,a'b'}\KBra{abab}{a'b'a'b'}_{ABA'B'}                                                                                                                \\
		              & = \sum_{a,b,a',b'} \KBra{a}{a'}_A \otimes \KBra{b}{b'}_B\otimes \KBra{a}{a'}_{A'}\otimes \KBra{b}{b'}_{B'}                                                      \\
		              & = \sum_{a,b, a',b'}(\mI_A\otimes  \mS_{A'B \to BA'}\otimes \mI_{B'})(\KBra{a}{a'}_A \otimes \KBra{a}{a'}_{A'}\otimes \KBra{b}{b'}_{B}\otimes \KBra{b}{b'}_{B'}) \\
		              & = (\mI_A\otimes  \mS_{A'B \to BA'}\otimes \mI_{B'})(\Phi_{AA'} \otimes \Phi_{BB'}).
	\end{aligned}
\end{equation}
This is easily represented diagrammatically as
\begin{equation}\label{eq:tchoi}
	\begin{tikzpicture}[baseline=($(A)!0.5!(A')$)]
		\coordinate (PA) {};
		\coordinate[below=24 pt of PA] (PA');
		\coordinate[right = of PA] (A);
		\coordinate[right = of PA'](A');
		\draw [-] (PA) to [in = 180, out =180, distance = 24 pt] (PA');
		\draw [-] (PA) to node [above] {$AB$} (A);
		\draw [-] (PA') to node [below] {$A'B'$} (A');
	\end{tikzpicture} = \begin{tikzpicture}[baseline=($(PA')!0.5!(PB)$)]
		\coordinate (PA) {};
		\coordinate[below=24 pt of PA] (PA');
		\coordinate[right = 48pt of PA] (A);
		\draw [-] (PA) to [in = 180, out =180, distance = 24 pt] (PA');
		\draw [-] (PA) to node [above] {$A$} (A);
		\coordinate [below= 24pt of PA'](PB);
		\coordinate[below=24 pt of PB] (PB');
		\coordinate[right = 48pt of PB'](B');
		\draw [-] (PB) to [in = 180, out =180, distance = 24 pt] (PB');
		\draw [-] (PB') to node [below] {$B'$} (B');
		\coordinate[right =48pt of PA'] (endB);
		\coordinate[right = 48ptof PB](endA);
		\draw [-] (PA') to [in = 180, out = 0, distance = 24 pt] node [below, near end] {$A$} (endA);
		\draw [-] (PB) to [in = 180, out = 0, distance = 24 pt] node [above, near end] {$B$} (endB);
	\end{tikzpicture}= \begin{tikzpicture}[baseline=($(A2)!0.5!(A3)$)]
		\coordinate (A1);
		\coordinate [below=24 pt of A1](A2);
		\coordinate [below=24 pt of A2](A3);
		\coordinate [below=24 pt of A3](A4);
		\coordinate [right=24 pt of A1](B1);
		\coordinate [right=24 pt of A2](B2);
		\coordinate [right=24 pt of A3](B3);
		\coordinate [right=24 pt of A4](B4);
		\draw[-] (A1) to node [above] {$A$} (B1);
		\draw[-] (A2) to node [above] {$B$} (B2);
		\draw[-] (A3) to node [above] {$A'$} (B3);
		\draw[-] (A4) to node [above] {$B'$} (B4);
		\draw [-] (A1) to [in = 180, out =180, distance = 48 pt] (A3);
		\draw [-] (A2) to [in = 180, out =180, distance = 48 pt] (A4);
	\end{tikzpicture} \,.
\end{equation}

To understand the choice of the symbol associated with the Choi state we need first to introduce the Choi effect $\Phi_{AA'}$, which has the same matrix as the Choi state and acts on a state $\rho_A$ as $\tr_A[\Phi_{AA'} (\rho_A \otimes \id_{A'})]$. The action on a state $\rho_{A'}$ is analogous. The symbol for the Choi effect is
\begin{center}
	\begin{tikzpicture}[baseline=($(PA)!0.5!(PA')$)]
		\coordinate (PA) {};
		\coordinate[below=24 pt of PA] (PA');
		\coordinate[right = of PA] (A);
		\coordinate[right = of PA'](A');
		\draw [-] (A) to [in = 0, out =0, distance = 24 pt] (A');
		\draw [-] (PA) to node [above] {$A$} (A);
		\draw [-] (PA') to node [below] {$A'$} (A');
	\end{tikzpicture}\,.
\end{center}

Two useful identities involving the Choi state and effect are (see, e.g., Refs~\cite{CK10, CHK14, CDKW15, CFS16, KU17, KU19, SSC21})
\begin{equation}\label{eq:snake}
	\begin{tikzpicture} [baseline= (PA')]
		\coordinate (PA) {};
		\coordinate[below=24 pt of PA] (PA');
		\coordinate[below=24 pt of PA'] (PA'');
		\coordinate[right = of PA] (A);
		\coordinate[right = of PA'](A');
		\coordinate[right = of PA''](A'');
		\draw [-] (A) to [in = 0, out =0, distance = 24 pt] (A');
		\draw [-] (PA') to [in = 180, out =180, distance = 24 pt] (PA'');
		\draw [-] (PA) to node [above] {$A$} (A);
		\draw [-] (PA') to node [above] {$A'$} (A');
		\draw [-] (PA'') to node [below] {$A$} (A'');
	\end{tikzpicture} = \begin{tikzpicture}[baseline= (PA)]
		\coordinate (PA) {};
		\coordinate[right = of PA] (A);
		\draw [-] (PA) to node [above] {$A$} (A);
	\end{tikzpicture} = \begin{tikzpicture}[baseline=(PA')]
		\coordinate (PA) {};
		\coordinate[below=24 pt of PA] (PA');
		\coordinate[below=24 pt of PA'] (PA'');
		\coordinate[right = of PA] (A);
		\coordinate[right = of PA'](A');
		\coordinate[right = of PA''](A'');
		\draw [-] (PA) to [in = 180, out =180, distance = 24 pt] (PA');
		\draw [-] (A') to [in = 0, out =0, distance = 24 pt] (A'');
		\draw [-] (PA) to node [above] {$A$} (A);
		\draw [-] (PA') to node [above] {$A'$} (A');
		\draw [-] (PA'') to node [below] {$A$} (A'');
	\end{tikzpicture}\, .
\end{equation}
These identities are called snake identities~\cite{CK17} and motivate the choice of the symbol for the Choi state and effect. Indeed, by connecting a Choi state to a Choi effect, one obtains a plain wire, i.e., the identity channel. To prove the snake identities, we use a test state $\rho_A$ and, to track all systems properly, we change the system in the output wire to $A''$, that is, we prove
\begin{equation}
	\begin{tikzpicture} [baseline= (PA')]
		\node[shape = rounded rectangle, rounded rectangle right arc = none, minimum height = 20, draw] (PA) {$\rho$};
		\coordinate[right = of PA] (A);
		\coordinate[below = of A](A');
		\coordinate[below = of A'](A'');
		\coordinate[left=24 pt of A'] (PA');
		\coordinate[left=24 pt of A''] (PA'');
		\draw [-] (A) to [in = 0, out =0, distance = 24 pt] (A');
		\draw [-] (PA') to [in = 180, out =180, distance = 24 pt] (PA'');
		\draw [-] (PA) to node [above] {$A$} (A);
		\draw [-] (PA') to node [above] {$A'$} (A');
		\draw [-] (PA'') to node [below] {$A''$} (A'');
	\end{tikzpicture} = \begin{tikzpicture}[baseline= (PA)]
		\node[shape = rounded rectangle, rounded rectangle right arc = none, minimum height = 20, draw] (PA) {$\rho$};
		\coordinate[right = of PA] (A);
		\draw [-] (PA) to node [above] {$A''$} (A);
	\end{tikzpicture} = \begin{tikzpicture}[baseline=(PA')]
		\node[shape = rounded rectangle, rounded rectangle right arc = none, minimum height = 20, draw] (PA) {$\rho$};
		\coordinate[right = of PA] (A);
		\coordinate[above = of A](A');
		\coordinate[above = of A'](A'');
		\coordinate[left=24 pt of A'] (PA');
		\coordinate[left=24 pt of A''] (PA'');
		\draw [-] (A) to [in = 0, out =0, distance = 24 pt] (A');
		\draw [-] (PA') to [in = 180, out =180, distance = 24 pt] (PA'');
		\draw [-] (PA) to node [below] {$A$} (A);
		\draw [-] (PA') to node [above] {$A'$} (A');
		\draw [-] (PA'') to node [above] {$A''$} (A'');
	\end{tikzpicture}\, .
\end{equation}

We start with the diagram on the left
\begin{equation}
	\begin{aligned}
		\tr_{AA'}[(\Phi_{AA'} \otimes \id_{A''})(\rho_A \otimes \Phi_{A'A''})] & =
		\sum_{x,y,a,b}\tr_{AA'}[(\KBra{xx}{yy}_{AA'} \otimes \id_{A''})(\rho_A \otimes \KBra{aa}{bb}_{A'A''})]                                                       \\
		                                                                       & = \sum_{x, y,a,b}\Braket{yy|(\rho \otimes \KBra{a}{b})|xx}_{AA'}  \KBra{a}{b}_{A''} \\
		                                                                       & = \sum_{x,y,a,b} \delta_{b, x}\delta_{a,y} \Braket{y|\rho|x}_{A}  \KBra{a}{b}_{A''} \\
		                                                                       & = \sum_{a,b} \Braket{a|\rho|b}_{A}  \KBra{a}{b}_{A''}                               \\
		                                                                       & = \rho_{A''}
	\end{aligned}
\end{equation}

The proof of the right identity is analogous. This is an example of the simplicity of the diagrammatic approach in comparison to the algebraic expression. In many of the upcoming proofs, instead of writing multiple lines of algebraic simplifications, we will simply `yank' a wire~\cite{Coe10, CP09}. It is important to notice again that this simplification does not reduce the rigor of a proof; it is equivalent to an algebraic proof.

Other useful diagrammatic identities involving the Choi state are the following. If we apply an element $E = \sum_{a, b} E_{a,b} \KBra{a}{b}_{A}$ of a POVM to one end of the Choi state, we obtain
\begin{equation}
	\begin{aligned}
		\tr_{A}[\Phi_{A A'} (E_A \otimes \id_{A'})] & =\sum_{x, y, a, b} E_{a, b} \tr_{A}[\KBra{xx}{yy}_{A A'} (\KBra{a}{b}_A \otimes \id_{A'})] \\
		                                            & =\sum_{x, y, a, b} E_{a, b} \Braket{y|(\KBra{a}{b})|x}_A \KBra{x}{y}_{A'}                  \\
		                                            & = \sum_{a, b}E_{a,b} \KBra{b}{a}_{A'} = E^T_{A'}.
	\end{aligned}
\end{equation}
Therefore, the Choi state maps an element $E$ of a POVM into the unnormalized state $E^T$. This is represented diagrammatically as
\begin{equation}\label{eq:t}
	\begin{tikzpicture}[baseline=($(start)!0.5!(end)$)]
		\coordinate (start);
		\node[right=of start,shape = rounded rectangle, rounded rectangle left arc = none, minimum height = 20, draw] (state) {$E$};
		\draw [-] (start) to node [above] {$A$} (state) ;
		\coordinate[below=24 pt of start] (end);
		\coordinate[right=80pt of end] (end2);
		\draw [-] (start) to [in = 180, out =180, distance = 24 pt] (end);
		\draw[-] (end) to node[below]{$A'$} (end2);
		\node[shape = rounded rectangle, rounded rectangle right arc = none, minimum height=80pt, minimum width= 120pt, red, dashed, draw] at ($(start)!0.5!(end)$){};
	\end{tikzpicture} =\begin{tikzpicture}[baseline=(state.base)]
		\node[shape = rounded rectangle, rounded rectangle right arc = none, minimum height = 20,  draw] (state) {$E^T$};
		\coordinate[right=of state] (end);
		\draw [-] (state) to node [above] {$A'$} (end) ;
	\end{tikzpicture} = \begin{tikzpicture}[baseline=($(start)!0.5!(end)$)]
		\coordinate (start);
		\node[right=of start,shape = rounded rectangle, rounded rectangle left arc = none, minimum height = 20, draw] (state) {$E$};
		\draw [-] (start) to node [below] {$A$} (state) ;
		\coordinate[above=24 pt of start] (end);
		\coordinate[right=80pt of end] (end2);
		\draw [-] (start) to [in = 180, out =180, distance = 24 pt] (end);
		\draw[-] (end) to node[above]{$A'$} (end2);
		\node[shape = rounded rectangle, rounded rectangle right arc = none, minimum height=80pt, minimum width= 120pt, red, dashed, draw] at ($(start)!0.5!(end)$){};
	\end{tikzpicture}\,.
\end{equation}
With a similar proof, one obtains
\begin{equation}\label{eq:t2}
	\begin{tikzpicture}[baseline=($(A)!0.5!(A1)$)]
		\node[shape = rounded rectangle, rounded rectangle right arc = none, minimum height = 20, draw] (state) {$\rho$};
		\coordinate [right= 24pt of state](A);
		\coordinate[below= 24pt of A] (A1);
		\draw[-] (state) to node [above] {$A$} (A);
		\draw [-] (A) to [in = 0, out =0, distance = 24 pt] (A1);
		\coordinate[left=80pt of A1] (A2);
		\draw[-] (A1) to node [below] {$A'$} (A2);
		\node[shape = rounded rectangle, rounded rectangle left arc = none, minimum height=80pt, minimum width= 120pt, red, dashed, draw] at ($(A)!0.5!(A1)$){};
	\end{tikzpicture} = 	\begin{tikzpicture}[baseline=(state.base)]
		\coordinate (start);
		\node[right=of start,shape = rounded rectangle, rounded rectangle left arc = none, minimum height = 20, draw] (state) {$\rho^T$};
		\draw [-] (start) to node [above] {$A'$} (state) ;
	\end{tikzpicture}= \begin{tikzpicture}[baseline=($(A)!0.5!(A1)$)]
		\node[shape = rounded rectangle, rounded rectangle right arc = none, minimum height = 20, draw] (state) {$\rho$};
		\coordinate [right= 24pt of state](A);
		\coordinate[above= 24pt of A] (A1);
		\draw[-] (state) to node [below] {$A$} (A);
		\draw [-] (A) to [in = 0, out =0, distance = 24 pt] (A1);
		\coordinate[left=80pt of A1] (A2);
		\draw[-] (A1) to node [above] {$A'$} (A2);
		\node[shape = rounded rectangle, rounded rectangle left arc = none, minimum height=80pt, minimum width= 120pt, red, dashed, draw] at ($(A)!0.5!(A1)$){};
	\end{tikzpicture}.
\end{equation}
\subsection{The Choi Isomorphism}
The Choi isomorphism maps quantum channels (or, more generally, linear operators that act on matrices) to bipartite matrices. Diagrammatically, there is an intuitive way to map a channel (one input and one output wire) into a bipartite matrix (two output wires): To transform an input wire in an output wire is enough to bend it in a way that its free end is directed towards the right, as depicted below.
\begin{equation}
	\begin{tikzpicture}[baseline=(channel.base)]
		\node[shape = rectangle, minimum height=20pt, draw] (channel) {$\mathcal{M}$};
		\coordinate[left=of channel] (A);
		\coordinate[right=of channel] (B) {};
		\draw [-] (channel) to node [above] {$A$} (A) ;
		\draw [-] (channel) to node [above] {$B$} (B) ;
	\end{tikzpicture} \quad \xrightarrow{\text{Choi}}\quad
	\begin{tikzpicture}[baseline = ($(A)!0.5!(A')$)]
		\node[shape = rectangle, minimum height=20pt, draw] (channel) {$\mathcal{M}$};
		\coordinate[left=of channel] (A);
		\coordinate[right=of channel] (B) {};
		\draw [-] (channel) to node [above] {$A$} (A) ;
		\draw [-] (channel) to node [above] {$B$} (B) ;

		\coordinate[below=24 pt of A] (A');
		\coordinate[below=24 pt of B] (PA');
		\draw [-] (A') to node [below] {$A'$} (PA');
		\draw [-] (A) to [in = 180, out =180, distance = 24 pt] (A');
		\node[shape = rounded rectangle, rounded rectangle right arc = none, minimum height=80pt, minimum width= 120pt, red, dashed, draw] at ($(A)!0.5!(A')$){};
	\end{tikzpicture}= \begin{tikzpicture}[baseline = (state.base)]
		\node[shape = rounded rectangle, rounded rectangle right arc = none, minimum height=40pt,  draw] (state) {$M$};
		\coordinate[right=of $(state.north east)!0.2!(state.south east)$] (endB) {};
		\coordinate[right=of $(state.north east)!0.8!(state.south east)$]  (PA) {};
		\draw [-] ($(state.north east)!0.2!(state.south east)$) to node [above] {$B$} (endB) ;
		\draw [-] ($(state.north east)!0.8!(state.south east)$) to node [above] {$A'$} (PA) ;
	\end{tikzpicture}.
\end{equation}
These diagrams coincide with the definition of Choi isomorphism that we gave in Section~\ref{sec:prelim}, i.e., the Choi matrix $M_{B \sep A'}$ associated with a channel $\mM_{A \to B}$ is defined as
\begin{equation}\label{eq:choi-def}
	M_{B\sep A'} = (\mM_{A \to B} \otimes \mI_{A'})(\Phi_{A A'}) \,.
\end{equation}
We have introduced the symbol `$\sep$' for the first time. With the definition of the Choi isomorphism we adopted, systems before `$\sep$' are output systems, while systems after `$\sep$' are input systems. As we will see in the next section, it is critical to properly keep track of input and output systems. Note that we decided to bend the input wire below the channel. Similarly, one could have chosen to bend the input wire above the channel, producing a different but still well defined Choi matrix $M_{A \sep B} = (\mI_{A} \otimes \mM_{A' \to B})(\Phi_{A A'})$. Throughout this article, we always use the convention that a Choi matrix is defined as in Eq.~\eqref{eq:choi-def}, i.e., by bending the wire below.

The Choi matrix of a quantum channel is Hermitian, positive semi-definite, and the marginal on the input system is the identity matrix, i.e., $\tr_{B} M_{B\sep A'} = \id_{A'}$~\cite{CdAP09}. Therefore, a Choi matrix is not a quantum state, as its trace is not 1. However, one can easily define a quantum state as the renormalized Choi matrix $\mu_{B \sep A'} = \frac{1}{d_A} M_{B \sep A'}$. Diagrammatically, one obtains
\begin{equation}
	\begin{tikzpicture}[baseline = (state.base)]
		\node[shape = rounded rectangle, rounded rectangle right arc = none, minimum height=40pt, draw] (state) {$\mu$};
		\coordinate[right=of $(state.north east)!0.2!(state.south east)$] (endB) {};
		\coordinate[right=of $(state.north east)!0.8!(state.south east)$]  (PA) {};
		\draw [-] ($(state.north east)!0.2!(state.south east)$) to node [above] {$B$} (endB) ;
		\draw [-] ($(state.north east)!0.8!(state.south east)$) to node [above] {$A'$} (PA) ;
	\end{tikzpicture} = \frac{1}{d_A} \begin{tikzpicture}[baseline = (state.base)]
		\node[shape = rounded rectangle, rounded rectangle right arc = none, minimum height=40pt,  draw] (state) {$M$};
		\coordinate[right=of $(state.north east)!0.2!(state.south east)$] (endB) {};
		\coordinate[right=of $(state.north east)!0.8!(state.south east)$]  (PA) {};
		\draw [-] ($(state.north east)!0.2!(state.south east)$) to node [above] {$B$} (endB) ;
		\draw [-] ($(state.north east)!0.8!(state.south east)$) to node [above] {$A'$} (PA) ;
	\end{tikzpicture} = \frac{1}{d_A} \begin{tikzpicture}[baseline = ($(A)!0.5!(A')$)]
		\node[shape = rectangle, minimum height=20pt, draw] (channel) {$\mathcal{M}$};
		\coordinate[left=of channel] (A);
		\coordinate[right=of channel] (B) {};
		\draw [-] (channel) to node [above] {$A$} (A) ;
		\draw [-] (channel) to node [above] {$B$} (B) ;
		\coordinate[below=24 pt of A] (A');
		\coordinate[below=24 pt of B] (PA');
		\draw [-] (A') to node [below] {$A'$} (PA');
		\draw [-] (A) to [in = 180, out =180, distance = 24 pt] (A');
	\end{tikzpicture} .
\end{equation}

One of the advantages of using string diagrams to approach the Choi isomorphism is that the definition of the inverse Choi isomorphism becomes straightforward. Indeed, one can undo the Choi isomorphism by bending the second wire back and using the snake identities in Eq.~\eqref{eq:snake}

\begin{equation}
	\begin{tikzpicture}[baseline = ($(A)!0.5!(A')$)]
		\node[shape = rectangle, minimum height=20pt, draw] (channel) {$\mathcal{M}$};
		\coordinate[left=of channel] (A);
		\coordinate[right=of channel] (B) {};
		\draw [-] (channel) to node [above] {$A$} (A) ;
		\draw [-] (channel) to node [above] {$B$} (B) ;

		\coordinate[below=24 pt of A] (A');
		\coordinate[below=24 pt of B] (PA');
		\draw [-] (A') to node [below] {$A'$} (PA');
		\draw [-] (A) to [in = 180, out =180, distance = 24 pt] (A');
		\node[shape = rounded rectangle, rounded rectangle right arc = none, minimum height=80pt, minimum width= 120pt, red, dashed, draw] at ($(A)!0.5!(A')$){};
	\end{tikzpicture} \quad \xrightarrow{\text{Choi inverse}}\begin{tikzpicture}[baseline = (A')]
		\node[shape = rectangle, minimum height=20pt, draw] (channel) {$\mathcal{M}$};
		\coordinate[left=of channel] (A);
		\coordinate[right=of channel] (B) {};
		\coordinate[right= 48pt of B] (endB);
		\draw [-] (channel) to node [above] {$A$} (A) ;
		\draw [-] (channel) to node [above] {$B$} (endB) ;
		\coordinate[below=24 pt of A] (A');
		\coordinate[below=24 pt of B] (PA');
		\draw [-] (A') to node[below] {$A'$}(PA');
		\draw [-] (A) to [in = 180, out =180, distance = 24 pt] (A');
		\coordinate[below=24 pt of A'] (AA);
		\coordinate[below=24 pt of PA'] (AA');
		\coordinate[left=48pt of AA] (endA);
		\draw [-] (endA) to node [below] {$A$} (AA');
		\draw [-] (PA') to [in = 0, out =0, distance = 24 pt] (AA');
		\coordinate (reference) at ($(A)!0.5!(A')$);
		\coordinate (reference2) at ($(channel)!0.5!(A)$);
		\node[shape = rounded rectangle, rounded rectangle right arc = none, minimum height=60pt, minimum width= 100pt, red, dashed, draw] at (reference2|-reference){};
	\end{tikzpicture} = \begin{tikzpicture}[baseline=(channel.base)]
		\node[shape = rectangle, minimum height=20pt, draw] (channel) {$\mathcal{M}$};
		\coordinate[left=of channel] (A);
		\coordinate[right=of channel] (B) {};
		\draw [-] (channel) to node [above] {$A$} (A) ;
		\draw [-] (channel) to node [above] {$B$} (B) ;
	\end{tikzpicture}.
\end{equation}

Therefore, the inverse Choi isomorphism transforms a Choi matrix $M_{B \sep A'}$ in the channel $\mM_{A \to B}$ defined as

\begin{equation}
	\begin{tikzpicture}[baseline = (state.base)]
		\node[shape = rounded rectangle, rounded rectangle right arc = none, minimum height=40pt,  draw] (state) {$M$};
		\coordinate[right=of $(state.north east)!0.2!(state.south east)$] (endB) {};
		\coordinate[right=of $(state.north east)!0.8!(state.south east)$]  (PA) {};
		\draw [-] ($(state.north east)!0.2!(state.south east)$) to node [above] {$B$} (endB) ;
		\draw [-] ($(state.north east)!0.8!(state.south east)$) to node [above] {$A'$} (PA) ;
	\end{tikzpicture}\xrightarrow{\text{Choi inverse}} \begin{tikzpicture}[baseline=(channel.base)]
		\node[shape = rectangle, minimum height=20pt, draw] (channel) {$\mathcal{M}$};
		\coordinate[left=of channel] (A);
		\coordinate[right=of channel] (B) {};
		\draw [-] (channel) to node [above] {$A$} (A) ;
		\draw [-] (channel) to node [above] {$B$} (B) ;
	\end{tikzpicture} = \begin{tikzpicture}[baseline = (state.base)]
		\node[shape = rounded rectangle, rounded rectangle right arc = none, minimum height=40pt,  draw] (state) {$M$};
		\coordinate[right=of $(state.north east)!0.2!(state.south east)$] (endB) {};
		\coordinate[right=of $(state.north east)!0.8!(state.south east)$]  (PA) {};
		\coordinate[below= 24pt of PA] (PAP) {};
		\coordinate[left = of PAP] (PAPstart){};
		\coordinate[right=24ptof endB] (endB2);
		\coordinate[left=24pt of PAPstart] (endA);
		\draw [-] ($(state.north east)!0.2!(state.south east)$) to node [above] {$B$} (endB2) ;
		\draw [-] ($(state.north east)!0.8!(state.south east)$) to node [above] {$A'$} (PA) ;
		\draw [-] (PA) to [in = 0, out = 0, distance = 24 pt] (PAP);
		\draw [-] (PAP) to node [below] {$A$} (endA);
	\end{tikzpicture}.
\end{equation}
With this definition, one quickly obtains the formula for the action of the channel $\mM_{A \to B}$ on a state $\rho_A$
\begin{equation}\label{eq:choi-ac}
	\mM_{A\to B}(\rho_A) = \begin{tikzpicture}[baseline=(state.base)]
		\node[shape = rounded rectangle, rounded rectangle right arc = none, minimum height = 20, draw] (state) {$\rho$};
		\node[right= of state, shape = rectangle, minimum height=20pt, draw] (channel) {$\mathcal{M}$};
		\coordinate[right=of channel] (B) {};
		\draw [-] (state) to node [above] {$A$} (channel) ;
		\draw [-] (channel) to node [above] {$B$} (B) ;
	\end{tikzpicture} = \begin{tikzpicture}[baseline = (PA)]
		\node[shape = rounded rectangle, rounded rectangle right arc = none, minimum height=40pt,  draw] (state) {$M$};
		\coordinate[right=of $(state.north east)!0.2!(state.south east)$] (endB) {};
		\coordinate[right=of $(state.north east)!0.8!(state.south east)$]  (PA) {};
		\coordinate[below= 24pt of PA] (PAP) {};
		\coordinate[left = of PAP] (PAPstart){};
		\coordinate[right=24ptof endB] (endB2);
		\node[shape = rounded rectangle, rounded rectangle right arc = none, minimum height = 20, draw] (endA) at (state |- PAPstart) {$\rho$};
		\draw [-] ($(state.north east)!0.2!(state.south east)$) to node [above] {$B$} (endB2) ;
		\draw [-] ($(state.north east)!0.8!(state.south east)$) to node [above] {$A'$} (PA) ;
		\draw [-] (PA) to [in = 0, out = 0, distance = 24 pt] (PAP);
		\draw [-] (PAP) to node [below] {$A$} (endA);
	\end{tikzpicture} = \begin{tikzpicture}[baseline = (state.base)]
		\node[shape = rounded rectangle, rounded rectangle right arc = none, minimum height=40pt,  draw] (state) {$M$};
		\coordinate[right=of $(state.north east)!0.2!(state.south east)$] (endB) {};
		\coordinate[right=of $(state.north east)!0.8!(state.south east)$]  (PA) {};
		\node[shape = rounded rectangle, rounded rectangle left arc = none, minimum height = 20, draw] (endA) at (PA) {$\rho^T$};
		\draw [-] ($(state.north east)!0.2!(state.south east)$) to node [above] {$B$} (endB) ;
		\draw [-] ($(state.north east)!0.8!(state.south east)$) to node [above] {$A'$} (endA) ;
	\end{tikzpicture},
\end{equation}
These diagrams correspond to the algebraic expressions for the inverse Choi map commonly used in literature:
\begin{equation}\label{eq:choi-ac2}
	\mM_{A \to B}(\rho_A) = \tr_{A'A}[(M_{B \sep A'} \otimes \rho_A)(\id_B \otimes \Phi_{A'A})] = \tr_{A'}[M_{B \sep A'} (\id_B \otimes\rho^T_{A'})].
\end{equation}

Thanks to the identities in Eq.~\eqref{eq:snake}, it is straightforward to see that the Choi map is indeed an isomorphism. In addition, diagrammatically, one immediately notices that the Choi matrix of a state $\rho_A$ is the state itself, since there is no input wire to bend, and the Choi matrix of an effect is the transpose of the matrix of the effect, see Eq.~\eqref{eq:t}.

\subsection{The Link Product}
Once the Choi isomorphism is well-defined, the natural step is translating operations between channels into operations between Choi matrices. Once again, we approach this problem with string diagrams. Then we show that the operation that naturally arises with this approach is the algebraic operation between Choi matrices known as link product.

Let us consider two channels $\mM_{A \to B}$ and $\mN_{B \to C}$, we call the channel resulting from their sequential composition $\mathcal{T}_{A \to C} = \mN_{B \to C} \circ \mM_{A \to B}$. Diagrammatically,
\begin{equation}\label{eq:seq-comp}
	\begin{tikzpicture}[baseline=(channel.base)]
		\node[shape = rectangle, minimum height=20pt, draw] (channel) {$\mathcal{T}$};
		\coordinate[left=of channel] (A);
		\coordinate[right=of channel] (B) {};
		\draw [-] (channel) to node [above] {$A$} (A) ;
		\draw [-] (channel) to node [above] {$C$} (B) ;
	\end{tikzpicture} = \begin{tikzpicture}[baseline=(channel.base)]
		\node[shape = rectangle, minimum height=20pt, draw] (channel) {$\mathcal{M}$};
		\coordinate[left=of channel] (A);
		\node[shape = rectangle, right=of channel, minimum height=20pt, draw] (channel2) {$\mathcal{N}$};
		\coordinate[right=of channel2] (B) {};
		\draw [-] (channel) to node [above] {$A$} (A) ;
		\draw [-] (channel) to node [above] {$B$} (channel2) ;
		\draw [-] (channel2) to node [above] {$C$} (B) ;
	\end{tikzpicture}.
\end{equation}
The Choi matrix $T_{C\sep A'}$ of $\mathcal{T}_{A \to C}$ is defined as
\begin{equation}
	\begin{tikzpicture}[baseline = ($(A)!0.5!(A')$)]
		\node[shape = rectangle, minimum height=20pt, draw] (channel) {$\mathcal{T}$};
		\coordinate[left=of channel] (A);
		\coordinate[right=of channel] (B) {};
		\draw [-] (channel) to node [above] {$A$} (A) ;
		\draw [-] (channel) to node [above] {$C$} (B) ;

		\coordinate[below=24 pt of A] (A');
		\coordinate[below=24 pt of B] (PA');
		\draw [-] (A') to node [below] {$A'$} (PA');
		\draw [-] (A) to [in = 180, out =180, distance = 24 pt] (A');
		\node[shape = rounded rectangle, rounded rectangle right arc = none, minimum height=80pt, minimum width= 120pt, red, dashed, draw] at ($(A)!0.5!(A')$){};
	\end{tikzpicture}= \begin{tikzpicture}[baseline = (state.base)]
		\node[shape = rounded rectangle, rounded rectangle right arc = none, minimum height=40pt,  draw] (state) {$T$};
		\coordinate[right=of $(state.north east)!0.2!(state.south east)$] (endB) {};
		\coordinate[right=of $(state.north east)!0.8!(state.south east)$]  (PA) {};
		\draw [-] ($(state.north east)!0.2!(state.south east)$) to node [above] {$C$} (endB) ;
		\draw [-] ($(state.north east)!0.8!(state.south east)$) to node [above] {$A'$} (PA) ;
	\end{tikzpicture}.
\end{equation}
From Eq.~\eqref{eq:seq-comp} we have that
\begin{equation}
	\begin{aligned}
		\begin{tikzpicture}[baseline = (state.base)]
			\node[shape = rounded rectangle, rounded rectangle right arc = none, minimum height=40pt,  draw] (state) {$T$};
			\coordinate[right=of $(state.north east)!0.2!(state.south east)$] (endB) {};
			\coordinate[right=of $(state.north east)!0.8!(state.south east)$]  (PA) {};
			\draw [-] ($(state.north east)!0.2!(state.south east)$) to node [above] {$C$} (endB) ;
			\draw [-] ($(state.north east)!0.8!(state.south east)$) to node [above] {$A'$} (PA) ;
		\end{tikzpicture} & = \begin{tikzpicture}[baseline=($(C)!0.5!(C')$)]
			                      \node[shape = rectangle, minimum height=20pt, draw] (channel) {$\mathcal{M}$};
			                      \node[right= of channel, shape = rectangle, minimum height=20pt, draw] (channelN) {$\mathcal{N}$};
			                      \coordinate[left=of channel] (A);
			                      \coordinate[right=of channelN] (C) {};
			                      \draw [-] (channel) to (A) ;
			                      \draw [-] (channelN) to node [above] {$C$} (C) ;
			                      \draw [-] (channel) to node [above] {$B$} (channelN) ;
			                      \coordinate[below=24 pt of A] (A');
			                      \coordinate[below=24 pt  of C] (C');
			                      \draw [-] (A') to node [above] {$A'$} (C') ;
			                      \draw [-] (A) to [in = 180, out = 180, distance = 24 pt] (A');
		                      \end{tikzpicture} \\ &= \begin{tikzpicture}[baseline=($(C)!0.5!(C')$)]
			\node[shape = rectangle, minimum height=20pt, draw] (channel) {$\mathcal{M}$};
			\node[right= of channel, shape = rectangle, minimum height=20pt, draw] (channelN) {$\mathcal{N}$};
			\coordinate[left=of channel] (A);
			\coordinate[right=of channelN] (C) {};
			\draw [-] (channel) to (A) ;
			\draw [-] (channelN) to node [above] {$C$} (C) ;
			\draw [-] (channel) to node [above] {$B$} (channelN) ;
			\coordinate[below=24 pt of A] (A');
			\coordinate[below=24 pt  of C] (C');
			\draw [-] (A') to node [above] {$A'$} (C') ;
			\draw [-] (A) to [in = 180, out = 180, distance = 24 pt] (A');
			\node[shape = rounded rectangle, rounded rectangle right arc = none, minimum height=80pt, minimum width= 120pt, red, dashed, draw] at ($(A)!0.5!(A')$){};
		\end{tikzpicture} \\ &= \begin{tikzpicture}[baseline = (state.base)]
			\node[shape = rounded rectangle, rounded rectangle right arc = none, minimum height=40pt,  draw] (state) {$M$};
			\node[shape = rectangle,right=of $(state.north east)!0.2!(state.south east)$, minimum height=20pt, draw] (channel) {$\mathcal{N}$};
			\coordinate[right=of channel] (endB) {};
			\coordinate[right=of $(state.north east)!0.8!(state.south east)$]  (PA) {};
			\coordinate (endA) at (endB|-PA);
			\draw [-] ($(state.north east)!0.2!(state.south east)$) to node [above] {$B$} (channel) ;
			\draw [-] ($(state.north east)!0.8!(state.south east)$) to node [above] {$A'$} (endA) ;
			\draw [-] (channel) to node [above] {$C$} (endB) ;
		\end{tikzpicture}.
	\end{aligned}
\end{equation}
So far, we have been able to express the Choi matrix of $\mathcal{T}_{A \to C}$ in terms of the Choi matrix of $\mM_{A \to B}$. To replace $\mN_{B \to C}$ with its Choi matrix, we insert one of the identities in Eq.~\eqref{eq:snake} between $M_{B \sep A'}$ and $\mN_{B \to C}$:
\begin{equation}\label{eq:link-p}
	\begin{aligned}
		\begin{tikzpicture}[baseline = (state.base)]
			\node[shape = rounded rectangle, rounded rectangle right arc = none, minimum height=40pt,  draw] (state) {$T$};
			\coordinate[right=of $(state.north east)!0.2!(state.south east)$] (endB) {};
			\coordinate[right=of $(state.north east)!0.8!(state.south east)$]  (PA) {};
			\draw [-] ($(state.north east)!0.2!(state.south east)$) to node [above] {$C$} (endB) ;
			\draw [-] ($(state.north east)!0.8!(state.south east)$) to node [above] {$A'$} (PA) ;
		\end{tikzpicture} & = \begin{tikzpicture}[baseline = (state.base)]
			                      \node[shape = rounded rectangle, rounded rectangle right arc = none, minimum height=40pt,  draw] (state) {$M$};
			                      \node[shape = rectangle,right=of $(state.north east)!0.2!(state.south east)$, minimum height=20pt, draw] (channel) {$\mathcal{N}$};
			                      \coordinate[right=of channel] (endB) {};
			                      \coordinate[right=of $(state.north east)!0.8!(state.south east)$]  (PA) {};
			                      \coordinate (endA) at (endB|-PA);
			                      \draw [-] ($(state.north east)!0.2!(state.south east)$) to node [above] {$B$} (channel) ;
			                      \draw [-] ($(state.north east)!0.8!(state.south east)$) to node [above] {$A'$} (endA) ;
			                      \draw [-] (channel) to node [above] {$C$} (endB) ;
		                      \end{tikzpicture} \\ &= \begin{tikzpicture}[baseline = ($(B1)!0.5!(B)$)]
			\node[shape = rounded rectangle, rounded rectangle right arc = none, minimum height=40pt,  draw] (state) {$M$};
			\coordinate[right=72 pt of $(state.north east)!0.2!(state.south east)$]  (B) {};
			\coordinate[right=72 pt of $(state.north east)!0.8!(state.south east)$]  (endA) {};
			\coordinate[above=24 pt of B] (B1);
			\coordinate[above=24 pt of B1] (B2);
			\coordinate [left=48 pt of B1](A1);
			\coordinate [left=48 pt of B2](A2);
			\draw[-] (A1) to node [above] {$B'$} (B1);
			\draw [-] (B1) to [in = 0, out = 0, distance = 24 pt] (B);
			\draw [-] (A2) to [in = 180, out = 180, distance = 24 pt] (A1);
			\node[shape = rectangle, minimum height=20pt, draw] (channel) at ($(A2)!0.5!(B2)$) {$\mathcal{N}$};
			\draw[-] (A2) to node [above] {$B$} (channel);
			\draw[-] (channel) to node [above] {$C$} (B2);
			\draw[-] ($(state.north east)!0.2!(state.south east)$) to node [above] {$B$} (B);
			\draw[-] ($(state.north east)!0.8!(state.south east)$) to node [above] {$A'$} (endA);
			\node[shape = rounded rectangle, rounded rectangle right arc = none, minimum height=50pt, minimum width= 75pt, red, dashed, draw] at ($(A2)!0.5!(A1)$){};
		\end{tikzpicture}\\ &= \begin{tikzpicture} [baseline=($(PB)!0.5!(PB')$)]
			\node[shape = rounded rectangle, rounded rectangle right arc = none, minimum height=40pt,  draw] (state) {$N$};
			\coordinate[right=48pt of $(state.north east)!0.2!(state.south east)$] (endC) {};
			\coordinate[right=of $(state.north east)!0.8!(state.south east)$]  (PB) {};
			\draw [-] ($(state.north east)!0.2!(state.south east)$) to node [above] {$C$} (endC) ;
			\draw [-] ($(state.north east)!0.8!(state.south east)$) to node [above] {$B'$} (PB) ;
			\coordinate [below= 24pt of PB] (PB');
			\draw [-] (PB) to [in = 0, out = 0, distance = 24 pt] (PB');
			\coordinate [below= 24pt of PB'] (refA);
			\coordinate (reference) at ($(PB')!0.5!(refA)$);
			\node[shape = rounded rectangle, rounded rectangle right arc = none, minimum height=40pt,  draw] (state2) at (state.center |- reference)  {$M$};
			\coordinate[right=48pt of $(state2.north east)!0.8!(state2.south east)$] (endA) {};
			\draw [-] ($(state2.north east)!0.2!(state2.south east)$) to node [above] {$B$} (PB') ;
			\draw [-] ($(state2.north east)!0.8!(state2.south east)$) to node [above] {$A'$} (endA) ;
		\end{tikzpicture}.
	\end{aligned}
\end{equation}
The algebraic expression associated with this diagram is
\begin{equation}
	T_{C \sep A'} = \tr_{B B'}[(N_{C \sep B'} \otimes M_{B \sep A'})(\id_C \otimes \Phi_{B' B} \otimes \id_{A'})].
\end{equation}
This operation is known as link product, and it is denoted with $N_{C \sep B'} * M_{B \sep A'}$~\cite{CdAP09}. The name `link' product is well-motivated from a diagrammatic point of view because this operation links the Choi matrices $N_{C \sep B'}$ and $M_{B \sep A'}$. Indeed, diagrammatically it is represented as a `bridge' that links two bipartite matrices, without any wire crossing, i.e., the second system of the first matrix is linked to the first system of the second matrix.

Eq.~\eqref{eq:link-p} shows that the link product of the Choi matrices of two channels is the Choi matrix of the sequential composition of these channels. Moreover, the link product of Choi matrices of channels always produces the Choi matrix of a channel, that is, a positive semi-definite matrix with the identity as marginal on the input state~\cite{CdAP09}.

As a final note, in Eq.~\eqref{eq:choi-ac} and Eq.~\eqref{eq:choi-ac2}, we expressed the action of a channel $\mM_{A \to B}$ on a state $\rho$ in terms of its Choi matrix $M_{B \sep A'}$. That expression is nothing but the link product $M_{B \sep A'} * \rho_A$, where $\rho_A$ is seen as the bipartite Choi matrix on $A$ and the trivial system, corresponding to the Hilbert space $\mathbb{C}$, associated with the channel that prepares $\rho_A$.
\begin{equation}
	\mM_{A\to B}(\rho_A) = \begin{tikzpicture}[baseline=(state.base)]
		\node[shape = rounded rectangle, rounded rectangle right arc = none, minimum height = 20, draw] (state) {$\rho$};
		\node[right= of state, shape = rectangle, minimum height=20pt, draw] (channel) {$\mathcal{M}$};
		\coordinate[right=of channel] (B) {};
		\draw [-] (state) to node [above] {$A$} (channel) ;
		\draw [-] (channel) to node [above] {$B$} (B) ;
	\end{tikzpicture} = \begin{tikzpicture}[baseline = (PA)]
		\node[shape = rounded rectangle, rounded rectangle right arc = none, minimum height=40pt,  draw] (state) {$M$};
		\coordinate[right=of $(state.north east)!0.2!(state.south east)$] (endB) {};
		\coordinate[right=of $(state.north east)!0.8!(state.south east)$]  (PA) {};
		\coordinate[below= 24pt of PA] (PAP) {};
		\coordinate[left = of PAP] (PAPstart){};
		\coordinate[right=24ptof endB] (endB2);
		\node[shape = rounded rectangle, rounded rectangle right arc = none, minimum height = 20, draw] (endA) at (state |- PAPstart) {$\rho$};
		\draw [-] ($(state.north east)!0.2!(state.south east)$) to node [above] {$B$} (endB2) ;
		\draw [-] ($(state.north east)!0.8!(state.south east)$) to node [above] {$A'$} (PA) ;
		\draw [-] (PA) to [in = 0, out = 0, distance = 24 pt] (PAP);
		\draw [-] (PAP) to node [below] {$A$} (endA);
	\end{tikzpicture} = M_{B \sep A'} * \rho_A.
\end{equation}

\subsection{The swap tensor product}
As seen in the previous subsection, sequential composition of channels is translated into the link product of Choi matrices. Parallel composition of channels is often overlooked or translated into the tensor product of Choi matrices~\cite{CdAP09, BCdAP16}. However, one needs to be careful with the order of the systems, as we show here. We start with two channels, $\mM_{A \to B}$ and $\mN_{C \to D}$, and analogously to what we did before, we define $\mathcal{T}_{AC \to BD} = \mM_{A \to B} \otimes \mN_{C \to D}$. Diagrammatically, we have
\begin{equation}
	\begin{tikzpicture}[baseline=($(A)!0.5!(C)$)]
		\node[shape = rectangle, minimum height=40pt,  draw] (channel) {$\mathcal{T}$};
		\coordinate[right=of $(channel.north east)!0.2!(channel.south east)$] (B) {};
		\coordinate[right=of $(channel.north east)!0.8!(channel.south east)$]  (D) {};
		\coordinate[left=of $(channel.north west)!0.2!(channel.south west)$] (A) {};
		\coordinate[left=of $(channel.north west)!0.8!(channel.south west)$]  (C) {};
		\draw[-] (A) to node[above] {$A$} ($(channel.north west)!0.2!(channel.south west)$);
		\draw[-] (C) to node[above] {$C$} ($(channel.north west)!0.8!(channel.south west)$);
		\draw[-] ($(channel.north east)!0.2!(channel.south east)$) to node[above] {$B$} (B);
		\draw[-] ($(channel.north east)!0.8!(channel.south east)$) to node[above] {$D$} (D);
	\end{tikzpicture} = \begin{tikzpicture}[baseline = ($(A)!0.5!(C)$)]
		\node[shape = rectangle, minimum height=20pt, draw] (channel) {$\mathcal{M}$};
		\coordinate[left=of channel] (A);
		\coordinate[right=of channel] (B) {};
		\draw [-] (channel) to node [above] {$A$} (A) ;
		\draw [-] (channel) to node [above] {$B$} (B) ;
		\node[below = 6pt of channel, shape = rectangle, minimum height=20pt, draw] (channel2) {$\mathcal{N}$};
		\coordinate[left=of channel2] (C);
		\coordinate[right=of channel2] (D) {};
		\draw [-] (channel2) to node [above] {$C$} (C) ;
		\draw [-] (channel2) to node [above] {$D$} (D) ;
	\end{tikzpicture}.
\end{equation}

As before, we compute the Choi matrix $T_{BD \sep A'C'}$ of $\mathcal{T}_{AC \to BD}$ with string diagrams:
\begin{equation}\label{eq:swap-prod}
	\begin{aligned}
		\begin{tikzpicture}[baseline = (state.base)]
			\node[shape = rounded rectangle, rounded rectangle right arc = none, minimum height=40pt,  draw] (state) {$T$};
			\coordinate[right=of $(state.north east)!0.2!(state.south east)$] (endB) {};
			\coordinate[right=of $(state.north east)!0.8!(state.south east)$]  (PA) {};
			\draw [-] ($(state.north east)!0.2!(state.south east)$) to node [above] {$BD$} (endB) ;
			\draw [-] ($(state.north east)!0.8!(state.south east)$) to node [above] {$A'C'$} (PA) ;
		\end{tikzpicture} & = \begin{tikzpicture}[baseline=($(C)!0.5!(A1)$)]
			                      \node[shape = rectangle, minimum height=40pt,  draw] (channel) {$\mathcal{T}$};
			                      \coordinate[right=of $(channel.north east)!0.2!(channel.south east)$] (B) {};
			                      \coordinate[right=of $(channel.north east)!0.8!(channel.south east)$]  (D) {};
			                      \coordinate[left=of $(channel.north west)!0.2!(channel.south west)$] (A) {};
			                      \coordinate[left=of $(channel.north west)!0.8!(channel.south west)$]  (C) {};
			                      \draw[-] (A) to node[above] {$A$} ($(channel.north west)!0.2!(channel.south west)$);
			                      \draw[-] (C) to node[above] {$C$} ($(channel.north west)!0.8!(channel.south west)$);
			                      \draw[-] ($(channel.north east)!0.2!(channel.south east)$) to node[above] {$B$} (B);
			                      \draw[-] ($(channel.north east)!0.8!(channel.south east)$) to node[above] {$D$} (D);
			                      \coordinate [below=24pt of C] (A1);
			                      \coordinate [below=24pt of D] (B1);
			                      \coordinate [below=24pt of A1] (A2);
			                      \coordinate [below=24pt of B1] (B2);
			                      \draw[-] (A1) to node[above] {$A'$}  (B1);
			                      \draw[-] (A2) to node[above] {$C'$}  (B2);
			                      \draw [-] (A) to [in = 180, out = 180, distance = 48 pt] (A1);
			                      \draw [-] (C) to [in = 180, out = 180, distance = 48 pt] (A2);
		                      \end{tikzpicture} \\&=\begin{tikzpicture}[baseline=($(A2)!0.5!(A3)$)]
			\coordinate (A1);
			\coordinate[below=24pt of A1] (A2);
			\coordinate[below=24pt of A2] (A3);
			\coordinate[below=24pt of A3] (A4);
			\coordinate[right=48pt of A1] (B1);
			\coordinate[below=24pt of B1] (B2);
			\coordinate[below=24pt of B2] (B3);
			\coordinate[below=24pt of B3] (B4);
			\node[shape = rectangle, minimum height=20pt, draw] (channelM) at ($(A1)!0.5!(B1)$) {$\mathcal{M}$};
			\node[shape = rectangle, minimum height=20pt, draw] (channelN) at ($(A2)!0.5!(B2)$) {$\mathcal{N}$};
			\draw[-] (A1) to node[above] {$A$}  (channelM);
			\draw[-] (channelM) to node[above] {$B$}  (B1);
			\draw[-] (A2) to node[above] {$C$}  (channelN);
			\draw[-] (channelN) to node[above] {$D$}  (B2);
			\draw[-] (A3) to node[above] {$A'$}  (B3);
			\draw[-] (A4) to node[above] {$C'$}  (B4);
			\draw [-] (A1) to [in = 180, out = 180, distance = 48 pt] (A3);
			\draw [-] (A2) to [in = 180, out = 180, distance = 48 pt] (A4);
		\end{tikzpicture}\\&=\begin{tikzpicture}[baseline=($(A2)!0.5!(A3)$)]
			\coordinate (A1);
			\coordinate[below=24pt of A1] (A2);
			\coordinate[below=24pt of A2] (A3);
			\coordinate[below=24pt of A3] (A4);
			\coordinate[right=48pt of A1] (B1);
			\coordinate[below=24pt of B1] (B2);
			\coordinate[below=24pt of B2] (B3);
			\coordinate[below=24pt of B3] (B4);
			\coordinate[right=48pt of B1] (C1);
			\coordinate[below=24pt of C1] (C2);
			\coordinate[below=24pt of C2] (C3);
			\coordinate[below=24pt of C3] (C4);
			\node[shape = rectangle, minimum height=20pt, draw] (channelM) at ($(A1)!0.5!(B1)$) {$\mathcal{M}$};
			\node[shape = rectangle, minimum height=20pt, draw] (channelN) at ($(A3)!0.5!(B3)$) {$\mathcal{N}$};
			\draw[-] (A1) to node[above] {$A$}  (channelM);
			\draw[-] (channelM) to node[above] {$B$}  (C1);
			\draw[-] (A2) to node[above] {$A'$}  (B2);
			\draw[-] (A3) to node[above] {$C$}  (channelN);
			\draw[-] (channelN) to node[above] {$D$}  (B3);
			\draw[-] (A4) to node[above] {$C'$}  (C4);
			\draw [-] (A1) to [in = 180, out = 180, distance = 24 pt] (A2);
			\draw [-] (B2) to [in = 180, out = 0, distance = 24 pt] node [below, near end] {$A'$}(C3);
			\draw [-] (B3) to [in = 180, out = 0, distance = 24 pt] node [above, near end] {$D$} (C2);
			\draw [-] (A3) to [in = 180, out = 180, distance = 24 pt] (A4);
			\node[shape = rounded rectangle, rounded rectangle right arc = none, minimum height=40pt, minimum width= 75pt, red, dashed, draw] at ($(A1)!0.3!(A2)$){};
			\node[shape = rounded rectangle, rounded rectangle right arc = none, minimum height=40pt, minimum width= 75pt, red, dashed, draw] at ($(A3)!0.3!(A4)$){};
		\end{tikzpicture}\\&=\begin{tikzpicture}[baseline=($(A2)!0.5!(A3)$)]
			\node[shape = rounded rectangle, rounded rectangle right arc = none, minimum height=40pt,  draw] (stateM) {$M$};
			\coordinate[right=48pt of $(stateM.north east)!0.2!(stateM.south east)$] (A1) {};
			\coordinate[right=48pt of $(stateM.north east)!0.8!(stateM.south east)$]  (A2) {};
			\coordinate [below=24pt of A2](A3);
			\coordinate [below=24pt of A3](A4);
			\coordinate (reference) at ($(A3)!0.5!(A4)$);
			\node[shape = rounded rectangle, rounded rectangle right arc = none, minimum height=40pt,  draw] (stateN) at (stateM|-reference) {$N$};
			\draw [-] ($(stateM.north east)!0.2!(stateM.south east)$) to node[above] {$B$}(A1);
			\draw [-] ($(stateM.north east)!0.8!(stateM.south east)$) to [in = 180, out = 0, distance = 24 pt] node [below, near end] {$A'$}(A3);
			\draw [-] ($(stateN.north east)!0.2!(stateN.south east)$) to [in = 180, out = 0, distance = 24 pt] node [above, near end] {$D$} (A2);
			\draw [-] ($(stateN.north east)!0.8!(stateN.south east)$) to node[below] {$C'$} (A4);
		\end{tikzpicture}.
	\end{aligned}
\end{equation}
The diagrammatic approach indicates that the tensor product of channels is \emph{not} associated with the tensor product of Choi matrices, but with the swap tensor product that we introduced in Appendix~\ref{sec:prod}:
\begin{equation}
	T_{BD \sep A'C'} = M_{B\sep A'} \boxtimes N_{D\sep C'} = (\mI_{B} \otimes \mS_{A'D \to DA'} \otimes \mI_{C'})(M_{B\sep A'} \otimes N_{D\sep C'}).
\end{equation}

As in the case of the link product, the Choi matrix of the tensor product of channels is the swap tensor product of the Choi matrices by construction, and every time we take the swap tensor product of two Choi matrices, of quantum channels we obtain the Choi matrix of a quantum channel, that is, a positive semi-definite matrix such that its marginal on the input systems is the identity matrix. This follows directly from the equalities in Eq.~\eqref{eq:swap-prod}.

\subsection{Choi-defined resource theories}
As a reminder, the CD operations associated with a set of free states are all and only the quantum channels such that their renormalized Choi matrix is a free state. CD operations are easily constructed from free states with the inverse Choi map, i.e., if $\mu_{B \sep A'}$ is a free state and a renormalized Choi matrix, which means $\tr_B \mu_{B\sep  A'} = \frac{1}{d_A}\id _{A'}$, then the channel $\mM_{A \to B}$ defined as
\begin{equation}\label{eq:cdo}
	\begin{tikzpicture}[baseline=(channel.base)]
		\node[shape = rectangle, minimum height=20pt, draw] (channel) {$\mathcal{M}$};
		\coordinate[left=of channel] (A);
		\coordinate[right=of channel] (B) {};
		\draw [-] (channel) to node [above] {$A$} (A) ;
		\draw [-] (channel) to node [above] {$B$} (B) ;
	\end{tikzpicture} = \begin{tikzpicture}[baseline = ($(state.north east)!0.8!(state.south east)$)]
		\node[shape = rounded rectangle, rounded rectangle right arc = none, minimum height=40pt,  draw] (state) {$\mu$};
		\node[left=2pt of state]  {$d_A$};
		\coordinate[right=72pt of $(state.north east)!0.2!(state.south east)$] (endB) {};
		\coordinate[below= 24pt of $(state.north east)!0.8!(state.south east)$] (PAP) {};
		\coordinate[left = 96pt of PAP] (PAPstart){};
		\draw [-] ($(state.north east)!0.2!(state.south east)$) to node [above] {$B$} (endB) ;
		\draw [-] ($(state.north east)!0.8!(state.south east)$) to [in = 0, out = 0, distance = 24 pt] (PAP);
		\draw [-] (PAP) to node [below] {$A$} (PAPstart);
		\node[shape = rectangle, minimum height=70pt, minimum width= 100pt, red, dashed, draw] at ($(state.north)!0.8!(state.south)$){};
	\end{tikzpicture}
\end{equation}
is a CD operation.

A quantum resource theory is a CDRT if its free operations coincide with the Choi-defined operations associated with its set of free states. This diagrammatically means that,
\begin{equation}\label{eq:cdtr}
	\begin{tikzpicture}[baseline= (channel.base)]
		\node[shape = rectangle, minimum height=20pt, draw] (channel) {$\mathcal{M}$};
		\coordinate[left=of channel] (A);
		\coordinate[right=of channel] (B) {};
		\draw [-] (channel) to node [above] {$A$} (A) ;
		\draw [-] (channel) to node [above] {$B$} (B) ;
	\end{tikzpicture} \quad \text{is free if and only if}\quad\begin{tikzpicture}[baseline = (state.base)]
		\node[shape = rounded rectangle, rounded rectangle right arc = none, minimum height=40pt, draw] (state) {$\mu$};
		\coordinate[right=of $(state.north east)!0.2!(state.south east)$] (endB) {};
		\coordinate[right=of $(state.north east)!0.8!(state.south east)$]  (PA) {};
		\draw [-] ($(state.north east)!0.2!(state.south east)$) to node [above] {$B$} (endB) ;
		\draw [-] ($(state.north east)!0.8!(state.south east)$) to node [above] {$A'$} (PA) ;
	\end{tikzpicture} = \frac{1}{d_A} \begin{tikzpicture}[baseline = ($(A)!0.5!(A')$)]
		\node[shape = rectangle, minimum height=20pt, draw] (channel) {$\mathcal{M}$};
		\coordinate[left=of channel] (A);
		\coordinate[right=of channel] (B) {};
		\draw [-] (channel) to node [above] {$A$} (A) ;
		\draw [-] (channel) to node [above] {$B$} (B) ;
		\coordinate[below=24 pt of A] (A');
		\coordinate[below=24 pt of B] (PA');
		\draw [-] (A') to node [below] {$A'$} (PA');
		\draw [-] (A) to [in = 180, out =180, distance = 24 pt] (A');
	\end{tikzpicture}\quad \text{is free.}
\end{equation}

We are now ready to prove the main result of this work diagrammatically.
\begin{theorem}\label{th:cdrt-diag}
	It is possible to construct a CDRT associated with a set of free states if and only if for all systems $A$, $B$
	\begin{enumerate}
		\item \label{enum:choi-diag} $\frac{1}{d_A}$ \begin{tikzpicture}[baseline = ($(PA)!0.5!(PA')$)]
			      \coordinate (PA) {};
			      \coordinate[below=24 pt of PA] (PA');
			      \coordinate[right = of PA] (A);
			      \coordinate[right = of PA'](A');
			      \draw [-] (PA) to [in = 180, out =180, distance = 24 pt] (PA');
			      \draw [-] (PA) to node [above] {$A$} (A);
			      \draw [-] (PA') to node [below] {$A'$} (A');
		      \end{tikzpicture} is free,
		\item \label{enum:seq-diag} if \begin{tikzpicture}[baseline=(state.base)]
			      \node[shape = rounded rectangle, rounded rectangle right arc = none, minimum height = 20, draw] (state) {$\rho$};
			      \coordinate[right=of state] (end);
			      \draw [-] (state) to node [above] {$A$} (end) ;
		      \end{tikzpicture} and \begin{tikzpicture}[baseline = (state.base)]
			      \node[shape = rounded rectangle, rounded rectangle right arc = none, minimum height=40pt, draw] (state) {$\mu$};
			      \coordinate[right=of $(state.north east)!0.2!(state.south east)$] (endB) {};
			      \coordinate[right=of $(state.north east)!0.8!(state.south east)$]  (PA) {};
			      \draw [-] ($(state.north east)!0.2!(state.south east)$) to node [above] {$B$} (endB) ;
			      \draw [-] ($(state.north east)!0.8!(state.south east)$) to node [above] {$A'$} (PA) ;
		      \end{tikzpicture} are free states, and \begin{tikzpicture}[baseline = ($(state.north east)!0.8!(state.south east)$)]
			      \node[shape = rounded rectangle, rounded rectangle right arc = none, minimum height=40pt,  draw] (state) {$\mu$};
			      \node[left=2pt of state]  {$d_A$};
			      \coordinate[right=72pt of $(state.north east)!0.2!(state.south east)$] (endB) {};
			      \coordinate[below= 24pt of $(state.north east)!0.8!(state.south east)$] (PAP) {};
			      \coordinate[left = 96pt of PAP] (PAPstart){};
			      \draw [-] ($(state.north east)!0.2!(state.south east)$) to node [above] {$B$} (endB) ;
			      \draw [-] ($(state.north east)!0.8!(state.south east)$) to [in = 0, out = 0, distance = 24 pt] (PAP);
			      \draw [-] (PAP) to node [below] {$A$} (PAPstart);
			      \node[shape = rectangle, minimum height=70pt, minimum width= 100pt, red, dashed, draw] at ($(state.north)!0.8!(state.south)$){};
		      \end{tikzpicture} is a quantum channel, then $d_A$\begin{tikzpicture}[baseline = (PA)]
			      \node[shape = rounded rectangle, rounded rectangle right arc = none, minimum height=40pt,  draw] (state) {$\mu$};
			      \coordinate[right=of $(state.north east)!0.2!(state.south east)$] (endB) {};
			      \coordinate[right=of $(state.north east)!0.8!(state.south east)$]  (PA) {};
			      \coordinate[below= 24pt of PA] (PAP) {};
			      \coordinate[left = of PAP] (PAPstart){};
			      \coordinate[right=24ptof endB] (endB2);
			      \node[shape = rounded rectangle, rounded rectangle right arc = none, minimum height = 20, draw] (endA) at (state |- PAPstart) {$\rho$};
			      \draw [-] ($(state.north east)!0.2!(state.south east)$) to node [above] {$B$} (endB2) ;
			      \draw [-] ($(state.north east)!0.8!(state.south east)$) to node [above] {$A'$} (PA) ;
			      \draw [-] (PA) to [in = 0, out = 0, distance = 24 pt] (PAP);
			      \draw [-] (PAP) to node [below] {$A$} (endA);
		      \end{tikzpicture} is a free state.
	\end{enumerate}
\end{theorem}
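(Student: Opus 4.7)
The plan is to establish both directions of Theorem~\ref{th:cdrt-diag} by mirroring the algebraic proof of Theorem~\ref{th:cdrt}, but carrying out each step diagrammatically so that the snake identities in Eq.~\eqref{eq:snake} and the Choi-product decompositions in Eqs.~\eqref{eq:tchoi} and~\eqref{eq:swap-alg} do the bookkeeping automatically. For necessity, I would invoke the CDRT definition~\eqref{eq:cdtr} directly: since the identity channel is free in every resource theory, its renormalized Choi matrix $\frac{1}{d_A}\Phi_{AA'}$ must be free, giving condition~\ref{enum:choi-diag}; and if $\mu_{B\sep A'}$ is a free state and also a renormalized Choi matrix, then the associated channel is a CD operation, hence free, so its action on the free state $\rho_A$ —which by the diagrammatic definition of the inverse Choi map equals $d_A\,\mu_{B\sep A'}*\rho_A$— is free by the golden rule, giving condition~\ref{enum:seq-diag}.

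For sufficiency, I would verify the five resource-theoretic properties of Section~\ref{sec:prelim} in order. Identity, discarding, and swap are immediate: the identity channel's Choi state is already assumed free; the discarding channel's renormalized Choi matrix $\frac{1}{d_A}\id_A$ is obtained by attaching a discard effect to one leg of $\frac{1}{d_A}\Phi_{AA'}$ and invoking closure of the free states under partial tracing; and the swap channel's Choi matrix is obtained from the bipartite Choi decomposition of Eq.~\eqref{eq:tchoi} together with closure under system swapping. For parallel composition of free channels $\mathcal{M}_{A\to B}$ and $\mathcal{N}_{C\to D}$ with free renormalized Choi matrices $\mu$ and $\nu$, the swap tensor product $\mu \boxtimes \nu$ is free by closure under tensor product and swapping, and Eq.~\eqref{eq:swap-alg} identifies it as the renormalized Choi matrix of $\mathcal{M}\otimes\mathcal{N}$, so the parallel composition is a CD operation.

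The main obstacle, as in the algebraic proof, is sequential composition $\mathcal{N}_{B\to C}\circ\mathcal{M}_{A\to B}$ of free channels with free Choi matrices $\nu_{C\sep B'}$ and $\mu_{B\sep A'}$. Condition~\ref{enum:seq-diag} only permits a link product where the input leg of the bipartite free state matches the whole system of the other free state, yet here $\nu$ has input $B'$ while $\mu$'s output carries the dangling wire $A'$. I would resolve this by augmenting $\nu$ diagrammatically: define $\tilde\nu_{CA'\sep B'A}$ as the swap tensor product of $\nu$ with the Choi state on $A'A$, which is free by condition~\ref{enum:choi-diag} together with closure under tensor product and swapping, and which by Eq.~\eqref{eq:swap-alg} is the renormalized Choi matrix of $\mathcal{N}_{B\to C}\otimes \mathcal{I}_{A'}$. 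Now the input of $\tilde\nu$ matches the combined output of $\mu$, condition~\ref{enum:seq-diag} applies, and $d_Bd_A\,\tilde\nu * \mu$ is a free state. Yanking the inserted snake using Eq.~\eqref{eq:snake} then identifies this diagram with the renormalized Choi matrix of $\mathcal{N}\circ\mathcal{M}$, so the composition is a CD operation and hence free.

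The hardest step is this last one: one has to recognize that condition~\ref{enum:seq-diag} is too narrow to apply naively, devise the auxiliary $\tilde\nu$ that restores the dangling wire, and then verify diagrammatically that the snake identities collapse the resulting link product back to the desired Choi matrix. Once this step is carried out, all the remaining properties follow essentially by reading off the diagrams of Sections on the Choi isomorphism, link product, and swap tensor product.
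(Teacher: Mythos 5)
Your proposal is correct and follows essentially the same route as the paper's own diagrammatic proof: necessity from the CDRT definition applied to the identity channel and to the CD operation of a free renormalized Choi matrix, and sufficiency by checking the five axioms, with identity/discard/swap from condition~\ref{enum:choi-diag} plus closure of the free states, parallel composition via the swap tensor product, and sequential composition handled exactly as in the paper by forming $\tilde\nu_{CA'\sep B'A}$ (the swap tensor product of $\nu$ with the renormalized Choi state on $A'A$), applying condition~\ref{enum:seq-diag} to $d_Ad_B\,\tilde\nu * \mu$, and collapsing the result to the Choi matrix of $\mathcal{N}\circ\mathcal{M}$ with the snake identities.
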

\begin{proof}
	We prove first the necessary conditions. We assume that the free states and the free operations form a well-defined Choi-defined resource theory.
	\begin{enumerate}
		\item \begin{tikzpicture}[baseline= (PA)]
			      \coordinate (PA) {};
			      \coordinate[right = of PA] (A);
			      \draw [-] (PA) to node [above] {$A$} (A);
		      \end{tikzpicture}  is a free operation in every resource theory. Therefore, in a CDRT, $\frac{1}{d_A}$ \begin{tikzpicture}[baseline = ($(PA)!0.5!(PA')$)]
			      \coordinate (PA) {};
			      \coordinate[below=24 pt of PA] (PA');
			      \coordinate[right = of PA] (A);
			      \coordinate[right = of PA'](A');
			      \draw [-] (PA) to [in = 180, out =180, distance = 24 pt] (PA');
			      \draw [-] (PA) to node [above] {$A$} (A);
			      \draw [-] (PA') to node [below] {$A'$} (A');
		      \end{tikzpicture} is free as well, see Eq.~\eqref{eq:cdtr}.
		\item If \begin{tikzpicture}[baseline=(state.base)]
			      \node[shape = rounded rectangle, rounded rectangle right arc = none, minimum height = 20, draw] (state) {$\rho$};
			      \coordinate[right=of state] (end);
			      \draw [-] (state) to node [above] {$A$} (end) ;
		      \end{tikzpicture} and \begin{tikzpicture}[baseline = (state.base)]
			      \node[shape = rounded rectangle, rounded rectangle right arc = none, minimum height=40pt, draw] (state) {$\mu$};
			      \coordinate[right=of $(state.north east)!0.2!(state.south east)$] (endB) {};
			      \coordinate[right=of $(state.north east)!0.8!(state.south east)$]  (PA) {};
			      \draw [-] ($(state.north east)!0.2!(state.south east)$) to node [above] {$B$} (endB) ;
			      \draw [-] ($(state.north east)!0.8!(state.south east)$) to node [above] {$A'$} (PA) ;
		      \end{tikzpicture} are free states, then \begin{tikzpicture}[baseline = ($(state.north east)!0.8!(state.south east)$)]
			      \node[shape = rounded rectangle, rounded rectangle right arc = none, minimum height=40pt,  draw] (state) {$\mu$};
			      \node[left=2pt of state]  {$d_A$};
			      \coordinate[right=72pt of $(state.north east)!0.2!(state.south east)$] (endB) {};
			      \coordinate[below= 24pt of $(state.north east)!0.8!(state.south east)$] (PAP) {};
			      \coordinate[left = 96pt of PAP] (PAPstart){};
			      \draw [-] ($(state.north east)!0.2!(state.south east)$) to node [above] {$B$} (endB) ;
			      \draw [-] ($(state.north east)!0.8!(state.south east)$) to [in = 0, out = 0, distance = 24 pt] (PAP);
			      \draw [-] (PAP) to node [below] {$A$} (PAPstart);
			      \node[shape = rectangle, minimum height=70pt, minimum width= 100pt, red, dashed, draw] at ($(state.north)!0.8!(state.south)$){};
		      \end{tikzpicture} is a free quantum channel (from Eq.~\eqref{eq:cdtr}), and \begin{tikzpicture}[baseline = ($(state.north east)!0.8!(state.south east)$)]
			      \node[shape = rounded rectangle, rounded rectangle right arc = none, minimum height=40pt,  draw] (state) {$\mu$};
			      \node[left=2pt of state]  {$d_A$};
			      \coordinate[right=72pt of $(state.north east)!0.2!(state.south east)$] (endB) {};
			      \coordinate[below= 24pt of $(state.north east)!0.8!(state.south east)$] (PAP) {};
			      \coordinate[left = 96pt of PAP] (PAPstart){};
			      \node[shape = rounded rectangle, rounded rectangle right arc = none, minimum height = 20, anchor=east, draw] (state2) at (PAPstart){$\rho$};
			      \draw [-] ($(state.north east)!0.2!(state.south east)$) to node [above] {$B$} (endB) ;
			      \draw [-] ($(state.north east)!0.8!(state.south east)$) to [in = 0, out = 0, distance = 24 pt] (PAP);
			      \draw [-] (PAP) to node [below] {$A$} (PAPstart);
			      \node[shape = rectangle, minimum height=70pt, minimum width= 100pt, red, dashed, draw] at ($(state.north)!0.8!(state.south)$){};
		      \end{tikzpicture} is a free state (it is the output of a free channel applied to a free state).
	\end{enumerate}
	We prove now that conditions~\ref{enum:choi-diag} and~\ref{enum:seq-diag} are sufficient to have a well-defined resource theory when the free operations are the Choi-defined operations. Recall that we assume that the set of free states under consideration is compatible with a minimal resource theory, i.e., closed under tensor product, partial tracing, and system swapping. Now, we demonstrate that the five conditions for a resource theory listed in Section~\ref{sec:prelim} are satisfied.
	\begin{enumerate}
		\item The identity channel is free. Indeed, since $\frac{1}{d_A}$ \begin{tikzpicture}[baseline = ($(PA)!0.5!(PA')$)]
			      \coordinate (PA) {};
			      \coordinate[below=24 pt of PA] (PA');
			      \coordinate[right = of PA] (A);
			      \coordinate[right = of PA'](A');
			      \draw [-] (PA) to [in = 180, out =180, distance = 24 pt] (PA');
			      \draw [-] (PA) to node [above] {$A$} (A);
			      \draw [-] (PA') to node [below] {$A'$} (A');
		      \end{tikzpicture} is free for all $A$, then \begin{tikzpicture}[baseline=(PA')]
			      \coordinate (PA) {};
			      \coordinate[below=24 pt of PA] (PA');
			      \coordinate[below=24 pt of PA'] (PA'');
			      \coordinate[right = of PA] (A);
			      \coordinate[right = of PA'](A');
			      \coordinate[right = of PA''](A'');
			      \draw [-] (PA) to [in = 180, out =180, distance = 24 pt] (PA');
			      \draw [-] (A') to [in = 0, out =0, distance = 24 pt] (A'');
			      \draw [-] (PA) to node [above] {$A$} (A);
			      \draw [-] (PA') to node [above] {$A'$} (A');
			      \draw [-] (PA'') to node [below] {$A$} (A'');
		      \end{tikzpicture} $=$ \begin{tikzpicture}[baseline= (PA)]
			      \coordinate (PA) {};
			      \coordinate[right = of PA] (A);
			      \draw [-] (PA) to node [above] {$A$} (A);
		      \end{tikzpicture} is a free operation (Eq.~\eqref{eq:cdo} and Eq.~\eqref{eq:snake}).
		\item The swap channel is free. Indeed, from condition~\ref{enum:choi-diag}, one has that $\frac{1}{d_Ad_B}$\begin{tikzpicture}[baseline=($(A2)!0.5!(A3)$)]
			      \coordinate (A1);
			      \coordinate [below=24 pt of A1](A2);
			      \coordinate [below=24 pt of A2](A3);
			      \coordinate [below=24 pt of A3](A4);
			      \coordinate [right=24 pt of A1](B1);
			      \coordinate [right=24 pt of A2](B2);
			      \coordinate [right=24 pt of A3](B3);
			      \coordinate [right=24 pt of A4](B4);
			      \draw[-] (A1) to node [above] {$A$} (B1);
			      \draw[-] (A2) to node [above] {$B$} (B2);
			      \draw[-] (A3) to node [above] {$A'$} (B3);
			      \draw[-] (A4) to node [above] {$B'$} (B4);
			      \draw [-] (A1) to [in = 180, out =180, distance = 48 pt] (A3);
			      \draw [-] (A2) to [in = 180, out =180, distance = 48 pt] (A4);
		      \end{tikzpicture} is a free state. Since the set of free states is closed under system swapping, one obtains that $\frac{1}{d_Ad_B}$\begin{tikzpicture}[baseline=($(A2)!0.5!(A3)$)]
			      \coordinate (A1);
			      \coordinate [below=24 pt of A1](A2);
			      \coordinate [below=24 pt of A2](A3);
			      \coordinate [below=24 pt of A3](A4);
			      \coordinate [right=48 pt of A1](B1);
			      \coordinate [right=48 pt of A2](B2);
			      \coordinate [right=48 pt of A3](B3);
			      \coordinate [right=48 pt of A4](B4);
			      \draw[-] (A1) to [in = 180, out = 0, distance = 24 pt] node [below, near end] {$A$} (B2);
			      \draw[-] (A2) to [in = 180, out = 0, distance = 24 pt] node [above, near end] {$B$} (B1);
			      \draw[-] (A3) to node [above] {$A'$} (B3);
			      \draw[-] (A4) to node [above] {$B'$} (B4);
			      \draw [-] (A1) to [in = 180, out =180, distance = 48 pt] (A3);
			      \draw [-] (A2) to [in = 180, out =180, distance = 48 pt] (A4);
		      \end{tikzpicture} is free too. From Eq.~\eqref{eq:cdo} and Eq.~\eqref{eq:snake}, it immediately follows that the CD operation associated with this state is \begin{tikzpicture}[baseline=($(PA')!0.5!(PB)$)]
			      \node (PA') {$A$};
			      \node [below= 12pt of PA'](PB) {$B$};
			      \node[right =48pt of PA'] (endB) {$B$};
			      \node[right = 48ptof PB](endA) {$A$};
			      \draw [-] (PA') to [in = 180, out = 0, distance = 24 pt] (endA);
			      \draw [-] (PB) to [in = 180, out = 0, distance = 24 pt] (endB);
		      \end{tikzpicture}.
		\item Discarding a system is free. Once again, $\frac{1}{d_A}$ \begin{tikzpicture}[baseline = ($(PA)!0.5!(PA')$)]
			      \coordinate (PA) {};
			      \coordinate[below=24 pt of PA] (PA');
			      \coordinate[right = of PA] (A);
			      \coordinate[right = of PA'](A');
			      \draw [-] (PA) to [in = 180, out =180, distance = 24 pt] (PA');
			      \draw [-] (PA) to node [above] {$A$} (A);
			      \draw [-] (PA') to node [below] {$A'$} (A');
		      \end{tikzpicture} is free for all $A$  and since the set of free operations is closed under partial tracing, then $\frac{1}{d_A}$ \begin{tikzpicture}[baseline = ($(PA)!0.5!(PA')$)]
			      \coordinate (PA) {};
			      \coordinate[below=24 pt of PA] (PA');
			      \coordinate[right = of PA] (A);
			      \coordinate[right = of PA'](A');
			      \draw [-] (PA) to [in = 180, out =180, distance = 24 pt] (PA');
			      \draw [-] (PA) to node [above] {$A$} (A);
			      \draw [-] (PA') to node [below] {$A'$} (A');
			      \node[shape = rounded rectangle, rounded rectangle left arc = none, minimum height = 20, anchor= west, draw] (endA) at (A) {$\id$};
		      \end{tikzpicture} is free as well. As above, from Eq.~\eqref{eq:cdo} and Eq.~\eqref{eq:snake}, one obtains that the CD operation associated with this state is \begin{tikzpicture}[baseline=(state.base)]
			      \coordinate (start);
			      \node[right=of start,shape = rounded rectangle, rounded rectangle left arc = none, minimum height = 20, draw] (state) {$\id$};
			      \draw [-] (start) to node [above] {$A$} (state) ;
		      \end{tikzpicture}, that is, the discarding channel.
		\item Sequential composition of free channels is free. Let \begin{tikzpicture}[baseline= (channel.base)]
			      \node[shape = rectangle, minimum height=20pt, draw] (channel) {$\mathcal{M}$};
			      \coordinate[left=of channel] (A);
			      \coordinate[right=of channel] (B) {};
			      \draw [-] (channel) to node [above] {$A$} (A) ;
			      \draw [-] (channel) to node [above] {$B$} (B) ;
		      \end{tikzpicture} and \begin{tikzpicture}[baseline= (channel.base)]
			      \node[shape = rectangle, minimum height=20pt, draw] (channel) {$\mathcal{N}$};
			      \coordinate[left=of channel] (A);
			      \coordinate[right=of channel] (B) {};
			      \draw [-] (channel) to node [above] {$B$} (A) ;
			      \draw [-] (channel) to node [above] {$C$} (B) ;
		      \end{tikzpicture} be free channels. Since the free operations are the CD operations, the states $\frac{1}{d_A}$\begin{tikzpicture}[baseline = ($(A)!0.5!(A')$)]
			      \node[shape = rectangle, minimum height=20pt, draw] (channel) {$\mathcal{M}$};
			      \coordinate[left=of channel] (A);
			      \coordinate[right=of channel] (B) {};
			      \draw [-] (channel) to node [above] {$A$} (A) ;
			      \draw [-] (channel) to node [above] {$B$} (B) ;
			      \coordinate[below=24 pt of A] (A');
			      \coordinate[below=24 pt of B] (PA');
			      \draw [-] (A') to node [below] {$A'$} (PA');
			      \draw [-] (A) to [in = 180, out =180, distance = 24 pt] (A');
		      \end{tikzpicture} and $\frac{1}{d_B}$\begin{tikzpicture}[baseline = ($(A)!0.5!(A')$)]
			      \node[shape = rectangle, minimum height=20pt, draw] (channel) {$\mathcal{N}$};
			      \coordinate[left=of channel] (A);
			      \coordinate[right=of channel] (B) {};
			      \draw [-] (channel) to node [above] {$B$} (A) ;
			      \draw [-] (channel) to node [above] {$C$} (B) ;
			      \coordinate[below=24 pt of A] (A');
			      \coordinate[below=24 pt of B] (PA');
			      \draw [-] (A') to node [below] {$B'$} (PA');
			      \draw [-] (A) to [in = 180, out =180, distance = 24 pt] (A');
		      \end{tikzpicture} are free. From condition~\ref{enum:choi-diag} and the closure of the set of free states under tensor product and system swapping, it follows that the state $\frac{1}{d_Ad_B}$\begin{tikzpicture}[baseline = ($(PA')!0.5!(C1)$)]
			      \node[shape = rectangle, minimum height=20pt, draw] (channel) {$\mathcal{N}$};
			      \coordinate[left=of channel] (A);
			      \coordinate[right=of channel] (B) {};
			      \draw [-] (channel) to node [above] {$B$} (A) ;
			      \draw [-] (channel) to node [above] {$C$} (B) ;
			      \coordinate[below=24 pt of A] (A');
			      \coordinate[below=24 pt of B] (PA');
			      \draw [-] (A) to [in = 180, out =180, distance = 24 pt] (A');
			      \coordinate[below=24pt of A'] (B1);
			      \coordinate[below=24pt of B1] (B2);
			      \coordinate[below=24pt of PA'] (C1);
			      \coordinate[below=24pt of C1] (C2);
			      \draw [-] (A') to [in = 180, out = 0, distance = 24 pt] node [above, near end] {$B'$} (C1);
			      \draw[-] (B1) to [in = 180, out = 0, distance = 24 pt] node [above, near end] {$A'$} (PA');
			      \draw[-] (B2) to node[below] {$A$} (C2);
			      \draw [-] (B1) to [in = 180, out =180, distance = 24 pt] (B2);
		      \end{tikzpicture} is free as well. Condition~\ref{enum:seq-diag} implies that the state
		      \begin{equation}\label{eq:th-seq-comp}
			      d_Ad_B \frac{1}{d_Ad_B} \frac{1}{d_A}\begin{tikzpicture}[baseline = ($(C1)!0.5!(C2)$)]
				      \node[shape = rectangle, minimum height=20pt, draw] (channel) {$\mathcal{N}$};
				      \coordinate[left=of channel] (A);
				      \coordinate[right=of channel] (B) {};
				      \draw [-] (channel) to node [above] {$B$} (A) ;
				      \draw [-] (channel) to node [above] {$C$} (B) ;
				      \coordinate[below=24 pt of A] (A');
				      \coordinate[below=24 pt of B] (PA');
				      \draw [-] (A) to [in = 180, out =180, distance = 24 pt] (A');
				      \coordinate[below=24pt of A'] (B1);
				      \coordinate[below=24pt of B1] (B2);
				      \coordinate[below=24pt of B2] (B3);
				      \coordinate[below=24pt of B3] (B4);
				      \coordinate[below=24pt of PA'] (C1);
				      \coordinate[below=24pt of C1] (C2);
				      \coordinate[below=24pt of C2] (C3);
				      \coordinate[below=24pt of C3] (C4);
				      \draw [-] (A') to [in = 180, out = 0, distance = 24 pt] node [above, near end] {$B'$} (C1);
				      \draw[-] (B1) to [in = 180, out = 0, distance = 24 pt] node [above, near end] {$A'$} (PA');
				      \draw[-] (B2) to node[above] {$A$} (C2);
				      \draw [-] (B1) to [in = 180, out =180, distance = 24 pt] (B2);
				      \draw[-] (B4) to node [below] {$A'$} (C4);
				      \node[shape = rectangle, minimum height=20pt, draw] at ($(B3)!0.5!(C3)$) (channel2) {$\mathcal{M}$};
				      \draw[-] (B3) to node [above] {$A$} (channel2);
				      \draw[-] (channel2) to node [above] {$B$} (C3);
				      \draw [-] (B3) to [in = 180, out =180, distance = 24 pt] (B4);
				      \draw [-] (C1) to [in = 0, out =0, distance = 48 pt] (C3);
				      \draw [-] (C2) to [in = 0, out =0, distance = 48 pt] (C4);
			      \end{tikzpicture} = \frac{1}{d_A}\begin{tikzpicture}[baseline=($(A)!0.5!(A')$)]
				      \node[shape = rectangle, minimum height=20pt, draw] (channel) {$\mathcal{M}$};
				      \node[right= of channel, shape = rectangle, minimum height=20pt, draw] (channelN) {$\mathcal{N}$};
				      \coordinate[left=of channel] (A);
				      \node[right=of channelN] (C) {$C$};
				      \draw [-] (channel) to (A) ;
				      \draw [-] (channelN) to (C) ;
				      \draw [-] (channel) to (channelN) ;
				      \coordinate[below=24 pt of A] (A');
				      \node (C') at (C|-A') {$A'$};
				      \draw [-] (A') to (C') ;
				      \draw [-] (A) to [in = 180, out = 180, distance = 24 pt] (A');
			      \end{tikzpicture}
		      \end{equation} is free. The equality above follows from multiple uses of Eq.~\eqref{eq:snake}. The CD operation associated with the state in Eq.~\eqref{eq:th-seq-comp} is \begin{tikzpicture}[baseline= (channel.base)]
			      \node[shape = rectangle, minimum height=20pt, draw] (channel) {$\mathcal{M}$};
			      \node[shape = rectangle, minimum height=20pt, right= of channel, draw] (channel2) {$\mathcal{N}$};
			      \coordinate[left=of channel] (A);
			      \coordinate[right=of channel2] (B) {};
			      \draw [-] (channel) to node [above] {$A$} (A) ;
			      \draw[-] (channel) to node [above] {$B$} (channel2);
			      \draw [-] (channel2) to node [above] {$C$} (B) ;
		      \end{tikzpicture}.
		\item Parallel composition of free channels is free. Let \begin{tikzpicture}[baseline= (channel.base)]
			      \node[shape = rectangle, minimum height=20pt, draw] (channel) {$\mathcal{M}$};
			      \coordinate[left=of channel] (A);
			      \coordinate[right=of channel] (B) {};
			      \draw [-] (channel) to node [above] {$A$} (A) ;
			      \draw [-] (channel) to node [above] {$B$} (B) ;
		      \end{tikzpicture} and \begin{tikzpicture}[baseline= (channel.base)]
			      \node[shape = rectangle, minimum height=20pt, draw] (channel) {$\mathcal{N}$};
			      \coordinate[left=of channel] (A);
			      \coordinate[right=of channel] (B) {};
			      \draw [-] (channel) to node [above] {$C$} (A) ;
			      \draw [-] (channel) to node [above] {$D$} (B) ;
		      \end{tikzpicture} be free channels. Since the free operations are the CD operations, the states $\frac{1}{d_A}$\begin{tikzpicture}[baseline = ($(A)!0.5!(A')$)]
			      \node[shape = rectangle, minimum height=20pt, draw] (channel) {$\mathcal{M}$};
			      \coordinate[left=of channel] (A);
			      \coordinate[right=of channel] (B) {};
			      \draw [-] (channel) to node [above] {$A$} (A) ;
			      \draw [-] (channel) to node [above] {$B$} (B) ;
			      \coordinate[below=24 pt of A] (A');
			      \coordinate[below=24 pt of B] (PA');
			      \draw [-] (A') to node [below] {$A'$} (PA');
			      \draw [-] (A) to [in = 180, out =180, distance = 24 pt] (A');
		      \end{tikzpicture} and $\frac{1}{d_C}$\begin{tikzpicture}[baseline = ($(A)!0.5!(A')$)]
			      \node[shape = rectangle, minimum height=20pt, draw] (channel) {$\mathcal{N}$};
			      \coordinate[left=of channel] (A);
			      \coordinate[right=of channel] (B) {};
			      \draw [-] (channel) to node [above] {$C$} (A) ;
			      \draw [-] (channel) to node [above] {$D$} (B) ;
			      \coordinate[below=24 pt of A] (A');
			      \coordinate[below=24 pt of B] (PA');
			      \draw [-] (A') to node [below] {$C'$} (PA');
			      \draw [-] (A) to [in = 180, out =180, distance = 24 pt] (A');
		      \end{tikzpicture} are free. Since the set of free states is closed under tensor product and system swapping, the state $\frac{1}{d_Ad_C}$ \begin{tikzpicture}[baseline = ($(A')!0.5!(B1)$)]
			      \node[shape = rectangle, minimum height=20pt, draw] (channel) {$\mathcal{M}$};
			      \coordinate[left=of channel] (A);
			      \coordinate[right=of channel] (B) {};
			      \draw [-] (channel) to node [above] {$A$} (A) ;
			      \draw [-] (channel) to node [above] {$B$} (B) ;
			      \coordinate[below=24 pt of A] (A');
			      \coordinate[below=24 pt of B] (PA');
			      \draw [-] (A) to [in = 180, out =180, distance = 24 pt] (A');
			      \coordinate[below=24 pt of A'] (B1);
			      \coordinate[below=24 pt of PA'] (C1);
			      \coordinate[below=24 pt of B1] (B2);
			      \coordinate[below=24 pt of C1] (C2);
			      \draw [-] (B2) to node [below] {$C'$} (C2);
			      \node[shape = rectangle, minimum height=20pt, draw] (channel2) at ($(B1)!0.5!(C1)$) {$\mathcal{N}$};
			      \coordinate (partial) at (channel2.east|-A');
			      \draw [-] (A') to (partial);
			      \draw [-] (channel2.east) to [in = 180, out =0, distance = 24 pt] node [above, near end] {$D$} (PA');
			      \draw [-] (partial) to [in = 180, out =0, distance = 24 pt] node [below, near end] {$A'$}(C1);
			      \draw [-] (B1) to (channel2);
			      \draw [-] (B1) to [in = 180, out =180, distance = 24 pt] (B2);
		      \end{tikzpicture} is free as well. Eq.~\eqref{eq:swap-prod} implies that the CD operation assocaited with this state is \begin{tikzpicture}[baseline = ($(A)!0.5!(C)$)]
			      \node[shape = rectangle, minimum height=20pt, draw] (channel) {$\mathcal{M}$};
			      \coordinate[left=of channel] (A);
			      \coordinate[right=of channel] (B) {};
			      \draw [-] (channel) to node [above] {$A$} (A) ;
			      \draw [-] (channel) to node [above] {$B$} (B) ;
			      \node[below = 6pt of channel, shape = rectangle, minimum height=20pt, draw] (channel2) {$\mathcal{N}$};
			      \coordinate[left=of channel2] (C);
			      \coordinate[right=of channel2] (D) {};
			      \draw [-] (channel2) to node [above] {$C$} (C) ;
			      \draw [-] (channel2) to node [above] {$D$} (D) ;
		      \end{tikzpicture}.
	\end{enumerate}
\end{proof}

\end{document}